\begin{document}

\title{Polyregular functions}
\author{Miko{\l}aj Boja\'nczyk\footnote{
	Supported by the European Research Council under the European Union’s Horizon 2020 research and innovation programme (ERC consolidator grant LIPA, agreement no. 683080). 
}}
\maketitle   
\begin{abstract}  This paper is about  certain string-to-string functions,  called the \emph{polyregular functions}. These are like the regular string-to-string functions, except that they can have polynomial (and not just linear) growth.   The class has four equivalent definitions: 
\begin{enumerate}
	\item deterministic two-way transducers with pebbles;
	\item the smallest class of string-to-string functions that is closed under composition, contains all sequential functions as well as:
	\begin{align*}
\underbrace{1|23|456|78 \quad \mapsto \quad 1|32|654|87}_{\text{iterated reverse}} \qquad \underbrace{1234 \quad \mapsto \quad \underline 1234 1\underline 234 12\underline 34 123 \underline4}_{\text{squaring}};
\end{align*}
	\item a fragment of the $\lambda$-calculus, which has a list type constructor and limited forms of iteration such as $\mapterm$ but not {\tt fold};
	\item an imperative programming language, which has {\tt for} loops that  range over input positions.
\end{enumerate}
The first definition comes from~\cite{milo2003typechecking}, while the remaining three are new to the author's best knowledge. 
The class of polyregular functions  contains  known classes of string-to-string transducers, such as the sequential, rational, or regular ones, but goes beyond them because of super-linear growth. Polyregular functions have  good algorithmic properties, such as:
\begin{enumerate}
	\item the output can be computed in linear time (in terms of combined input and output size);
	\item the inverse image of a regular word language is (effectively) regular.
\end{enumerate}
We also identify a fragment of polyregular functions, called the \emph{first-order polyregular functions}, which has additional good properties, e.g. the output can be computed by an \aczero circuit.
	
\end{abstract}

\pagebreak
\tableofcontents
\pagebreak

\setcounter{section}{-1}

\section{Introduction}
\bigskip
\begin{quote}
    \emph{The author (along with many other people) has come recently to the conclusion that the functions computed by the various machines are more important---or at least more basic---than the sets accepted by these devices.} \hfill Dana Scott~\cite{scott1967some}\footnote{I got this quote from Wolfgang Thomas, who got it from Boris Trakhtenbrot~\cite[p.~14]{trakhtenbrot2008logic}.}    
\end{quote}
\bigskip

This paper is about string-to-string functions that are  defined by finite-state devices. There are three main classes of string-to-string functions\footnote{For more on sequential, rational and regular string-to-string functions, including additional references, see the book~\cite{sakarovitch2009elements}, the survey paper~\cite{filiot2016transducers}, or~\cite[Sections 12, 13]{toolbox}.}:
\begin{enumerate}
    \item {\bf Sequential functions.} The sequential functions  are the ones recognised by deterministic finite automata with transitions  labelled by output words (all states should be accepting if we care about total functions). Here is an example, which recognises the function that doubles every $a$, and appends $\#$ in case the input has odd length:
\mypic{38}
    \item {\bf Rational functions.} Rational functions are defined  like sequential functions, except that the underlying automaton is no longer required to be deterministic, but only \emph{unambiguous}, which means that for every input word it has at most one accepting run (and exactly one accepting run is we care about total functions).  Here is an example automaton (with two connected components), which recognises the function that doubles every $a$, and prepends $\#$ in case the input has odd length:
  \mypic{37}
  Apart from the above description, which originates from~\cite[Chapter IX]{eilenberg1974automata}, there are other equivalent descriptions: regular expressions which use pairs of strings~\cite[Section IV.1]{sakarovitch2009elements}, Eilenberg bimachines~\cite[Chapter XI.7]{eilenberg1974automata}, and  unary queries of \mso with associated outputs (see Definition~\ref{def:rational-function} later in the paper). 
    \item {\bf Regular functions.} A \emph{regular string-to-string function} is defined to be one that is recognised by a deterministic two-way automaton with output~\cite{aho1970characterization}. Equivalent models include string-to-string \mso transductions~\cite{engelfriet2001mso}, streaming string-transducers~\cite{alur2010expressiveness}, and various formalisms that use combinators~\cite{alur2014regular,DBLP:conf/lics/DaveGK18,DBLP:conf/lics/BojanczykDK18}. 
\end{enumerate}
\begin{figure}
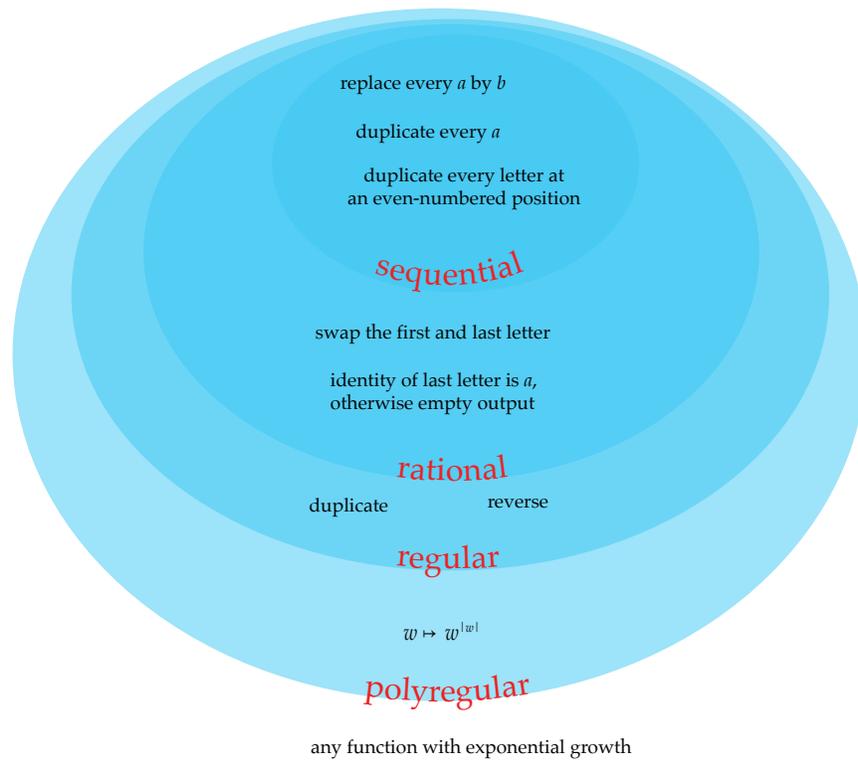
 
    \mypic{36}    
    \caption{\label{sec:three-classes} The sequential, rational and regular string-to-string functions.}
\end{figure}

Not only are the three above classes  robust (i.e.~they have multiple equivalent definitions, using machines, logic, or expressions), but the associated models  have the good decidability  properties  typical for  finite automata, as illustrated by the following results.  One can minimise automata for sequential~\cite[Main Theorem]{choffrut1979generalization}, see also~\cite[Section 3]{choffrut2003minimizing}, likewise for  and rational functions, see~\cite[Section 4]{reutenauer1991minimization} and~\cite[Section 3.3]{filiot2016first}. One can decide if a rational function is already sequential~\cite[Corollaire 3.5]{choffrut1977caracterisation} and one can decide if a regular function is already rational, see~\cite[Theorem 1]{DBLP:conf/lics/FiliotGRS13} and~\cite[Theorem 4]{baschenis2015one}. Equivalence is decidable for sequential and rational functions (which follows from minimisation), and also for regular functions see~\cite[Theorem 1]{gurari1982equivalence} and~\cite[Theorem 12]{alur2011streaming}.  

The contribution of this paper is a proposal for  fourth class in the list:
\begin{enumerate}
    \item[4.]  {\bf Polyregular functions.} These are the string-to-string functions recognised by pebble automata, which were  introduced by Globerman and Harel as acceptors~\cite{Globerman:1996he} and by Milo, Suciu and Vianu as transducers~\cite{milo2003typechecking}. The class has also three other equivalent descriptions; the models in these equivalent descriptions and their equivalence are  the contribution of this paper. 
\end{enumerate}
One of the distinguishing properties of polyregular functions, and also the reason for the ``poly'' in the name, is that the output size is polynomial in the input size, as opposed to the linear bounds that hold for the sequential, rational and regular functions. 

\subsection{An example: all prefixes in reverse order}
\label{sec:prefixes-eample}
We begin by illustrating the equivalent definitions of polyregular functions (pebble transducers as well as the three new models that are equivalent to them) on a running example. Precise definitions are given in Part I of the paper, the equivalence of the definitions is proved in Part II, while Part III discusses algorithmic questions. 

The running example  is  the function
\begin{align*}
    f : \set{a,b}^* \to \set{a,b,|}^*
\end{align*}
which maps a word to the reverses of all prefixes, separated by $|$, as in  this example
\begin{align*}
    babaaa \quad \mapsto \quad  b|ab|bab|abab|aabab|aaabab| 
\end{align*} 
The size of the output is quadratic in the size of the input, which means that $f$  is not recognised by a deterministic two-way automaton with output. In other words, $f$ is not regular (and therefore it is also neither rational nor sequential). The function $f$  is, however polyregular, as demonstrated by the following descriptions, which correspond to the four equivalent models discussed in this paper.
\begin{enumerate}
    \item {\bf Polyregular Functions, see Section~\ref{sec:polyregular}}. The first definition   is that the polyregular functions are the compositions  of certain atomic operations. For the running example $f$, the composition uses four steps, illustrated below for the input word 
    \begin{align*}
        babaaa
    \end{align*}
    \begin{enumerate}
        \item Append a separator symbol $|$  giving this result:
        \begin{align*}
            babaaa|
        \end{align*}
        \item Take the result of the first step, and for each position $x$, produce a copy of the word, with the position $x$ underlined, giving this result:
        \begin{align*}
            \underline babaaa|b\underline abaaa|ba\underline baaa|bab\underline aaa|baba\underline aa|babaa\underline a|babaaa\underline |
        \end{align*}
        \item Remove the last block between separators $|$. For the remaining blocks, keep only the positions before and including the underlined position, and finally remove the underlines, yielding this result:
        \begin{align*}
            b|ba|bab|baba|babaa|babaaa|
        \end{align*} 
        \item Reverse each word between  separators $|$,  yielding this result 
        \begin{align*}
            b|ab|bab|abab|aabab|aaabab|
        \end{align*} 
       
    \end{enumerate}
    The operations in steps (a) and (c) are rational functions, while the operations in steps (b) and (d) are not (we call these operation \emph{squaring} and \emph{iterated reverse}, respectively). The class of polyregular functions is defined to be the closure under composition of the  rational functions, the squaring function, and iterated reverse.
    \item {\bf Pebble transducers, see  Section~\ref{sec:pebble-transducers}}. The second description of the polyregular functions uses pebble transducers. The idea is to have an automaton which runs on the input word, and uses pebbles to mark positions. The pebbles are organised in a stack, of height fixed by the syntax of the automaton, and only the topmost pebble in the stack can be moved. To recognise the function $f$ from the running example, we use a  pebble automaton with two pebbles: a main pebble (first pebble on the stack), and a secondary pebble (second pebble on the stack). The main pebble runs through all input positions in left-to-right order. For each position, the secondary pebble is used to copy the input word from main pebble down to the beginning of the word. Here is a picture of the run of this automaton.
    \mypic{40}  
    
    \item {\bf For-transducers, see  Section~\ref{sec:for-programs}}. The third description uses programs which have variables that range over positions in the input word, as given in the following example.
    \begin{verbatim}
        for x in first..last 
          for y in last..first  
            if x =< y and a(x) then output a
            if x =< y and b(x) then output b
          output |
    \end{verbatim}
    The input positions can be compared for order, and their labels in the input word can be tested. The output word is produced by instructions of the form {\tt output a}. The programming language also allows Boolean variables, which are useful to simulate the control of a finite state automaton.  The programs are easily seen to be a special case of pebble automata. The opposite inclusion is also true, but harder to show, because the loops in a for-transducer can only move first-to-last or last-to-first, while the head in a pebble automaton can alternate between left and right moves an unbounded number of times.
    
    \item {\bf Polynomial List Functions, see  Section~\ref{sec:list-programs}}. The final description of the polyregular functions uses a functional programming language. Define $\splitterm$ to be the function which inputs a list and outputs all possible ways of splitting it into two parts, as illustrated on the following example
\begin{eqnarray*}
	& [1,2,3,4] \\ &  \downmapsto \\ &[([1,2,3,4],[]),([1,2,3],[4]), ([1,2],[3,4]), ([1],[2,3,4]), ([],[1,2,3,4])]	
\end{eqnarray*} 
Let {\tt reverse}  be the function which reverses a list, and finally let {\tt fst} be the function which projects a pair to its first coordinate. The function $f$ from the running example is then defined using the  following Haskell code
    \begin{verbatim}
        \x -> (map (\y -> reverse (fst y)) (split x)).
    \end{verbatim}
    Given  an input list {\tt x}, the above program first applies the $\splitterm$ operation. In the resulting list of pairs of lists, one keeps only the reverse of the first coordinate of every pair.
    
    The general  idea behind the fourth equivalent description of the polyregular functions is to use functional programs without recursion, which are equipped with certain atomic string manipulators, like $\reverseterm$, and some higher-order combinators, like $\mapterm$.

\end{enumerate}

\paragraph*{Structure of the paper.} The paper has three parts.

Part I  introduces the four equivalent models which describe the polyregular functions. 

Part II proves that the models described in Part I are equivalent. The main insights are that: (a) a pebble transducer can  implement $\beta$-reduction and therefore evaluate $\lambda$-terms; and (b) results from semigroup theory such as the Krohn-Rhodes Theorem and Simon's Factorisation Forest Theorem can be used to decompose the computation of a pebble transducer in a way that can be then simulated by very limited string-to-string transformations.

Part III discusses algorithms for evaluating polyregular functions. The first result is that polyregular functions can be evaluated in linear time, in terms of the combined input and output size. This result uses constant delay enumeration algorithms for first-order queries on strings~\cite{Kazana:2013jq}. The second result is that first-order definable polyregular functions can be computed by \aczero circuits, as long as the circuits can use an $\varepsilon$ letter which is ignored when producing the output string.

\paragraph*{Future work.} We are missing a logical characterisation of the polyregular functions, and a streaming (one way) machine model. Such characterisations are left for future work. A natural candidate for the logical characterisation is string-to-string \mso interpretations, i.e.~an extension of \mso transductions~\cite[Section 7]{courcelle2012graph} where a tuple of input positions can be used to represent a single output position. There are also several algorithmic questions left for future work, including: (a) is equivalence decidable for polyregular functions?; and (b) can one decide if a polyregular function is already regular?

\paragraph*{Acknowledgements.} I would like to thank the following people for many helpful discussions: Jacek Chrz{a}szcz, Amina Doumane, Sandra Kiefer, Bartek Klin, Anca Muscholl, Nathan Lhote, Aleksy Schubert, Helmut Seidl, Mahsa Shirmohammadi, Pawe{\l} Urzyczyn,  Igor Walukiewicz,  Daria Walukiewicz, James Worrell

\pagebreak 
\part{Description of the models}
In this part, we introduce the four models which describe polyregular functions. Their equivalence will be proved in Part II.

\section{Polyregular functions}
\label{sec:polyregular}
The first definition of the  polyregular functions is that these are  finite compositions of atomic functions that are either: a sequential function, or two string operations called squaring and iterated reverse. The design objectives for the definition of polyregular functions are:
\begin{itemize}
	\item the class is closed under composition by definition;
	\item the atomic functions are as simple as possible.
\end{itemize}
The minimality of the atomic functions will make it easy to evaluate the polyregular functions, or to prove that the preimage of a regular language is always regular. The  minimality  will also make the formalism cumbersome to use, which is why the polyregular functions can be seen as a sort of assembly language, as opposed to more user-friendly languages  defined in Sections~\ref{sec:pebble-transducers},~\ref{sec:for-programs} and~\ref{sec:list-programs}.

We begin by recalling in more detail the definition of sequential functions, and introducing the squaring and iterated reverse operations.

\paragraph*{Sequential functions.} A sequential function is a string-to-string function that arises from a deterministic automaton with outputs on  transitions. The idea is that the automaton process the input word from left to right, and the output is produced during this run based only finite state control.  The syntax  is given in the following definition.

\begin{definition}[Sequential function]\label{def:sequential-function}
	The syntax of a  sequential function consists of:
	\begin{enumerate}
		\item input and output alphabets $\Sigma$ and $\Gamma$;
		\item a deterministic finite automaton with input alphabet $\Sigma$;
		\item for each transition in the automaton, an associated output in $\Gamma^*$;
		\item for each state in the automaton, an associated  end-of-input word in  $ \Gamma^*$.
	\end{enumerate}
	The semantics of a sequential is a function $\Sigma^* \to \Gamma^*$  defined as follows. Given an input word $w \in \Sigma^*$, one runs the underlying automaton. When executing a transition, the associated label defined in item 3 is produced. At the end of the run, the output is extended by the end-of-input word associated to the last state reached by the automaton. 
\end{definition}

Apart from sequential functions, the polyregular functions use also two string-to-string operations called squaring and iterated reverse, which are described below.

\paragraph{Squaring.} The squaring operation on strings is illustrated in the following example:
\begin{eqnarray*}
1234 \qquad \mapsto \qquad   \underline 1234    1\underline 234   12\underline 34   123\underline 4  
\end{eqnarray*}
For each position $x$ in the input word, we produce a copy of the input with $x$ underlined, and then we concatenate all these copies in left-to-right  order of the underlined positions.  If the input word has size $n$, then its   square  has length $n^2$, which explains the name of the operation. Formally speaking, squaring is a family of operations, with one squaring operation for every choice of  input alphabet. The output alphabet for the squaring operation is two copies of the input alphabet: the underlined and the non-underlined letters. 

\paragraph*{Iterated reverse.} The iterated reverse operation takes a string, which contains occurrences of a separator symbol, and reverses each block between consecutive separators, but keeps the order of the blocks as it was in the input word, as illustrated in the following example:
\begin{align*}
  123|45|678|9 \qquad \mapsto \qquad 321|54|876|9
\end{align*}
More formally, iterated reverse is not a single operation, but a family of operations, with one iterated reverse operation for every choice of input alphabet and designated separator symbol.

\paragraph*{Polyregular functions.} We are now ready to give the first definition of the class of polyregular functions.
\begin{definition}[Polyregular functions] The class of \emph{polyregular functions} is the smallest class of string-to-string functions which is closed under composition of functions, and contains:
		\begin{enumerate}
		\item sequential functions; 
		\item squaring; 
		\item iterated reverse.
	\end{enumerate}
\end{definition}

The ``regular'' in the name polyregular refers to the fact that polyregular functions extend regular functions, i.e.~those that are recognised by two-way deterministic automata with output (this fact is most apparent with the definition from Section~\ref{sec:pebble-automata} that uses pebble automata).  The ``poly'' in the name polyregular stands for polynomial, because the output of a polyregular is polynomial (possibly super-linear, unlike  sequential  functions which are linear) in the size of the input.

\paragraph*{The first-order case.}
	 We pay particular attention to   the subclass of first-order definable languages, and  the corresponding functions. For readers unfamiliar with logic as a means of defining regular languages, a good place to start is~\cite{thomas1997languages}. A \emph{first-order definable} language $L \subseteq \Sigma^*$, see~\cite[Section 2.2]{thomas1997languages} is one that can be defined by a formula of first-order logic, which quantifies over positions in the input word, has a binary predicate $x<y$ for order and unary predicates $a(x),b(x),\ldots$ for testing labels of positions. 
	 For example, the formula
\begin{align*}
\forall x \ b(x) \ \Rightarrow \ \exists y\ x < y \land a(y)
\end{align*}
says that every position with label $b$ is followed (not necessarily in the successor position) by a position with label $a$. If the alphabet is $\set{a,b}$, then the formula defines the regular language $(a+b)^*a$.  Every first-order sentence defines a regular language of words, but not all regular languages can defined this way, because in general, set quantification of \mso is needed, see~\cite[Section 4.1]{thomas1997languages}. 

A theorem of McNaughton, Papert and Sch\"utzenberger\footnote{See~\cite[Section 6]{straubing-siglog}  for a more  in-depth  discussion of this result and its history.}, says that the  first-order languages can be characterized in terms of the automata that recognize them. Call a deterministic finite automaton \emph{counter-free} if its transition monoid is aperiodic, which means that for every input word $w \in \Sigma^*$ the following sequence of states is ultimately constant (i.e.~from some point on it has only one state appearing in the sequence):
	\begin{align*}
  qw, qw^2, qw^3,\ldots
\end{align*}
In other words, an automaton is counter-free if it does not have a pattern like this
\mypic{14}
The McNaughton, Papert and Sch\"utzenberger Theorem says  that 	
 an automaton is counter-free if and only if for every state $q$, the set of words which reach state $q$ is definable in first-order logic. 

\begin{definition}[First-order sequential and polyregular functions]
	A sequential function is called \emph{first-order sequential} if it is recognized by a sequential transducer where  the underlying automaton, i.e.~the automaton in item 2 of Definition~\ref{def:sequential-function}, is counter-free. The \emph{first-order polyregular functions} are  the special case of polyregular functions  where the sequential functions are required to be first-order sequential.
\end{definition}

\subsection{Equivalent definitions}
One of the building blocks in the class of polyregular functions is the class of sequential functions. This building block could be replaced by other types of functions, without affecting the expressive power of the class, because closure under function composition ensures  robustness of the model. We give below two alternatives for sequential functions as building blocks in the polyregular functions, one less expressive and one more expressive, and discuss why -- in the presence of closure under composition -- the alternatives lead  the class  polyregular functions.

\paragraph*{Krohn-Rhodes.} Although we claimed that the atomic functions in the definition of polyregular functions are minimal, this is not really the case, because sequential functions can be further decomposed. 
The Krohn-Rhodes Theorem, see~\cite[Corollary 4.1]{krohn1965algebraic} or~\cite[Appendix A]{straubing2012finite},  says that every first-order sequential function is equal to a composition of finitely many first-order sequential functions where the underlying automaton has two states. For general (not necessarily first-order) sequential functions, one also needs sequential transducers where the underlying automaton is a group, in the sense that each input letter induces a permutation on its states.  Therefore, one can replace sequential functions by the more basic building blocks from the Krohn-Rhodes Theorem, as stated in the following theorem.

\begin{theorem}\label{thm:krohn-rhodes}
	The class of polyregular functions is equal to the smallest class of string-functions which is closed under composition, and contains 
	\begin{enumerate}
		\item sequential functions recognized by:
\begin{enumerate}
	\item two state counter-free automata; or
	\item automata where every input letter acts as a permutation on the states. 
\end{enumerate}
		\item squaring; 
		\item iterated reverse.
	\end{enumerate}
The first-order polyregular functions are the special case where 1(b) is not used.
\end{theorem}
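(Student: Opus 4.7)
The inclusion from right to left is immediate: both two-state counter-free sequential functions and group sequential functions (items 1(a), 1(b)) are in particular sequential functions, so the smaller class is contained in the polyregular functions by definition. The work lies in the other inclusion.

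Since the polyregular functions are the closure under composition of sequential functions, squaring, and iterated reverse, and since squaring and iterated reverse appear in both classes, it is enough to show that every sequential function is a composition of functions of types 1(a) and 1(b). My plan is to split a sequential function $f$ with underlying deterministic automaton $\mathcal{A}$ into two successive stages. The first stage is the \emph{letter-to-letter state-tracking function}, which replaces each input letter $a$ by the pair $(q,a)$, where $q$ is the state of $\mathcal{A}$ reached after reading the prefix up to and including that position, and which also appends a pair $(q_{\text{final}}, \bot)$ at the end to encode the final state. This stage is itself a sequential function whose underlying automaton is (a trivial output extension of) $\mathcal{A}$. The second stage is the pure letter-to-word homomorphism that replaces each pair $(q,a)$ by the output of the corresponding transition, and replaces the final marker $(q_{\text{final}},\bot)$ by the end-of-input word for $q_{\text{final}}$. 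This second stage is a sequential function with a single state, which is trivially counter-free and two-state (one can add a useless second state), so it is of type 1(a).

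Now the first stage is handled by appealing to the classical Krohn--Rhodes Theorem cited in the prose just above the statement: the transition monoid of $\mathcal{A}$ divides a wreath product of finitely many aperiodic two-state factors and finite groups. Because a wreath product of automata can be simulated by a composition of letter-to-letter sequential functions (each factor reads the letters already decorated by the outputs of the previous factors and adds its own coordinate of the state), the state-tracking function decomposes as a composition of letter-to-letter sequential functions whose underlying automata are either two-state and counter-free, or act as permutation groups on their states. These are exactly the building blocks 1(a) and 1(b). Composing with the homomorphism stage finishes the decomposition of $f$.

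For the first-order case, recall from the McNaughton--Papert--Sch\"utzenberger Theorem that a sequential function is first-order exactly when its underlying automaton is counter-free, equivalently, its transition monoid is aperiodic. Aperiodicity is preserved under taking divisors and wreath products with aperiodic factors, so in this case the Krohn--Rhodes decomposition does not require any group factors, and only the building blocks 1(a) appear, as claimed. The main technical point to verify carefully is that the wreath product decomposition of automata produced by Krohn--Rhodes can indeed be realised as a composition of letter-to-letter sequential functions, with each factor having the desired shape; this is essentially a bookkeeping exercise, but it is the step where one has to be careful, because the abstract algebraic statement of Krohn--Rhodes has to be translated into a concrete transducer composition.
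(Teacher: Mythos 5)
Your proposal is correct and follows essentially the same route as the paper, which simply invokes the Krohn--Rhodes Theorem (in its cascade form for sequential transducers) together with closure under composition, without spelling out the two-stage decomposition you describe. One minor slip to fix: in the state-tracking stage you should tag each position with the state reached \emph{before} reading it (i.e.\ after the strict prefix), since the pair (state after, letter) does not in general determine the transition taken, and hence does not determine the output word that your second, homomorphism stage must emit.
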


\paragraph*{Rational functions.} Sequential functions are a left-to-right model. A more expressive, and symmetric, model is the rational functions. There are several ways to define rational functions, including deterministic one way automata with lookahead   or nondeterministic unambiguous one way automata with output, as discussed in the introduction. We use a definition of a logical character,  which is convenient for describing the first-order fragment of rational functions.  In the following, by  a \emph{unary query} we mean an \mso formula (which includes  the special case of a first-order formula)  with one free first-order variable. We apply unary queries to words, and therefore a unary query can be viewed as defining a property of pairs (input word, distinguished position in the input word). For example, a unary query can say ``position $x$ has label $a$, and all later positions have label $b$''.

\begin{definition}[Rational function]\label{def:rational-function}
	A rational function is given by:
	\begin{enumerate}
		\item input and output alphabets $\Sigma$ and $\Gamma$;
		\item a finite set $\Ff$ of unary \mso queries over the input alphabet $\Sigma$, such that for every word in $\Sigma^*$ and distinguished position, exactly one query from $\Ff$ is true;
		\item \label{it:rational-output-words} for each query from $\Ff$, an associated output in $\Gamma^*$;
		\item an \emph{output word for empty  input} in $\Gamma^*$.
	\end{enumerate}
		The semantics is  is a function $f : \Sigma^* \to \Gamma^*$
defined as follows. Suppose that $a_1 \cdots a_n \in \Sigma^*$ is an input word. If $n=0$, then the output is the word from item 4. Otherwise, the output is the word $w_1 \cdots w_n$, where $w_i$ is the word associated in item 3 to the unique query from $\Ff$ which selects position $i$ in the input word. 

A first-order rational function is the special case when all queries in $\Ff$ are defined in  first-order logic, i.e.~set quantification is disallowed.
\end{definition}

The definition above is not in the same spirit as the definition of sequential functions from Definition~\ref{def:sequential-function}. An equivalent definition of  the rational functions would be to use unambiguous automata with output, or Eilenberg bimachines, and the first-order subclass would be recovered by considering an aperiodic restriction on the machines, see~\cite[Theorem 3.1]{lautemann2001descriptive}.

\begin{myexample}
Consider the function 
\begin{align*}
f : \set{a,b}^* \to \set{a,b}^*	 \qquad f(w)= \begin{cases}
	w & \text{if $w$ ends with $a$}\\
	\varepsilon & \text{otherwise.}
\end{cases}
\end{align*}
Consider the \mso (in fact, first-order) sentence
\begin{align*}
\varphi = \forall x \exists y \ y \ge x \land a(y)	
\end{align*}
which says that the last position has label $a$. The set of $\Ff$ of queries in item 2 of Definition~\ref{def:rational-function} has three queries
\begin{align*}
\underbrace{a(x) \land \varphi}_{\alpha(x)} \qquad 	 \underbrace{b(x) \land \varphi}_{\beta(x)} \qquad  	\underbrace{\neg  \varphi}_{\gamma(x)}
\end{align*}
where the third query $\gamma(x)$ does not depend on the position $x$. The outputs from item 3 of the definition are defined by
\begin{align*}
\alpha(x) \mapsto a \qquad \beta(x) \mapsto b 	 \qquad \gamma(x) \mapsto \varepsilon,
\end{align*}
while the output for empty input from item 4 is defined to be $\varepsilon$.
\end{myexample}

As shown by Elgot and Mezei, see~\cite[Theorem 7.8]{elgot1965relations}, a function is rational if and only if it can  be decomposed as a sequential function followed by a reverse sequential function (i.e.~reverse, then a sequential function, then reverse again). Since polyregular functions have reverse built in, we get the following result.

\begin{theorem}\label{thm:rational-polyregular} 
If in the definition of polyregular functions, one replaces sequential functions  by rational functions, the resulting class is the same. Likewise for the first-order case.
\end{theorem}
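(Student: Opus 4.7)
The plan is to prove both inclusions of the claimed equality. The inclusion of the sequential-built class into the rational-built class is immediate: every sequential function is a rational function, since a deterministic one-way automaton with output is in particular unambiguous, so the atoms of the original definition are already among the atoms of the modified definition. The first-order case of this direction is likewise clear, because a counter-free deterministic automaton is aperiodic and so induces a first-order rational function in the sense of Definition~\ref{def:rational-function}.

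For the reverse inclusion I would invoke the Elgot--Mezei theorem cited just before the statement: every rational function decomposes as a sequential function followed by a \emph{reverse sequential function}, that is, as a four-fold composition
\begin{align*}
\text{rational} \ = \ \text{reverse} \ \circ \ \text{sequential} \ \circ \ \text{reverse} \ \circ \ \text{sequential}.
\end{align*}
Since the polyregular class is closed under composition and already contains sequential functions, it only remains to exhibit plain string reversal as a polyregular function. This is where iterated reverse comes in: given input alphabet $\Sigma$, pick a fresh separator $\#\notin\Sigma$ and consider iterated reverse on $\Sigma\cup\{\#\}$ with separator $\#$. A word in $\Sigma^*$ contains no occurrence of $\#$, so it forms a single block and is therefore mapped to its ordinary reverse. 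Pre- and post-composing with the (sequential) identity embeddings into and out of the extended alphabet places reversal inside the polyregular class, and plugging this into the Elgot--Mezei decomposition finishes the direction.

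For the first-order refinement I would use the aperiodic version of Elgot--Mezei, i.e.\ that a \emph{first-order} rational function decomposes as above with each sequential factor being first-order sequential; this is exactly the content of~\cite[Theorem~3.1]{lautemann2001descriptive} as cited after Definition~\ref{def:rational-function}. Since iterated reverse is unconditionally one of the atoms of the (first-order) polyregular class, the same composition lies inside the first-order polyregular functions.

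The only step that demands any genuine care is the aperiodic strengthening of Elgot--Mezei in the first-order case; everything else is a matter of trivial inclusions, of realising reversal as a degenerate iterated reverse, and of using closure under composition. If I had to produce the aperiodic Elgot--Mezei directly rather than cite it, the natural route would be to inspect the standard construction (factor the input by a ``turning point'' chosen by a first-order query, run a sequential transducer on the prefix and on the reversed suffix, then stitch) and check that each induced transition monoid remains aperiodic because all queries and all underlying DFAs involved are first-order definable.
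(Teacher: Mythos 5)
Your proposal is correct and matches the paper's argument: the paper likewise derives the theorem from the Elgot--Mezei decomposition of rational functions as sequential $\circ$ reverse $\circ$ sequential $\circ$ reverse, together with the observation that plain reversal is the separator-free instance of iterated reverse and that the class is closed under composition. Your treatment of the first-order case via the aperiodic refinement is in fact more explicit than the paper's, which disposes of it with ``likewise.''
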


In this paper we  use rational functions more often than sequential ones, and hence the definition of polyregular functions that uses rational functions would be more in the spirit of the technical development below.

\subsection{Regularity preservation.} One advantage of the definition of polyregular functions in terms of composing atomic functions is that if we want to prove that a property is true for all polyregular functions, and that property is preserved under function composition, then it is enough to prove the property for the atomic functions. Here is an example. 
\begin{theorem}[Regular preimages]\label{thm:regular-continuous}
	If $f : \Sigma^* \to \Gamma^*$ is polyregular and $L \subseteq \Gamma^*$ is regular, then the preimage $f^{-1}(L)$ is regular. Furthermore, if $f$ is first-order polyregular and $L$ is first-order definable, then $f^{-1}(L)$ is first-order definable.	
\end{theorem}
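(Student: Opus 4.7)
The plan is to exploit the inductive structure of the class: the property "preimage of a regular language is regular" is closed under function composition (since $(g \circ h)^{-1}(L) = h^{-1}(g^{-1}(L))$), so it suffices to verify it for the three atomic operations of sequential functions, squaring, and iterated reverse. The first-order version goes through the same reduction, provided we check that each atomic case also maps first-order definable languages to first-order definable preimages.

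For sequential functions, I would use the standard product construction. Given a DFA $\Aa$ for $L$ and a sequential transducer for $f$ with state set $P$, build an automaton on the input alphabet whose states are $P \times Q$, and whose transition on input letter $a$ from state $(p,q)$ reads the transducer's next state $p'$ together with the output word $u \in \Gamma^*$, and goes to $(p', q \cdot u)$, where $q \cdot u$ is the state reached by running $\Aa$ from $q$ on $u$. Acceptance takes into account the end-of-input word attached to the final transducer state. This automaton recognises $f^{-1}(L)$. For the first-order case, one observes that the transition monoid of this product embeds into the product of the transition monoids of the transducer's automaton and of $\Aa$, and aperiodicity is preserved under taking finite products and submonoids.

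For squaring and iterated reverse, I would argue via an MSO interpretation. In both cases, the output word, viewed as a labelled linear order, is MSO-definable from the input word using a bounded tuple of input positions to encode each output position (a pair $(x,y)$ for squaring, where $x$ is the underlined position and $y$ the position being output; a single position for iterated reverse, with the order reversed inside each maximal separator-free block). Given any MSO sentence $\varphi$ defining $L$, one can mechanically translate it into an MSO sentence $\varphi'$ over the input word whose truth is equivalent to $f(w) \models \varphi$: each output-position quantifier becomes a tuple quantifier, the order and label atoms are rewritten using the MSO definitions, and an MSO formula over strings is equivalent to a regular language by Büchi's theorem. This yields regularity of $f^{-1}(L)$. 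For the first-order case, the same translation sends first-order formulas to first-order formulas, since the defining formulas for the output structure are themselves first-order (only order and label comparisons on input positions are used), and first-order definability of string languages is again characterised via the Schützenberger--McNaughton--Papert theorem.

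The main subtlety is the first-order case for squaring: one must verify that the translated formulas remain in FO. This reduces to checking that the order on output positions (lexicographic on pairs $(x,y)$, with the convention that earlier underlined positions come first) and the label predicates (``position $y$ is underlined'' becomes $x = y$) are all quantifier-free in the input signature, which they are. Once these atomic cases are done, composition closes the argument for both the MSO and first-order versions uniformly.
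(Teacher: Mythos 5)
Your reduction to the atomic functions, and your treatment of sequential functions and of iterated reverse, are sound (for iterated reverse the interpretation is one-dimensional, so output positions are input positions and even set quantifiers translate directly; for the sequential case your ``product of transition monoids'' should really be a wreath product, since the action on the $L$-automaton depends on the current transducer state, but aperiodicity is closed under wreath products, so the conclusion stands). The genuine gap is in the squaring case for a general \emph{regular} $L$ --- exactly the case you dismiss as unproblematic while flagging the first-order case as the subtle one. In your two-dimensional interpretation an output position is a pair $(x,y)$ of input positions, so a \emph{set} of output positions is a set of pairs, i.e.\ a binary relation on input positions. Translating an \mso set quantifier of $\varphi$ therefore produces a quantifier over binary relations, which is full second-order logic over the input word, not \mso; such quantification can define non-regular languages, so the ``mechanical translation'' does not stay within \mso and Büchi's theorem no longer applies. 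Your argument is complete only for the first-order half of the statement, where all quantifiers are first-order and quantification over pairs is harmless.

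The paper closes exactly this hole by avoiding the logical translation altogether for squaring: it takes a finite monoid homomorphism $h$ recognising $L$, and builds the rational function $g : \Sigma^* \to M^*$ that labels each input position $x$ with the $h$-image of the copy of $w$ with $x$ underlined. Then $h \circ \mathrm{square}$ factors as the product map $M^* \to M$ after $g$, so $\mathrm{square}^{-1}(L) = g^{-1}(K)$ for the regular language $K$ of words whose product lies in the accepting set, and one is reduced to the (already handled) rational case. To repair your proof you would need either this algebraic detour or an appeal to a non-trivial theorem that preimages of regular languages under bounded-dimension \mso interpretations on strings are regular --- which is not a consequence of naive formula substitution.
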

\begin{proof}
It is enough to prove the theorem when $f$ is an atomic function, since the property in the statement of the theorem is preserved under function composition. The case for sequential functions is easy to see, see e.g.~\cite[Corollary 4.2]{berstel2013transductions} for a stronger result, which also covers the rational functions. 

It remains to deal with iterated reverse and squaring. We only do the case of squaring, the iterated reverse is handled in a similar way. Consider an alphabet $\Sigma$ and the squaring operation
\begin{align*}
\text{square} : \Sigma^* \to (\Sigma + \underline \Sigma)^*	
\end{align*}
We want to show that for every regular language $L$ over the output alphabet of $f$, the inverse image $\text{square}^{-1}(L)$ is also regular. Suppose that $L$ is such a language, which is  recognized by a monoid homomorphism 
\begin{align*}
h : (\Sigma + \underline \Sigma)^*	 \to M.
\end{align*}
where the monoid $M$ is finite\footnote{For monoids and homomorphisms as an alternative to recognising regular languages, see~\cite[Chapter V]{straubing2012finite}}.
Define \begin{align*}
g : \Sigma^* \to M^*	
\end{align*}
to be the function which inputs a word $w$, and replaces each position $x$ in that letter by the value of $h$ on the word obtained from $w$ by underlining position $x$. Here is a picture:
\mypic{15}
It is not hard to see that the following diagram commutes
\begin{align*}
\xymatrix@C=4cm{ \Sigma^* \ar[r]^{\text{square}} \ar[d]_g & 	(\Sigma + \underline \Sigma)^* \ar[d]^h \\ M^* \ar[r]_{\text{product in $M$}} & M}
\end{align*}
The language $\text{square}^{-1}(L)$ is the inverse image, under the function $h \circ \text{square}$, of some accepting set of elements $F \subseteq M$. Because the diagram commutes, $\text{square}^{-1}(L)$ is also equal to  the inverse image under $g$ of the language
\begin{align*}
K = \set{v \in M^* : \text{the product of $v$ is in $F$}}.
\end{align*}
The function $g$ is  rational and the language $K$ is regular, and therefore the inverse image $g^{-1}(K)$ is regular, as we have discussed at the beginning of this proof. Furthermore, if $L$ is first-order definable, then the function $g$ is first-order rational and the language $K$ is first-order definable, and therefore $g^{-1}(K)$ is also first-order definable. 
\end{proof}

The construction in the above theorem is effective. A corollary  is that one can effectively check, given a polyregular function, if some  output is nonempty  (because this is the same as checking if the inverse image of the language $\set{\varepsilon}$ does not contain  all input words), or if the function outputs only words of even length. 
Another corollary  is that  a language $L \subseteq \Sigma^*$   is regular if and only if its characteristic function 
\begin{align*}
w \in \Sigma^* \quad \mapsto \quad \begin{cases}
	1 & \text{if }w \in L\\
	0 & \text{if }w \not \in L
\end{cases}
\end{align*}
is a polyregular function. In other words, for functions with Boolean outputs, the polyregular functions are the same as the regular languages. 

\begin{myexample}
	If a function is  polyregular then it has  (a) polynomial size increase; and (b) preimages of regular languages are regular. One could ask if these properties characterise the polyregular functions, i.e.~if the polyregular functions are exactly the string-to-string functions that have properties (a) and (b). Here is a counterexample. For a function  $f  : \Nat \to \Nat$,  define
	\begin{align*}
		\hat f : a^* \to a^* \qquad a^n \mapsto a^{f(n)!}.
	\end{align*}
	If $f$ grows slow enough, e.g.~if it is $\log \log n$, then  $\hat f$ has polynomial size increase, i.e.~it satisfies condition (a). We claim that if  $f$ tends to infinity, then $\hat f$ also satisfies (b). This is  because it produces only words of factorial length, i.e.~words of the form
	\begin{align*}
		a^{1} \quad a^{2!} \quad a^{3!} \quad \cdots,
	\end{align*} Indeed, it is well known that every regular language $L \subseteq a^*$  contains either finitely many, of co-finitely many words of factorial length. In particular, if $f$ tends to infinity, then the  preimage under $\hat f$ of every regular language is going to be either finite or co-finite, and therefore also regular. Summing up, if $f$ tends to infinity slowly enough, then $\hat f$ is going to satisfy conditions (a) and (b). If $f$ is hard enough, e.g.~it is not computable, then $\hat f$ is not polyregular.
\end{myexample}

\section{Pebble transducers}
\label{sec:pebble-transducers}
In this section, we describe a second model for string-to-string functions, which will end up being equivalent to the polyregular functions from the previous section. The model is defined in terms of  automata and transducers which have a two-way head and  use pebbles  to mark positions in an input word. For languages, pebble automata where introduced in Globerman and Harel~\cite{Globerman:1996he}, while the transducer version -- in the more general setting of trees -- comes from  Milo, Suciu and Vianu as transducers~\cite{milo2003typechecking}. Without any restriction on the way that pebbles are placed, one can use pebbles  to  simulate  logarithmic space Turing machine computation~\cite[Corollary 3.5]{Ibarra:1971ij}, and the model has many undecidable properties, e.g.~the following problem is undecidable ``decide if a pebble automaton accepts at least one input word''. That is why one considers pebbles with   a stack discipline~\cite[Definition 4.1]{Globerman:1996he}:  pebbles are totally ordered, and a pebble can  be moved only if all  pebbles smaller in the order have been lifted.

Although  we are mainly interested in pebble automata not as acceptors, but as string-to-string transducers, we begin our presentation with automata (i.e.~acceptors) in Section~\ref{sec:pebble-automata},  and only then extend the definition to transducers in Section~\ref{sec:pebble-transducers-definition}.

\subsection{Pebble automata}
\label{sec:pebble-automata}
The idea behind an $k$-pebble automaton is that at any given moment of its computation, it stores a stack of at most $k$ positions (called pebbles) in the input word. The stack discipline condition says that only the topmost position in the stack -- called the \emph{head} -- can be modified, by moving it left or right\footnote{Sometimes, the head is counted separately from the pebbles, e.g.~one talks about an automaton with one pebble and one head. In this paper, the head is counted as one of the pebbles, namely the topmost one.}. Also, the  automaton can pop the topmost stack position, or push a new pebble (which initially is equal to the head) Here is a picture  of a configuration which has four pebbles:
\mypic{16} 
A 1-pebble automaton is the same thing as a two-way automaton, since it can only move its head and push/pop are disallowed.  We do not decorate the input word with end-markers, and therefore a pebble automaton can be run only on a nonempty word.  The exact syntax of a pebble automaton  is described in the following definition.

\newcommand{\wrapends}[1]{\vdash \! #1 \! \dashv}

\begin{definition}[Pebble automaton]\label{def:pebble-transducer} 
	A $k$-pebble automaton consists of:
	\begin{enumerate}
	\item a finite \emph{input alphabet} $\Sigma$;
		\item a finite   set $Q$ of \emph{states};
		\item two designated states: an initial and final one;
		\item a transition function of type
\begin{align*}
	 \underbrace{Q}_{\substack{\text{current} \\ \text{state}}} \times \underbrace{\Sigma}_{\substack{\text{label} \\ \text{under} \\ \text{head}}} \times
	\underbrace{(\set{\text{first, last, $1,\ldots, k$}}^2\to \set{\le , \not \le})}_{\substack{\text{the order comparison of the pebbles} \\ \text{and the first and last positions}}}   \to   \underbrace{Q}_{\substack{\text{new} \\ \text{state}}} \times \underbrace{\set{-1,0,1,\text{push, pop}}}_{\substack{\text{pebble action}}}.
  \end{align*}
\end{enumerate}
\end{definition}
A pebble automaton is  a $k$-pebble automaton for some $k$.
A configuration of the automaton consists of: (a) a nonempty \emph{input word} over the input alphabet; (b) a \emph{control state} from the set of states; and (c) a \emph{pebbling} which is a nonempty stack of at most $k$  positions in the input word. The $i$-th position in the pebbling is called the $i$-th pebble, with pebble 1 being the  bottom of the stack. The topmost position in the stack  is called the  \emph{head}. For a configuration $c$, its \emph{successor configuration}, which might be undefined, is the configuration with the same input word obtained as follows. Apply the  transition function of the automaton  in the natural way to $c$, yielding a new state $q$ and pebble action $a$.  The  control state in the successor configuration is $q$ and the pebbling is updated as follows:
\begin{itemize}
	\item If the pebble action $a$ is in $\set{-1,0,1}$ then  the topmost position on the stack is offset by $a$, i.e.~the head moves by $a$, as in the following picture
\mypic{17}
	 The successor configuration is undefined if adding $a$ yields something that is not a position, i.e.~the automaton moves left from the first position or right from the last position.
	\item If the pebble action is ``push'', then the topmost position on the stack is duplicated, as in the following picture
	\mypic{18}
 If ``push'' is executed when all $k$ pebbles are already present, then the successor configuration is undefined.
	\item If the pebble action is ``pop'',  then the topmost pebble on the stack is removed, and therefore the head is moved to the second-to-last pebble, as in the following picture
	\mypic{19} 
	If the ``pop'' action is executed when there is only one pebble, then the successor configuration is undefined. 
\end{itemize}

A \emph{run} of the pebble automaton is a sequence of configurations where consecutive configurations are connected by the successor relation on configurations described above. For a given input word, the \emph{initial configuration}  of the pebble automaton  has the initial state, and the pebbling has one pebble which points to the first position. The automaton accepts a word if there is an \emph{accepting run}, i.e.~a run where the first configuration is initial, the last one has an accepting state, and no other configurations have an accepting state. The accepting run, if it exists, is unique, by determinism of the transition function.

\paragraph*{Defining the reachability relation in logic.}
We are interested in pebble automata where the reachability (not successor) relation on configurations can be defined in first-order logic, according to the following definition.
\begin{definition}[First-order definable pebble automaton]\label{def:fo-definable-pebble-automaton}
A $k$-pebble automaton is called \emph{first-order definable} if for every states $p$ and $q$ and numbers $i,j \in \set{1,\ldots,k}$  there is a first-order formula 
\begin{align*}
\varphi^{ij}_{pq}(\underbrace{x_1,\ldots,x_i}_{\bar x}, \underbrace{y_1,\ldots,y_j}_{\bar y})	
\end{align*}
such that for every nonempty word $w$ over the input alphabet,
\begin{align*}
w \models \varphi^{ij}_{pq}(\bar x, \bar y)	
\end{align*}
holds if and only  the automaton has a run from the configuration with state $p$ and pebbling $\bar x$ to the configuration with state $q$ and  pebbling  $\bar y$. 
\end{definition}

One could also consider a variant of the above definition -- call it \mso definability --  where \mso is used instead of first-order logic. It turns out that \mso definable reachability is automatically true.

\begin{lemma}\label{lem:all-pebble-mso}
Every pebble automaton is \mso definable.
\end{lemma}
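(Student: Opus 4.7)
I would prove this by downward induction on the pebble level $i$, establishing for every pair of states $p, q$ the \mso-definability of the \emph{level-$i$ reachability} relation $\rho^{i}_{p,q}(\bar x, y, y')$: true on $w$ when, starting in state $p$ with height-$i$ pebbling $(\bar x, y)$, the automaton reaches state $q$ with pebbling $(\bar x, y')$ via a run that never pops below $i$ pebbles. The crucial feature of $\rho^i$ is that the bottom $i-1$ pebbles $\bar x$ are frozen throughout such a run, so the effective state space is $Q \times (\text{input positions})$, which is the kind of state space over which \mso on words can quantify via second-order set variables.

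For the base case $i = k$ no push transition can fire (the stack is full), so level-$k$ behaviour is literally a deterministic two-way automaton on $w$ with the frozen pebbles $\bar x$ as extra marked positions. Reachability of such an automaton is \mso-definable: its configuration graph is a finite functional graph, and in any finite functional graph ``$(q, y')$ is reachable from $(p, y)$'' coincides with ``$(q, y')$ lies in every step-closed subset containing $(p, y)$'', a $\Pi^1_1$-property directly expressible in \mso.

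For the inductive step $i < k$, the level-$i$ one-step relation consists of two kinds of transitions: local head moves by $-1, 0, +1$ read directly off the transition function, and \emph{push–pop excursions} — push pebble $i+1$ at the current head position, execute a run at level $\ge i+1$ captured by $\rho^{i+1}$, then fire a pop transition that returns to level $i$ without moving pebble $i$. Both are \mso-definable, the excursions via the inductive hypothesis, and together they form a deterministic transition system on $Q \times (\text{positions})$ with position change at most one per step, to which exactly the same functional-graph argument applies, yielding $\rho^i$.

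Finally, the general reachability relation asked for in Definition~\ref{def:fo-definable-pebble-automaton} — between configurations of possibly different heights $i$ and $j$ — is obtained by factoring a run through its minimum-height events: the unique lowest level $m \le \min(i, j)$ reached, together with the pop transition entering and the push transition leaving that level. A secondary induction on $(i - m) + (j - m)$, combining the $\rho^{\ell}$ with the first-order-definable single-step push and pop, assembles the cross-level reachability into a single \mso formula. The main technical obstacle is the functional-graph characterization of reachability in deterministic one-position systems, which drives both the base case and the inductive step; everything else is careful bookkeeping around the stack discipline.
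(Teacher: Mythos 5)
Your proof is correct and follows essentially the same route as the paper: a downward induction on stack height establishing \mso-definability of reachability for runs that never drop below a given level (the paper's ``superficial'' runs), with the one-step relation (head moves plus push--pop excursions handled by the inductive hypothesis) closed transitively via set quantification over $Q \times \text{positions}$, and the general cross-level relation assembled from at most $2k$ such pieces separated by pushes and pops. The only differences are cosmetic: the paper folds each excursion together with the following head move into a single step $\theta_{pq}$ of the transitive closure, and phrases the closure via the least step-closed set rather than as membership in every step-closed set.
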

\begin{proof}
This result is implicit in~\cite[Theorem 4.2]{Globerman:1996he}. Call a run \emph{$i$-superficial} if the source and target configurations have a stack of  height $i+1$ and pebbles $1,\ldots,i$ are not moved through the run. Here is a picture of a 3-superficial run:
\mypic{22}
  The general idea is that an $i$-superficial run can be decomposed as a concatenation of several $(i+1)$-superficial runs, and this concatenation can be simulated in \mso, since it corresponds to taking a transitive closure of a binary relation on positions (and not tuples of positions). A more formal proof is given below.
  
Suppose that the automaton has $k$ pebbles.
 
\begin{claim}
For every states $p$ and $q$ and $0 \le i < k$  there is an \mso formula 
\begin{align*}
\psi^{i}_{pq}(\underbrace{z_1,\ldots,z_i}_{\bar z},x,y)	
\end{align*}
which is true in a nonempty word $w$ over the input alphabet if and only the automaton admits an $i$-superficial run  state $p$ and pebbling $\bar zx$ to state $q$ and pebbling $\bar zy$.
\end{claim} 
 
\begin{proof}
Here is a picture of the run in the statement of the claim (which shows only the pebblings and states, since the input word remains the same):
\mypic{20}
The proof is by induction on $i$ in the opposite order, i.e.~the induction base is when $i=n-1$.   The induction base is proved the same way as the induction step, so prove only the induction step. Suppose that we haver already proved the claim for $i+1$, and we want to prove it for $i$. Using the induction assumption, we can write for every states $p$ and $q$ and  \mso formula 
\begin{align*}
\theta_{pq}(z_1,\ldots,z_{i},x,y)	
\end{align*}
which is describes runs as in the statement of the claim, but with the added requirement that for all configurations used in the run except for the last one, the first $i+1$ pebbles are the same, namely $\bar zx$. In particular, the last transition moves the $(i+1)$-st pebble by some offset in $\set{-1,0,1}$, as shown in the following picture:
\mypic{21}
  The formula $\theta$ can be written using the induction assumption because the runs it describes consist of: a run that can be described using the induction assumption, and then a single transition. 

To get the  formula $\psi^i_{pq}$ from the statement of the claim, we use \mso to do the following fixpoint computation. Fix an input word and a  valuation of the variables $\bar z$. Consider the least set 
\begin{align*}
X \subseteq \text{states} \times \text{positions}
\end{align*}
which contains the pair $(q,x)$ and which has the following closure property:
\begin{align*}
\bigwedge_{r,s \in \text{states}} \qquad \forall u \forall v \qquad  	 (r,u) \in X \ \land \  \theta_{rs}(\bar z,u,v) \ \Rightarrow \ (s,v) \in X
\end{align*}
The formula $\psi^i_{pq}$ from the statement of the claim then simply says that $(q,y)$ belongs to $X$. This reasoning can be formalised in \mso, by encoding $X$ a tuple of sets of positions, one for each control state of the automaton.
\end{proof}

The above claim shows that \mso can define reachability relation for $i$-superficial runs. To define the reachability relation in general, we observe that an arbitrary run can be decomposed into a concatenation of at most $2k$ superficial runs separated by push/pop transitions, as in the following picture:
\mypic{23}
\end{proof}

Motivated by Lemma~\ref{lem:all-pebble-mso}, we do not discuss \mso definable pebble automata.

\subsection{Pebble transducers}
\label{sec:pebble-transducers-definition}
We are mainly interested in pebble automata as string-to-string functions (i.e.~transducers), rather than as acceptors. To get a string-to-string function, a pebble automaton is extended with output words, as described in the following definition.
\begin{definition}[Pebble transducer]
A  \emph{$k$-pebble transducer} is defined to be a $k$-pebble automaton, plus:
\begin{enumerate}
	\item a finite set $\Gamma$ called the \emph{output alphabet};
	\item an \emph{output function} $Q \to \Gamma^*$.
	\item an \emph{output word for empty input} in $\Gamma^*$.\end{enumerate}
A pebble transducers is a $k$-pebble transducer for some $k$. 
 	A pebble transducer is called first-order definable if its underlying pebble automaton is such.
\end{definition}
If a pebble transducer has input alphabet $\Sigma$ and output alphabet $\Gamma$, then its semantics is a partial function $\Sigma^* \to \Gamma^*$ defined as follows. If the input word is empty, then the output is the empty output word from item 3 in the definition.
If the input word is nonempty, then consider the accepting  run of the underlying pebble automaton on the input. If this accepting run does not exist (i.e.~the underlying pebble automaton rejects), then the output is undefined. If  the accepting run exists, then apply the output function from item 2 in the definition to (the state in) each configuration, and concatenate the resulting words. This concatenation is the output of the pebble transducer.  We are interested in pebble transducers that define total functions, i.e.~the underlying pebble automaton accepts all nonempty inputs.

Note that a 1-pebble transducer is the same thing as a deterministic two-way automaton with output; the class of functions recognised by such devices coincides with string-to-string \mso transductions~\cite[Theorem 31]{engelfriet2001mso} and with the class of streaming string transducers~\cite[Theorems 1, 2, 3]{alur2010expressiveness}. In particular, 1-pebble transducers are closed under composition, because this is true for \mso transductions. Closure under composition also extends to pebble transducers with more pebbles, as shown  by Engelfriet and Maneth in \cite[Theorem 2]{engelfriet2002two}, see also~\cite[Theorem 11]{engelfriet2015two}:
\begin{theorem}\label{thm:pebble-closed-under-composition}
	The class of functions recognized by pebble transducers is closed under composition. Likewise for first-order definable pebble transducers.
\end{theorem}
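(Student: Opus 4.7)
Fix pebble transducers $T_1 : \Sigma^* \to \Gamma^*$ with $k$ pebbles and $T_2 : \Gamma^* \to \Delta^*$ with $\ell$ pebbles. The goal is to build a pebble transducer $T$ with $O(k \cdot \ell)$ pebbles that computes $T_2 \circ T_1$ without materialising the intermediate string $T_1(w)$. The guiding idea is that every output letter of $T_1$ is produced by the output function at some configuration along its (unique) accepting run, so a \emph{position} in $T_1(w)$ can be represented internally by the pair (configuration $c$ of $T_1$, offset inside the output word written at $c$). A configuration $c$ consists of a state of $T_1$, which we can store in $T$'s finite control, together with a pebbling of height at most $k$ in the input word $w$, which we can store as $k$ positions among $T$'s pebbles.

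Once this representation is fixed, the simulation of $T_2$ is routine except for the movement primitives. The plan is to implement, as subroutines of $T$, the three operations needed to update the topmost of $T_2$'s $\ell$ virtual pebbles: (a) read the letter of $T_1(w)$ at the current virtual position; (b) move that virtual position by $+1$; (c) move it by $-1$. For (a), given the configuration $c$, the offset and the output function of $T_1$ determine the letter. For push and pop of $T_2$, $T$ duplicates or discards the top block of $k$ pebbles that encodes $T_2$'s topmost virtual pebble, remembering its state component in finite control. The stack discipline is respected because $T$ only ever modifies the top $k$ pebbles of its own stack, and by the induction on $\ell$ hidden in the argument, the lower $(\ell - 1) \cdot k$ pebbles remain frozen while the top block is being reconfigured.

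The main obstacle is operation (c), moving one step \emph{backwards} in the output of $T_1$. Running $T_1$ forward from a given configuration is straightforward: $T$ already stores that configuration with $k$ of its pebbles and can execute $T_1$'s transition function directly, advancing configuration by configuration until a state is reached whose output word is nonempty. Running $T_1$ backwards, however, is not immediate because pebble transducers are inherently forward-moving. The plan is to reduce backwards moves to a reachability question: the predecessor configuration of $c$ in $T_1$'s run is the unique configuration $c'$ such that $c'$ is reachable from the initial configuration, $c$ is the successor of $c'$, and $c'$ produces output. Using Lemma~\ref{lem:all-pebble-mso}, the predicate ``$c'$ is reachable from the initial configuration of $T_1$'' is definable in \mso over the input word $w$, with the pebbling of $c'$ appearing as free first-order variables. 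Since 1-pebble transducers coincide with \mso transductions, and in fact \mso predicates over a bounded number of distinguished positions can be evaluated by a pebble automaton using those positions as pebbles, $T$ can search through candidate predecessor configurations (there are only finitely many states and $|w|^k$ candidate pebblings) and identify the correct $c'$. Alternatively, one can use the decomposition of an arbitrary run into superficial segments from the proof of Lemma~\ref{lem:all-pebble-mso} and step back one configuration at a time by first-order case analysis on whether the previous transition was a head movement, a push, or a pop.

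For the first-order case, the same construction goes through, but every use of \mso reachability is replaced by first-order reachability, which is available by definition for first-order definable pebble automata. The output function and the pebble-action function of $T$ are described by boolean combinations of the formulas $\varphi^{ij}_{pq}$ supplied with $T_1$ and the analogous formulas supplied with $T_2$, each of which is first-order; hence the resulting $T$ is again first-order definable. Totality is preserved since $T_2 \circ T_1$ is total whenever both $T_1$ and $T_2$ are, and the construction never introduces new failure modes beyond those of $T_1$ and $T_2$ themselves.
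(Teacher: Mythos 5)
Your overall architecture matches the paper's: represent each virtual pebble of $T_2$ by a composite block consisting of a state of $T_1$ (kept in finite control) and a pebbling of $T_1$ (kept as $k$ physical pebbles), so that $O(k\ell)$ pebbles suffice, and reduce the simulation to a few operations on the topmost block, of which the backward step is the only hard one. The gap is in how you resolve that backward step. Your search-and-check method lets $T$ \emph{identify} the predecessor configuration $c'$ --- enumerate candidate pebblings with $k$ fresh pebbles pushed above the current block and evaluate the \mso reachability predicate from Lemma~\ref{lem:all-pebble-mso} --- but it gives no way to \emph{install} $c'$ as the new topmost block. A $k$-tuple of positions cannot be remembered in finite control, and under the stack discipline the fresh pebbles holding the winning candidate sit \emph{above} the pebbles holding $c$, so there is no legal sequence of moves that discards the middle block while keeping the top one. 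Leaving the candidate in place grows the stack by $k$ pebbles per backward step, and the number of backward steps is unbounded, so this does not yield a pebble transducer. (A smaller issue: the configuration you characterize --- reachable, with successor $c$, and output-producing --- is the \emph{immediate} predecessor, which need not produce output; what you need is the latest output-producing configuration from which $c$ is reached through non-producing ones, which is still \mso-definable but is not what you wrote.)

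The paper escapes this by making the predecessor \emph{locally} computable. It first normalizes $T_1$ so that (a) every transition outputs at most one letter and, crucially, (b) a pop may only be executed when the two topmost pebbles coincide. Under (b), once the transition $t$ that entered $c$ is known, the predecessor $c'$ is obtained from $c$ by a constant-size edit of the top of the block: reverse the head offset for a move, pop to undo a push, and push a duplicate to undo a pop --- the last case being exactly where (b) is needed, since otherwise the pebble removed by the pop could have been anywhere and is not determined by $c$ and $t$. The only global datum required is $t$ itself, a finite piece of information, found by replaying $T_1$ from its initial configuration on $k$ fresh pebbles until $c$ is reached and then popping them all. Your alternative suggestion --- stepping back by case analysis on whether the previous transition was a move, a push, or a pop --- is this idea, but without normalization (b) the pop case does not close, and you also do not say how $T$ determines which transition was previous. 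Adding the normalization and the replay subroutine would repair the proof.
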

Since our syntax for pebble transducers is a little different than in~\cite{engelfriet2002two,engelfriet2015two}, and since we also need the closure under composition of first-order definable pebble transducers, we present a self-contained proof.

\begin{proof}[Of Theorem~\ref{thm:pebble-closed-under-composition}]
	Consider two pebble transducers
	\begin{align*}
\xymatrix@C=3cm{\Sigma^* \ar[r]^f_{\text{$k$ pebbles}} & \Gamma^* \ar[r]^g_{\text{$m$ pebbles}} & \Delta^*}	
\end{align*}
 We assume without loss of generality that the transducer for $f$ satisfies the following conditions
 \begin{itemize}
 \item[(a)] each transition produces at most one output letter; and
 	\item[(b)]  when a pop action is executed, then the head does not move, i.e.~pop is only allowed when the topmost two pebbles are in the same place.
 \end{itemize}
Every pebble transducer can be easily converted into one that satisfies (a) and (b) and produces the same outputs.

We  prove below that the composition $g \circ f$ can be computed by a pebble transducer.
Consider an input word $w \in \Sigma^*$. In the proof we  talk about configurations of $f$ in $w$ and about configurations of $g$ in $f(w)$; we write $f$-configurations for the former and $g$-configurations for the latter. 

 The idea is as follows.  
For an input word $w \in \Sigma^*$, define a  \emph{composite configuration in $w$} to be  a state of $g$ plus a stack of at most $m$ $f$-configurations in the word $w$. Since all of the $f$-configurations have the same input word, the stack stores  only the states and the pebble positions. A composite configuration can be represented using at most $km$ pebbles. 
 For an input word $w \in \Sigma^*$, and a $g$-configuration
 $c$ over input $f(w)$, define its \emph{composite} to be the result of replacing in $c$ each pebble by the $f$-configuration which produced that pebble's position, as in the following picture \mypic{24}

The automaton for the composition $g \circ f$  simulates the run of $g$ on $f(w)$, by computing the composites of the $g$-configurations on that run. It remains to show that the composite configurations can be updated, i.e.~if we know the composite of a $g$-configuration $c$, then we can compute the composite of its successor $g$-configuration.  To perform these updates, we use the toolkit of operations described in the following sublemma.

\begin{sublemma}
	The following operations on  composite configurations  can be performed by a pebble automaton:
 \begin{enumerate}
  \item pop the topmost $f$-configuration;
 \item duplicate the topmost $f$-configuration;
 	\item replace the topmost $f$-configuration   by its successor $f$-configuration;
 	\item replace the topmost $f$-configuration   by its predecessor $f$-configuration;
 \end{enumerate}
\end{sublemma}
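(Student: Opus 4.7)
The general plan is to represent a composite configuration by laying out the pebbles of all $f$-configurations bottom-to-top in the stack of the simulating pebble automaton, while the finite control tracks the $f$-states and the individual heights of each $f$-configuration (at most $m$ numbers, each in $\{1,\ldots,k\}$, which is finite information). This uses at most $km$ pebbles, and I would allow a few additional pebbles for scratch work.

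Operations (1) and (3) are direct. For (1), pop pebbles one by one, stopping after the number of pops equal to the top $f$-configuration's height, which is stored in the finite control. For (3), the simulating transition function already sees everything needed to evaluate one step of $f$ on the top $f$-configuration: the $f$-state sits in the finite control; the label under the $f$-head equals the label under the simulating head, since the $f$-head is the topmost pebble; and the pairwise comparisons of the relevant pebbles and of first/last are among the inputs of the simulating transition function. Applying $f$'s action (move, push, or pop, under assumptions (a) and (b)) translates directly to a pebble operation, with the state updated in the control.

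Operation (2) is implemented by copying the pebbles of the top $f$-configuration one at a time onto the stack. To produce a copy of the $i$-th pebble, push a new pebble (which initially coincides with the head) and walk it left or right, comparing with the original $i$-th pebble at each step, until equality is reached. Iterating for $i = 1, \ldots, j$, where $j$ is the height of the top $f$-configuration (known from the control), reproduces the whole stack on top.

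Operation (4) is the only delicate case and is where I expect the main difficulty. The plan is to locate the predecessor $P$ of $T$ by running $f$ forward from its initial configuration. Push a scratch $f$-configuration on top of $T$, initialized to $f$'s initial configuration (one pebble at the first position, initial state). Then loop: compute what $f$ would do at the scratch, test whether the resulting configuration equals $T$ by comparing the would-be new state and height against those of $T$ in the finite control and comparing the resulting pebble positions against $T$'s pebbles via pairwise pebble comparisons, and if the test fails, apply that transition to the scratch (an instance of operation (3) on the scratch). When the test succeeds, the scratch is $P$ and the action $a$ taking $P$ to $T$ is recorded in the finite control. Finally pop the scratch off, so that $T$ is again on top, and transform $T$ in place into $P$ by undoing $a$: if $a$ was move by $d$, shift $T$'s top by $-d$; if $a$ was push, pop $T$'s top; if $a$ was pop, push at $T$'s current head, which is correct thanks to assumption (b). The hard part is the bookkeeping: $T$ must remain intact during the forward simulation of $f$ on the scratch so that repeated equality tests are meaningful, and the final reconstruction of $P$ from $T$ relies on the fact that each $f$-transition is reversible once its action is known.
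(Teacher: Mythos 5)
Your proposal is correct and follows essentially the same route as the paper: the same stack-of-stacks representation with states and heights in the finite control, the same pebble-by-pebble copying for duplication, and for the predecessor the same idea of rerunning $f$ from its initial configuration on $k$ scratch pebbles to identify the transition entering the top configuration, then inverting it using the assumption that pops happen only when the two topmost pebbles coincide. The only cosmetic difference is that you test whether the scratch's successor equals $T$ before applying the transition (which requires a small detour to detect adjacency of positions), whereas the paper applies the transition first and remembers it; both amount to the same algorithm.
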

\begin{proof}
\begin{enumerate}
	\item 	Pop the pebbles at the top of the composite stack. 
	\item For duplication, suppose that the topmost $f$-configuration has $i \in \set{1,\ldots,k}$ pebbles.  The simulating automaton does $i$ left-to-right passes through the input word. In the $j$-th pass it waits until it sees the $j$-th pebble from the topmost configuration and then pushes a new copy of that pebble.
	\item For the successor  of the topmost configuration, use the automaton for $f$.
	\item The predecessor configuration is the hard part. There is a smart solution to this problem that avoids using extra pebbles, and which is based on an idea  of Hopcroft and Ullman~\cite{ullman1967approach}, see also~\cite[Lemma 13.4]{toolbox}. In the 
	interest of simplicity, we present a less smart idea which uses $k$ extra pebbles. Suppose that the composite configuration is $C$, and let $c$ be the $f$-configuration that is at the top of the stack in $C$. Our goal is to produce the predecessor of $c$. Recall the assumption (a) described at the beginning of this proof, which says that a pop transition is only allowed when the two topmost pebbles are in the same position. A corollary of this assumption is that if we know  
	$f$-configuration $c$ and a transition $t$ of the automaton in $f$, then there is a unique $f$-configuration that goes to $c$ via $t$. Therefore,  it is enough to find the transition that was executed by $f$ when entering configuration $c$. To find this transition,  one simply restarts the  automaton $f$ from the initial configuration, using at most $k$ extra pebbles, until the $f$-configuration $c$ is reached. 
	This subcomputation allows us to determine the transition that was executed by $f$ just before entering configuration $c$.
\end{enumerate}
\end{proof}

Using the above toolkit, we can compute the successor on composite configurations, with item 4 used whenever the automaton $g$ wants to move its head to the previous position. It is not hard to see that if both $f$ and $g$ were first-order definable, then the same is true for the composite automaton described in the above construction.
\end{proof}

From Theorem~\ref{thm:pebble-closed-under-composition}, we get the following corollary, which says that the polyregular functions, as defined in Section~\ref{sec:polyregular}, are contained in the functions recognised by pebble transducers (we will also see later on that the opposite inclusion is true as well).

\begin{corollary}
	Every polyregular function is recognised by a pebble transducer.
\end{corollary}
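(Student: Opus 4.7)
The plan is to exploit Theorem~\ref{thm:pebble-closed-under-composition}, which gives closure of pebble transducers under composition. Since the polyregular functions are defined as the smallest composition-closed class containing sequential functions, squaring, and iterated reverse, it suffices to implement each of these three atomic operations by a pebble transducer. The composition closure then automatically lifts the implementation to arbitrary polyregular functions.

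First I would handle sequential functions. A sequential function is just a deterministic one-way automaton with output labels on transitions and an end-of-input word at each state. This is a special case of a 1-pebble transducer: the single pebble, which plays the role of the head of a two-way automaton, is moved only rightward, the output function emits the transition's associated word, and upon reaching the rightmost position the automaton reads the end-of-input word and enters the final state. Since 1-pebble transducers are two-way deterministic automata with output, nothing more is needed.

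Next I would implement squaring by a 2-pebble transducer. The bottom pebble acts as the ``current underlined position''; it is moved left-to-right through the word. For each position of the bottom pebble, a second pebble is pushed at the first position, swept rightward across the whole input, and at each step the transducer emits either the plain letter or the underlined letter, depending on whether the top pebble coincides with the bottom one (this coincidence is directly testable via the order-comparison component of the transition function). When the top pebble reaches the last position, it is popped, the bottom pebble is moved one step right, and the procedure is repeated; the whole process ends when the bottom pebble cannot move further. Iterated reverse is similarly a 2-pebble transducer: the bottom pebble walks left-to-right identifying the left end of the current block (a position just after a separator or the first position), then a second pebble is pushed, swept right until it finds the next separator (or the end of input), and then walked back leftward, emitting the letters it reads in right-to-left order, followed by the separator; then the second pebble is popped and the bottom pebble is advanced past the block just processed.

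The one subtlety worth flagging is that our definition of pebble automata does not use end-markers, so empty-input and boundary cases (the last block in iterated reverse, the empty input in squaring) need to be handled by the separate ``output for empty input'' and by suitable end-of-input behaviour in the states; this is mechanical but must be checked. Once each atomic function is realized, Theorem~\ref{thm:pebble-closed-under-composition} closes the argument: the class of functions computed by pebble transducers contains the three generators and is closed under composition, hence contains the polyregular functions. No single step is really the main obstacle here — everything is a small construction — the only place where one has to be a little careful is ensuring the 2-pebble transducer for squaring and iterated reverse conforms to the stack discipline (the iterator pebble must be popped before the index pebble is moved), which the constructions above do respect.
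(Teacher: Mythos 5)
Your proof is correct and follows essentially the same route as the paper: reduce to the atomic functions via Theorem~\ref{thm:pebble-closed-under-composition} and implement each of them directly. The only (inessential) difference is that you use two pebbles for iterated reverse, whereas the paper observes it can already be done by a two-way automaton with output, i.e.\ a single pebble.
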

\begin{proof}
	Since pebble transducers are closed under composition, it is enough to show that the basic building blocks of polyregular functions, namely the sequential functions, squaring, and reverse are all recognised by pebble transducers. This is easy to check. A one-way automaton with output, i.e.~a sequential function, is a special case of a two-way automaton with output (which is the same as a 1-pebble transducer), and iterated reverse can easily be implemented using a two-way automaton with output.  The only place where more than one pebble is needed is to implement squaring; this requires two pebbles. Note that although each building block of polyregular functions requires at most two pebbles, the compositions of these building blocks require more pebbles, because the composition of pebble transducers with $k$ and $m$ pebbles is implemented by a pebble transducer with at least $km$ pebbles. 
	
	In general, an unbounded number of pebbles is necessary to capture all polyregular functions. This is because for a $k$-pebble automaton, if the input has length  $n$, then the output has length at most $O(n^k)$, since the number of configurations is at most $|Q| \cdot n^k$.  On the other hand, there are polyregular functions (squaring iterated multiple times) where the output size is given by a polynomial of arbitrarily high degree.	
\end{proof}

%

\section{For-transducers}
\label{sec:for-programs}
The third description of polyregular functions uses a type of imperative programming language.
A \emph{for-transducers} is a type of program where loops to range over positions of the input word. The power of the programming language is  constrained to the point that it can be simulated by a pebble automaton. 
We begin with  an example for-transducer that computes the squaring function 
\begin{align*}
\set{\mathtt a,\mathtt b}^* \to \set{\mathtt a,\mathtt b, \underline {\mathtt a}, \underline {\mathtt b} }^*
\end{align*}
\mypic{26}   
The above program illustrates almost all programming constructs  allowed in for-transducers: a {\tt for} loop ranging over positions in the input word, an {\tt if} conditional, {\tt a(x)} for checking if position {\tt x}  in the input word has label {\tt a}, and an instruction {\tt output a} which appends {\tt a} to the output word. (There is one more feature, namely Boolean variables, which will be explained below.) 

The for loop is of the form 
\begin{verbatim}
    for x in y..z
\end{verbatim} 
where {\tt x} is the iterating  variable which  is introduced and  bound by the loop, and each of {\tt y} and {\tt z} is either some variable that has been bound  in a containing loop, or one of the keywords {\tt first} or {\tt last}\xspace  representing the first and last positions in the input word. The body of the  loop is  executed for all positions  {\tt x} in the interval from {\tt y} to {\tt z} including both endpoints, in order that is either  increasing or decreasing order, depending on whether {\tt y} is smaller or bigger than  {\tt z}. Here is an example illustrating the order of positions
\mypic{27}

The final feature of the language is Boolean variables, which are illustrated in the following program, which  computes the parity of length of the input word:
\mypic{28}

In general, a for-transducer uses  two types of variables: \emph{position variables}, which range over positions of the input word, and \emph{Boolean variables} such as {\tt odd} above, which can have value {\tt true} or {\tt false}. Position variables are introduced in {\tt for} loops and one cannot use  assignment {\tt :=} to change the value of a position variable.  Boolean variables can be declared in any block, e.g.~in the scope of a {\tt for} loop, and assignments for   Boolean variables are allowed, i.e.,~one can write {\tt b := true} or {\tt b := false}.

This completes the description of for-transducers. It is straightforward to see that every string-to-string function  recognized by for-transducer is also recognized by a pebble transducer; we will discuss this more detail in Section~\ref{sec:for-to-pebble}. The converse  is also true and will follow from the results in Part II. The difficulty in transforming a pebble transducer into a for-transducer is that the loops in a for-transducer have a fixed direction (they sweep the input either from left to right or from right to left), while the head of a pebble transducer can move both left and right. 

\begin{myexample}
        \label{ex:for-sequential}
        Every sequential function is recognised by a for-transducer. The for-transducer does only one for loop of type
        \begin{center}
            {\tt for x in first..last}
        \end{center}
        to scan through all positions in the input word. 
        Boolean variables are used to maintain the state of the automaton underlying the sequential transducer.  To simulate rational functions, one would use two nested loops.
\end{myexample}
     
\paragraph*{First-order definable for-transducers.}
There is also  a fragment of for-transducers which corresponds to the first-order restriction. Recall that all Boolean variables are initialized to {\tt false}.   A for-transducer is called \emph{first-order} if Boolean variables can only go from {\tt false} to {\tt true}, but not back. In other words,  the only allowed update for Boolean variables is {\tt q := true}. For the first-order restriction, it is important that Boolean variables can be  declared inside for loops, and that they are reinitialized to {\tt false} at each iteration of the loop that they are declared in. This is illustrated in the following example. 
\mypic{29}

\begin{myexample}\label{ex:for-transducers-do-formulas} In Example~\ref{ex:for-sequential} we showed how to simulate a sequential function using a for-transducer. If we want to simulate a first-order sequential function using a first-order definable for-transducer, we use a different approach, where the number of nested for loops will correspond to the quantifier depth of the formula. This approach is illustrated below for languages, i.e.~functions with yes/no outputs.

    Consider a formula of first-order logic defining a language of words. Here is a corresponding for-transducer, which outputs 1 on words in the language and 0 for words outside the language. The for-transducer is obtained using a straightforward implementation of quantifiers using for loops; and the construction is linear in the size of the formula. For example, if the formula is
    \begin{align*}
        \forall x\ \exists y\ a(x) \Rightarrow x<y \land b(x)
    \end{align*}
    then the corresponding for-transducer looks like this: 
    \mypic{39}
For the above program, it is important that the Boolean variable {\tt atleastoney} gets reinitialized at each iteration of the {\tt for} loop that binds variable {\tt x}. 
Note that the transducer needs only first-to-last loops; such loops would no longer be sufficient for transducers (as opposed to yes/no formulas). 
\end{myexample}
\section{Polynomial list functions}
\label{sec:list-programs}
We now present the fourth definition of the polyregular functions, which uses a functional programming language. 
Roughly speaking, a polynomial list function is functional program that uses a list data type,  some (higher-order) atomic functions for list manipulation like
\begin{align*}
  \mapterm : (\tau \to \sigma) \to \tau^* \to \sigma^*
\end{align*}
and which has no recursion. To model functional programs, we use the $\lambda$-calculus. As  programming constructs we allow $\lambda$-abstraction and application, but no general recursion mechanisms (apart from those implicit in the atomic functions like $\mapterm$). The language is defined so that when restricted to string-to-string functions, the polynomial list functions have  the same expressive power as the polyregular functions. In particular, the polynomial list functions have outputs of at most polynomial size.

 Before giving the formal definition of  polynomial list functions, we begin with some examples that illustrate the programming constructs and atomic operations that are allowed.

\begin{myexample}[List duplication]\label{ex:flatterm}
	Suppose that $\tau$ is some type, e.g.
\begin{align*}
	\tau= \set{1,2,3,4,5,6}	
\end{align*}	
One of the atomic functions is 
\begin{align*}
	\flatterm^\tau : (\tau^*)^* \to \tau^* 
\end{align*}
which flattens the input list as illustrated in the following example
\begin{align*}
[[1,2,3],[4,5],[],[6]] \quad \mapsto \quad [1,2,3,4,5,6].	
\end{align*} 
Using $\flatterm$, we can write a function
\begin{align*}
\lambda x : \tau^* \ . \ \flatterm_\tau \ [x,x]	
\end{align*}
which duplicates the input list as in the following example
\begin{align*}
[1,2,3,4] \quad \mapsto \quad [1,2,3,4,1,2,3,4]	
\end{align*}
\end{myexample}

\begin{myexample}[Squaring]\label{ex:mapterm} 
Another atomic function is 
\begin{align*}
	\mapterm^{\tau \sigma} : (\tau \to \sigma) \to \tau^* \to \sigma^* \qquad \text{for every types }\tau,\ \sigma
\end{align*}
which applies the function in the first argument to every element of the list in the second argument.  
Using the  function described above, we can write the following program 
\begin{align*}
\lambda x : \tau^*\ .\  \mapterm_{\tau \tau^*}\  (\lambda y : \tau\ .\  x)\ x,
\end{align*}
which has type $\tau^* \to (\tau^*)^*$, and which replaces each element of the input list by  the list itself, as illustrated in the following example
\begin{align*}
[1,2,3,4] \quad \mapsto \quad [[1,2,3,4],[1,2,3,4],[1,2,3,4],[1,2,3,4]].
\end{align*}
\end{myexample}

\begin{myexample}[Squaring, continued]\label{ex:splitterm}
	The operation in Example~\ref{ex:mapterm} is similar to the squaring operation described in Section~\ref{sec:polyregular}; but it is weaker in the sense that it does not have the underlines which distinguish consecutive copies of the input list. To recover the squaring function, we use a function 
	\begin{align*}
	\splitterm^\tau : \tau^* \to (\tau^* \times  \tau^*)^* \qquad \text{for every type }\tau
\end{align*}
which splits the input list in all possible ways, as illustrated below:
\begin{eqnarray*}
	& [1,2,3,4] \\ &  \downmapsto \\ &[([1,2,3,4],[]),([1,2,3],[4]), ([1,2],[3,4]), ([1],[2,3,4]), ([],[1,2,3,4])]	
\end{eqnarray*} 
If the input list has length $n$, then the output contains $n+1$ pairs, with the $i$-th pair for $i \in \set{0,1,\ldots,n}$ having the first $n-i$ elements on the first coordinate, and the remaining $i$ elements on the second coordinate. 
\end{myexample}

\paragraph*{Syntax and semantics}
We now give a more formal description of the syntax and semantics of polynomial list functions. We begin by describing the types in the language and their associated semantic domains.
\label{sec:syntax-and-semantics-of-list-programs}
\begin{definition}[Types and their domains] Every finite set is a type\footnote{An alternative, more minimalistic, approach would be to have only one atomic type, namely the empty set $\emptyset$.  Then $\emptyset^*$ would be a type which has only one element $[]$, and larger finite sets could  be defined using disjoint union, e.g.~a three element type could be encoded as 
\begin{align*}
	(\emptyset^* + \emptyset^*) + \emptyset^*.
\end{align*}
Since such a representation of finite sets would be cumbersome to use, we choose to define the type system so that it has finite sets as atomic types.}, and if  $\tau, \sigma$ are types, then so are:
		\begin{align*}
\tau^*\qquad \tau+\sigma \qquad \tau \times \sigma \qquad \tau \to \sigma
\end{align*}
For a type $\tau$, its associated domain $\sem \tau$ is defined as follows: 
	\begin{itemize}
		\item if $\tau$ is a finite set, then $\sem \tau = \tau$;
		\item $\sem{\tau \to \sigma}$ is the set of all total functions from $\sem \tau$ to $\sem \sigma$;
		\item $\sem{ \tau \times \sigma}$ is the product $\sem \tau \times \sem \sigma$, likewise for disjoint union $+$;
		\item $\sem {\tau^*}$ is the set of all finite lists of elements in $\sem \tau$.	\end{itemize}
\end{definition}

There is not much difference between $\tau$ and $\sem \tau$, except that the former can be viewed as a syntactic expression of finite size, while the latter is a typically infinite set. An important issue is that the functions in $\sem {\tau \to \sigma}$ are total, because we only consider always terminating programs. 

\begin{definition}\label{def:set-of-terms}
	Assume some set\footnote{We gloss over the fact that there is no ``set of all finite sets''. A more formal approach would require distinguishing only some representative family of finite sets that forms a set.
} of variables, each variable having an associated type, such that for every type there are infinitely many variables of that type.
The  terms and their associated types are generated by the following rules
\begin{enumerate}
	\item\label{it:terms-var} Every variable is a term, of same type as the variable.
	\item\label{it:terms-app} If $M : \tau \to \sigma$ and $N : \tau$ are terms\footnote{For brevity, we write ``$M : \tau$ is a term'' instead of ``$M$ is a term of type $\tau$''.}, then so is  $MN : \sigma$.
	\item\label{it:terms-abs} For every variable $x : \tau$, if  $M : \sigma$ is a term then so is $(\lambda x : \sigma \ M) : \tau \to \sigma$.
	\item \label{it:terms-finite} If $\tau$ is a finite set and  $a \in \tau$,  then $a : \tau$ is a term;
	\item \label{it:terms-pair} If $M : \tau$ and $N : \sigma$ are terms, then so is  $(M,N) : \tau \times \sigma$.
		\item \label{it:terms-coproj} If $\tau_0,\tau_1$ are types and $M : \tau_i$ is a term, then $\coprojterm i M : \tau_0 + \tau_1$ is a term.
 \item \label{it:terms-list} If $M_1 : \tau,\ldots,M_k :\tau$ are terms, then $[M_1,\ldots,M_k] :\tau^*$ is a term.
\item \label{it:terms-atomic} All atomic programs in Figure~\ref{fig:atomic-combinators-1} are terms.
\end{enumerate}
\end{definition}
The pair, list and coprojection constructors in items~\ref{it:terms-pair},~\ref{it:terms-list} and~\ref{it:terms-coproj} could be replaced by extending the operations from Figure~\ref{fig:atomic-combinators-1} with a pairing function, an empty list constant, an append  function, and a coprojection function.

\begin{figure}
\begin{eqnarray*}
\atomicfunctionbis{\isterm^\tau_a}{\tau \to \set{\mathtt{true}}+ \set{\mathtt{false}}}{{\tt true} if the input is $a \in \tau$ and {\tt false} otherwise} {(defined only when $\tau$ is a finite set)}
\atomicfunction{\projterm i^{\tau_0 \tau_1}}{(\tau_0 \times \tau_1) \to \tau_i}{projection to coordinate $i \in \set{0,1}$}
\atomicfunction{\caseterm^{\tau_0 \tau_1 \sigma}}{(\tau_0 \to \sigma) \to (\tau_1 \to \sigma) \to (\tau_0 + \tau_1) \to \sigma}{apply first or second argument, according to  case of third argument}
\atomicfunction{\mapterm^{\tau \sigma}}{(\tau \to \sigma) \to \tau^* \to \sigma^*}{apply  function to all elements in the list}
\atomicfunction{\headterm^{\tau}}{\tau^* \to (\tau + \set \bot)}{return first element, or $\bot$ when empty}
\atomicfunction{\tailterm^{\tau}}{\tau^* \to (\tau^* + \set \bot)}{return all but first element, or $\bot$ when empty}
\atomicfunctionbis{\flatterm^{\tau }}{(\tau^*)^* \to \tau^*}{concatenate list of lists:}{[[1,2],[],[3],[4,5]] $\mapsto$ [1,2,3,4,5]}
\atomicfunctionbis{\splitterm^{\tau }}{\tau^* \to (\tau^* \times \tau^*)^*}{return all possible ways of splitting the list in two parts}{[1,2,3,4] $\mapsto$ [([],[1,2,3,4]), ([1],[2,3,4]), ([1,2],[3,4]), ([1,2,3],[4]), ([1,2,3,4],[]) }
\atomicfunction{\groupterm^G}{G^* \to G}{return the product (in the group) of the input list}
\end{eqnarray*}
  \caption{\label{fig:atomic-combinators-1}Atomic polynomial list programs. For every types $\tau, \tau_0, \tau_1, \sigma$ and every finite group $G$, the above functions are polynomial list programs. The semantics of the functions is explained using Haskell code in Section~\ref{sec:haskell-atomic}.  }
\end{figure}

\paragraph*{Semantics.}
Since there is no recursion,  there is no difficulty in providing compositional denotational semantics, i.e.~assigning to each term $M : \tau$ an element of the domain $\sem \tau$ by induction on the size of term. For a term $M : \tau$ with free variables $x_1 : \tau_1,\ldots, x_n : \tau_n$,  its semantics 
	\begin{align*}
\sem M \in \sem {\underbrace{\tau_1 \to \cdots \to \tau_n}_{\text{environment}} \to \tau}
\end{align*}
	 is defined by induction on the size of $M$ in the natural way. The semantics of the terms defined in items~\ref{it:terms-var},~\ref{it:terms-app},~\ref{it:terms-abs},~\ref{it:terms-finite},~\ref{it:terms-pair}, and~\ref{it:terms-list} should be self-explanatory.  
	 The term $\coprojterm i$ in item~\ref{it:terms-coproj} represents the injection of type $\tau_i$ into the coproduct $\tau_0 + \tau_1$. Regarding item~\ref{it:terms-atomic}, the semantics of the atomic programs from Figure~\ref{fig:atomic-combinators-1} is explained in the Haskell code in Section~\ref{sec:haskell-atomic}.

\begin{definition}[Polynomial list functions]
	A \emph{polynomial list function} is the semantics $\sem M$ of some term without free variables. A  \emph{first-order polynomial list function}  is one that does not use the group product operations.
\end{definition}

The above definition covers functions with higher-order types, like
\begin{align*}
(\set{a,b}^* \to \set{c}^*) \to \set c^* \to \set{a,b,c}^*
\end{align*}
but  in the end, we will be interested in  programs of type $\tau^* \to \sigma^*$, where $\tau$ and $\sigma$ are finite sets. Nevertheless, these programs will typically involve subterms of other types, including higher-order types.

\begin{myexample}\label{ex:headtail}
	We write a program
\begin{align*}
\mathtt{headtwo}_\tau : \tau^* \to (\tau \times \tau) + \bot
\end{align*}
which returns the first two elements of the input list (or the error value if input list has length at most one).  The first idea that comes to mind is 
\begin{align*}
\lambda x : \tau^*\ . \ (\headterm_\tau \ x,\  \underbrace{\headterm_\tau \ (\tailterm_\tau\  x)}_{\text{does not type}}) 
\end{align*}
which is does not type, because the underlined part applies $\headterm_\tau$ to an argument that has type $\tau^* + \bot$. To write a properly typed program, we need error handling. For the error handling, which is admittedly cumbersome, we use $\caseterm$ and $\coprojterm\ $ to implement an error handler, which lifts a function without errors to a function with errors (which is similar to using the error monad in Haskell, but with our syntax being more verbose)
\begin{align*}
\mathtt{err}_{\tau \sigma} : (\tau \to \sigma) \to (\tau + \bot) \to (\sigma + \bot)	
\end{align*}
which is implemented by the following code
\begin{align*}
\lambda f : \tau \to \sigma\  \ . \ \overbrace{\caseterm^{\tau \bot (\sigma+ \bot)}\  \underbrace{(\lambda y : \tau\ . \  \leftterm^{\sigma \bot}\ 	(f\ y))}_{\tau \to (\sigma + \bot)} \ \underbrace{(\lambda y : \bot\ .\ \rightterm^{\sigma \bot} \bot)}_{\bot \to (\sigma + \bot)}}^{(\tau+\bot) \to (\sigma + \bot)}
\end{align*}
Using the above error handler, we can write the correct version of the program which outputs the first two elements of a list, namely
\begin{align*}
\lambda x : \tau^*\ . \ (\headterm_\tau \ x,\ (\mathtt{err}_{\tau^* \tau}\ \headterm_\tau) \ (\tailterm_\tau\  x))
\end{align*}
The last remaining issue is that the output type of the above program is
\begin{align*}
(\tau + \bot) \times (\tau + \bot) \qquad \text{instead of} 	\qquad  (\tau \times \tau) + \bot.
\end{align*}
To convert the type on the left into the type on the right, we use $\distrterm$.
\end{myexample}

We end this section with two non-examples of polynomial list functions, namely {\tt fold} and equality checks.

\begin{myexample}
		The language lacks a fold operation, which can be viewed as an evaluator of automata:
		\begin{align*}
			\mathtt{fold} : \overbrace{(\tau \times \sigma \to \tau)}^{\text{transition function}} \to \overbrace{\tau}^{\text{first state}} \to \overbrace{\sigma^*}^{\text{input word}} \to \overbrace{\tau}^{\text{last state}}
		\end{align*}
Such an operation would change the expressive power, because it would go beyond polynomial growth, and polynomial list functions have polynomial growth. For example, if we take 
\begin{align*}
	\mathtt{delta} = \lambda q\ \lambda a \ \mathtt{concat}[q,q]
\end{align*}
to be the function that doubles the list in the first argument regardless of the second argument, then
\begin{align*}
	\mathtt{fold} \ \mathtt{delta}\ [1]
\end{align*}
will be a program that inputs a list of length $n$ and outputs a list of ones of length $2^n$. In fact, using fold one can get functions of primitive recursive growth, see~\cite[4.1]{hutton1999tutorial}. Also, for the same reasons as described in~\cite{hutton1999tutorial}, {\tt fold} would break preservation of regularity under preimages, see Theorem~\ref{thm:regular-continuous}. 
\end{myexample}

\begin{myexample}
		Checking equality on lists over a binary alphabet, i.e.~a function
		\begin{align*}
			\mathtt{eq} : \set{0,1}^* \to \set{0,1}^* \to \set{0,1} 
		\end{align*}
		would also change the expressive power of the language, because it would break preservation of regularity under preimages. 
\end{myexample}

\pagebreak
\part{Equivalence of the models}
This part shows that all of the models described in Part I are equivalent. There are two variants of the equivalence result: the first-order one, and the general one.

\begin{theorem}\label{thm:main}
	The models $\blue{1,2,3,4}$ listed below define the same classes of string-to-string transductions. Same for $\red{1,2,3,4}$.
	\begin{center}
		\begin{tabular}{ll}
1. \blue{first-order pebble transducers} & \red{1. pebble transducers}\\		
2. \blue{first-order polyregular functions} & \red{2. polyregular functions}\\
3. \blue{first-order polynomial list functions} & \red{3. polynomial list  function}\\
4. \blue{first-order for-transducers} & \red{4. for-transducers}\\
\end{tabular}
	\end{center}
\end{theorem}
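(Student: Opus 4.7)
My plan is to close a cycle of inclusions through pebble transducers, which by the corollary to Theorem~\ref{thm:pebble-closed-under-composition} already contain the polyregular functions. Five further inclusions are needed: (i) polyregular $\subseteq$ polynomial list, (ii) polyregular $\subseteq$ for-transducers, (iii) polynomial list $\subseteq$ pebble, (iv) for-transducers $\subseteq$ pebble, and (v) pebble $\subseteq$ polyregular. Combining (v) with the already-known polyregular $\subseteq$ pebble gives polyregular $=$ pebble; adding (iii) and (i) yields polynomial list $=$ polyregular, while (iv) together with (ii) yields for-transducers $=$ polyregular. I would track the first-order restriction throughout, using Theorem~\ref{thm:krohn-rhodes} to identify the first-order fragment of sequential functions as compositions of counter-free two-state sequentials; this pairs cleanly with ``no $\groupterm$'' on the list side and ``no Boolean resets'' on the for-transducer side.

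For the four ``easy'' inclusions (i)--(iv) the work is essentially translation. For (i) I realise each atomic polyregular operation as a closed $\lambda$-term: a sequential function by $\mapterm$ of a transition-labelling followed by a monoid product (by $\groupterm$ in general, and by a Krohn--Rhodes recipe built from $\mapterm$, $\splitterm$ and $\caseterm$ in the counter-free case); squaring by $\mapterm$ over $\splitterm$, with the split point marking the underlined letter; iterated reverse by $\splitterm$ at the separators, a recursion-free per-block reversal built from $\splitterm$ and $\mapterm$, and then $\flatterm$. Composition of polyregular building blocks is just function composition in the $\lambda$-calculus. For (ii), Example~\ref{ex:for-transducers-do-formulas} already simulates first-order sequentials with nested for loops; squaring is two nested loops, and iterated reverse two nested loops with the inner loop running last-to-first inside the current block; composition of polyregular building blocks is implemented by a nested-loop construction where outer loops are parametrised by the configurations of the inner transducer, mirroring the pebble simulation in (iv). Inclusion (iv) itself is a direct simulation: each enclosing for loop is realised by one pebble pushed on the stack and swept in the declared direction, and the absence of Boolean resets in the first-order fragment makes the underlying automaton counter-free. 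Finally, (iii) invokes the observation flagged in the introduction that pebble transducers can carry out $\beta$-reduction on a stack-based representation of a $\lambda$-term: since polynomial list functions have no fixed-point combinator, iterating the reduction terminates, and the atomic combinators that need to be evaluated directly on string-typed data are each individually polyregular.

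The main obstacle is (v), the inclusion pebble $\subseteq$ polyregular, which I would prove by induction on the pebble count $k$. The base $k=1$ is the classical reduction of two-way deterministic transducers to compositions of sequential functions via a Shepherdson-style zigzag decomposition, with the first-order version restricted to counter-free transition monoids. For the inductive step I analyse the maximal superficial subcomputations of the topmost pebble in the sense of Lemma~\ref{lem:all-pebble-mso}: between each push and its matching pop the lower $k-1$ pebbles remain fixed, and the topmost head performs a two-way walk whose transition relation is parametrised by the lower pebble tuple. To reconstruct how this walk produces output as the lower pebbles evolve, I would apply Simon's Factorisation Forest Theorem to factorise the sequence of moves into a bounded-depth forest of short products and idempotent powers, and Krohn--Rhodes to decompose each factor into the two-state counter-free and, in the general case, group-like sequential components of Theorem~\ref{thm:krohn-rhodes}. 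The resulting output can then be reconstructed as a polyregular post-processing of the squared input, where the squaring supplies a second coordinate addressing the lower pebble. Splicing these polyregular descriptions into the $(k-1)$-pebble skeleton given by the inductive hypothesis yields the required polyregular expression for the whole transducer. In the first-order case Simon's theorem delivers an aperiodic forest and Krohn--Rhodes omits the group factors, keeping the construction within first-order polyregular functions.
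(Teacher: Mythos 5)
Your global architecture is sound but does more work than necessary: the paper closes a single cycle (for-transducers $\subseteq$ pebble $\subseteq$ polyregular $\subseteq$ polynomial list $\subseteq$ for-transducers), whereas you prove five separate inclusions. That is only a matter of economy. The genuine problem is in your inclusion (v), pebble $\subseteq$ polyregular, specifically the inductive step. You peel off the \emph{topmost} pebble: between a push and its matching pop the lower $k-1$ pebbles are frozen and the top head performs a two-way walk parametrised by the lower pebble tuple. But then the residual ``$(k-1)$-pebble skeleton'' is not a pebble automaton at all: to continue after a pop it must know the outcome (return state) of the excursion, and that outcome is a $(k-1)$-ary query on the positions of the lower pebbles. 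A pebble automaton's transition function only sees the letter under the head and the order comparisons among pebbles; it cannot evaluate an arbitrary $(k-1)$-ary \mso query on its own pebble tuple, and no rational relabelling of the input can precompute a query of arity $>1$. So the inductive hypothesis does not apply to the skeleton, and the circularity (the skeleton's run depends on excursion summaries, which depend on the skeleton's run) is unresolved for $k\ge 3$. Your remark that ``squaring supplies a second coordinate addressing the lower pebble'' works only when there is a single lower pebble, i.e.\ $k=2$. The paper's Lemma~\ref{lem:reduce-pebbles} peels from the \emph{bottom} precisely to avoid this: when only pebble $1$ is present, the subcomputation above it depends on a single position, so the summaries are unary \mso queries supplied by a rational relabelling $g$, the bottom pebble's dynamics become a genuine $1$-pebble automaton over the relabelled alphabet, and the residual excursions are genuine $(k-1)$-pebble transducers run on each configuration of that $1$-pebble run. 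Note also that this forces a stronger statement for $k=1$ than the one you use as a base case: one needs that the \emph{run} (the full sequence of configurations, of quadratic length) of a two-way automaton is polyregular (Lemma~\ref{lem:run-is-polyregular}), not merely its output, and that is where the factorisation-forest work actually lives; it is not the ``classical'' Shepherdson-style statement you invoke.

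Two smaller points. In (iii) you argue that $\beta$-reduction ``terminates'' because there is no fixed point combinator; termination alone is not enough, since the simulation is a composition of \emph{finitely many} transducers fixed in advance of the input. You need a reduction strategy (the paper uses parallel reduction of all maximal redexes) whose number of rounds is bounded by a constant depending only on the term, not on the input list's length; this is the content of Lemma~\ref{lem:reduction-lemma}(2) and should be stated explicitly. In (i), ``$\mapterm$ of a transition-labelling followed by a monoid product'' does not by itself give a sequential function, whose output at position $i$ depends on the \emph{prefix} product up to $i$; you need $\splitterm$ to expose all prefixes before applying the product, which is indeed available but should be said.
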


Polynomial list programs can define functions other than string-to-string functions. The theorem, in items \blue 3 and \red 3, talks only about string-to-string functions defined by polynomial list functions, as opposed to the more general class of functions with higher order types that can be defined. 

 The  equivalences  are shown according to the following plan
\begin{align*}
\xymatrix{
   {\small \text{for transducer}} \ar[rr]^{\text{Section~\ref{sec:for-to-pebble}}} & & {\small \text{pebble transducer}} \ar[d]^{\text{Section~\ref{sec:pebble-to-polyregular}}}  \\
{\small \text{polynomial list program}} \ar[u]^{\text{Section~\ref{sec:program-to-for}}} & &   {\small\text{polyregular}} \ar[ll]^{\text{Section~\ref{sec:polyregular-to-programs}}}
} 	
\end{align*}

Before showing all of the implications, we give some high level comments on the difficulties involved and the techniques used to solve them. 

The transformation from for-transducers to pebble transducers is straightforward, and the only nontrivial part is showing that the first-order restriction in a for-transducer (the Boolean variables can only change value once) translates to the first-order restriction in a pebble transducer (there is a first-order formula defining reachability on configurations). This is proved using the same ideas as in Lemma~\ref{lem:all-pebble-mso} about \mso definability of reachability in pebble automata.

The transformation from a pebble transducer to a polyregular function is the most technical part in the proof of Theorem~\ref{thm:main}. The general idea is that Simon's Factorization Forest Theorem is applied to find  repeating patterns in the movement of the head of a pebble transducer; and transducers where the head moves using such repeating patterns can be simulated using the atomic polyregular functions.

The transformation from polyregular functions to polynomial list programs is conceptually quite straightforward:  the polyregular functions are designed to have  minimal syntax, while polynomial list programs are designed to be a usable programming language. The hard part is  simulating a finite automaton  (which is the model underlying rational functions, which are one of the atomic types of polyregular functions) can be simulated by a polynomial list program. The difficulty is that polynomial list programs  do not have any explicit iteration mechanisms. To prove this, we use a result from~\cite{DBLP:conf/lics/BojanczykDK18}, which solved  the same kind of problem: simulating a rational function using a restricted functional programming language. 

Finally, to transform a polynomial list program into a for-transducer, we use two main ideas. The first main idea is that for-transducers are closed under composition, which is proved using the same kind of stack-of-stacks idea as was used in Theorem~\ref{thm:pebble-closed-under-composition} about closure of pebble transducers under composition. The second main idea is to use a term rewriting semantics of polynomial list programs (namely, normal forms under $\beta$-reduction), and  then to  show this semantics can be implemented using a composition of finitely many for-transducers. 
\section{For-programs to pebble-transducers}
\label{sec:for-to-pebble}

In this section we show that  for-transducers can be transformed into pebble transducers, as stated in the following lemma. 

\begin{lemma}\label{lem:for-to-pebble}
    Every string-to-string function recognized by a for-transducer is  recognized by a  pebble-transducer.   Furthermore, if the for-transducer is first-order definable, then so is the pebble transducer. 
\end{lemma}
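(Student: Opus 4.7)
The plan is to translate the for-transducer directly by letting each position variable bound by a for-loop correspond to a pebble, with the stack discipline matching the lexical nesting of the loops. The values of all Boolean variables currently in scope, together with the program counter, are folded into the finite state of the pebble transducer. Entering \texttt{for x in y..z} pushes a new pebble (which initially duplicates the head) and walks it to the appropriate starting endpoint, using pebble comparison to detect arrival; one step of the iteration is a head move of $+1$ or $-1$; and the loop exits by popping. An \texttt{output a} instruction is realised via the output function attached to the state in which that instruction lives. The only nonobvious point of the translation is a test \texttt{a(x)} where \texttt{x} is not the head, since the transition function of a pebble transducer only inspects the label under the head. I will handle this using a single ``scratch'' pebble: push a copy of the head, walk it to \texttt{x}'s position (detected by pebble comparison), read its label into the state, and pop. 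In all, the number of pebbles needed is the maximum nesting depth of for-loops plus one.

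The main obstacle is the first-order case, where I need to show that if the Boolean variables only change from \texttt{false} to \texttt{true}, then the reachability relation of the constructed pebble transducer is definable in first-order logic. The argument in Lemma~\ref{lem:all-pebble-mso} is insufficient because it uses a monadic fixpoint. Instead, I will proceed by structural induction on subprograms: for each subprogram $B$ with bound position variables $\bar x$ and in-scope Boolean variables $\bar b$, I construct first-order formulas $\mathrm{run}_B(\bar x, \bar b, \bar b')$ saying ``executing $B$ turns Boolean values $\bar b$ into $\bar b'$'', together with a variant $\mathrm{reach}_B$ that stops mid-run at a specified pebble position and program counter. Assignments, conditionals and sequential composition are routine. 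The one substantial case is a for-loop, where the key observation is that because the Booleans in scope are monotone, the cumulative effect of iterating the body over the loop range is determined by a bounded amount of data: the at most $|\bar b|$ positions at which new Booleans become true, together with the subset of Booleans newly set at each such position. Enumerating these finitely many ``race shapes'' as a disjunction, and using the inductive hypothesis both to express ``at iteration $y$, starting from Booleans $\bar c$, the body produces exactly this new subset'' and ``between two consecutive race positions the body never produces any new Boolean'' as first-order conditions, gives a first-order formula for $\mathrm{run}$ on the loop.

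Putting these pieces together produces the formulas $\varphi^{ij}_{pq}$ demanded by Definition~\ref{def:fo-definable-pebble-automaton}: a pebble configuration in the constructed transducer is precisely a program counter, Boolean values, and a stack of active for-loop iteration pebbles, and $\mathrm{reach}$ describes exactly when one such configuration leads to another. I expect the race-shape bookkeeping to be the delicate part; the translation itself and the structural induction over non-loop constructs are routine.
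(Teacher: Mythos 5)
Your proposal is correct and follows essentially the same route as the paper: position variables become pebbles, Boolean variables and the program counter become finite state, and the first-order case rests on the same key observation that monotone Booleans force any run of a loop to split into a bounded number of phases with constant Boolean valuation, each expressible by universally quantifying the body's one-iteration formula over the quiet stretch, with the whole run a bounded concatenation of such phases. The paper organizes this as an induction on loop-nesting depth after normalizing to prenex form rather than your structural induction with explicit ``race shapes,'' and it glosses over the label test for non-head pebbles that you handle with a scratch pebble, but these are presentational rather than substantive differences.
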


Without the  first-order restriction, the simulation  is completely straightforward: the simulating  pebble transducer  stores the memory state of the simulated for-program, i.e., the instruction that is about to be executed, and the valuation of the  position and Boolean variables. 
The  values of the Boolean variables are stored in the state of the pebble transducer, while the values of the position variables are stored in the pebbles. Stack discipline for the pebbles follows from the nesting of the loops in the for-program. This construction is described in more detail below, in the first-order case, where additional attention is needed.

In the first-order case,  we have to show that if the for-transducer is first-order definable (which means that Boolean variables can only go from {\tt false} to {\tt true}) then the simulating pebble transducer is also first-order (which means that the reachability relation on configurations is definable in first-order logic).    To prove this, it is convenient to assume that the for-transducer is in  \emph{prenex normal form} as defined in the  following picture:
\mypic{30}

A straightforward structural induction shows the following result. 
\begin{lemma}\label{lem:for-normal-form}
    For every for-program there is a for-program in prenex normal form which recognizes the same function.  The construction preserves first-order definability.  
\end{lemma}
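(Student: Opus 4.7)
The plan is a structural induction on the for-program, using the interpretation (by analogy with logic) that a program is in prenex normal form when all for-loops sit at the outermost positions. Concretely, I would take a program in prenex form to be a sequence of \emph{prenex blocks}, each of which is a nested chain ``for $x_1$ \ldots\ for $x_k$ $B$'' whose body $B$ is loop-free (built only from if-statements, Boolean declarations/assignments, and output statements).

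The base cases (outputs, Boolean operations, loop-free if-statements) are trivially single prenex blocks of depth zero. Sequential composition $P_1; P_2$ is handled by concatenating the two sequences of prenex blocks produced by the induction hypothesis. For an if-statement ``if $c$ then $P_1$ else $P_2$'' whose branches are already in prenex form by induction, I would hoist the contained for-loops outside the conditional by first saving $c$ in a fresh Boolean $\mathtt{cv}$ via a single conditional assignment $\mathtt{cv} := \mathtt{true}$ (and similarly a dual $\mathtt{ncv}$ if the ``else'' branch is nonempty), and then emitting the prenex blocks of $P_1$ and $P_2$ with their innermost loop-free bodies guarded by $\mathtt{cv}$ and $\mathtt{ncv}$ respectively. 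Under the first-order restriction this is harmless since each auxiliary Boolean only transitions from $\mathtt{false}$ to $\mathtt{true}$.

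The substantive case is a for-loop ``for $x$ in $a..b$ $P$'' whose body is inductively a sequence $B_1; \ldots; B_m$ of prenex blocks. One cannot in general split ``for $x$ $(B_1; \ldots; B_m)$'' into ``for $x$ $B_1$; \ldots; for $x$ $B_m$'', since Boolean variables declared outside the for-loop may be read and written across the $B_i$ within a single iteration of $x$. Instead I would combine the $B_i$ into a single nested for-loop by adding outer sweeps indexed by ``corner'' positions (for instance using the idiom ``for $y$ in first..last: for $z$ in first..last'' with guards $y=\mathtt{first}$, $y=\mathtt{last}$, etc.) to simulate phase $i \in \{1,\ldots,m\}$, running $B_i$'s innermost body during the $i$-th phase. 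Placing ``for $x$'' on top of this combined block then yields a single prenex block equivalent to the original ``for $x$ $P$''.

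The main obstacle is the correctness of this phase-combination encoding in the for-loop case: one must check that the simulated interleaving faithfully reproduces the original iteration-by-iteration semantics, including Boolean state that persists across iterations of $x$. For the first-order part of the statement, the phase-indicator Booleans used in the combination must be declared inside the appropriate enclosing for-loop so that they reset to $\mathtt{false}$ each iteration and only ever flip $\mathtt{false}\to\mathtt{true}$ within a single iteration, matching Definition of ``first-order for-transducer'' verbatim. Once these scoping details are in place, the induction closes and produces a prenex program equivalent to the original, with first-order definability preserved.
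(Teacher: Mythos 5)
The paper offers essentially no proof here --- its entire argument is the sentence ``a straightforward structural induction shows the following result'' --- so your proposal is exactly the elaboration being waved at, and the shape of your induction is right. The two genuinely non-trivial cases are the ones you isolate: hoisting loops out of conditionals by caching the guard in a fresh Boolean (which only ever moves from \texttt{false} to \texttt{true}, hence respects the first-order restriction), and merging a sequential composition $B_1;\ldots;B_m$ of blocks under a common outer loop via a phase encoding with auxiliary position loops guarded by tests such as \texttt{y = first}; since that encoding preserves the order in which the kernels fire, it also preserves the evolution of shared Boolean state. One discrepancy is worth fixing: the prenex normal form the paper actually uses downstream (Lemma~\ref{lem:pebble-to-for-formula} and the composition argument in Section~\ref{sec:for-composition}) is a \emph{single} chain of $n$ nested loops with one loop-free kernel --- configurations there are $n$-tuples of positions and the output is a product over all $n$-tuples --- whereas your induction invariant delivers a \emph{sequence} of such blocks. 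You already hold the tool needed to close this gap (one more application of your phase combination at the top level), but as written your induction does not terminate in the normal form the rest of the paper relies on. Two smaller scoping points should also be made explicit: the cached guard $\mathtt{cv}$, like the phase bookkeeping, must be declared inside the innermost enclosing loop so that it reinitializes at every iteration (you say this for the phase indicators but not for $\mathtt{cv}$); and intermediate-level Boolean declarations cannot simply be pushed into the kernel in the first-order case, since explicit resets to \texttt{false} are forbidden there, so the normal form must permit declarations between consecutive \texttt{for} headers.
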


Using prenex normal form, we complete the proof of Lemma~\ref{lem:for-to-pebble} in the first-order case. Consider a first-order for-transducer $f$ in prenex normal form.  Suppose that $f$ has $n$ position variables. Define a \emph{configuration of $f$}  in an input word $w$ to be a tuple 
\begin{align*}
(q,x_1,\ldots,x_n) 
\end{align*}
where $q$ is a valuation of the Boolean variables and $x_1,\ldots,x_i$ are positions in $w$. The idea is that the configuration represents the program state just before executing the kernel of $f$, assuming that the variables have values as given in the configuration. The for-transducer begins in the configuration where $q$ maps all Boolean variables to {\tt false}, and where $x_i$ is set to either the first or last position, depending on the type of for loop that binds variable $x_i$. 

The reachability relation on configurations is defined in the natural way: one configuration is reachable from another one (in a fixed input word) if it appears later in the computation. We write
\begin{align*}
    w \models (q,x_1,\ldots,x_n)  \to^* (p,y_1,\ldots,y_n) 
\end{align*}
to say that in the input word $w$, the for-transducer can go from configuration $(q,x_1,\ldots,x_n)$ to $(p,y_1,\ldots,y_n)$. For a configuration, define the \emph{associated output} to be the output produced by the  kernel of the for-transducer assuming that the values of the  variables (position and Boolean) are as in the configuration just before executing the kernel. Because the kernel does not change the position variables, and can only test their relative order and labels in the input word,   the associated output depends only on $q$ and the quantifier-free type of the positions $x_1,\ldots,x_n$ with respect to the order and labelling predicates on positions in the input word. 

The simulating pebble automaton stores $q$ in its state and uses at most $n$ pebbles to store the positions $x_1,\ldots,x_n$. Therefore, to show that the simulating pebble automaton is first-order definable (and thus complete the proof of Lemma~\ref{lem:for-to-pebble}), it is enough to show that the reachability relation on configurations is definable in first-order logic. This is proved  the following lemma.

\begin{lemma}\label{lem:pebble-to-for-formula}
    Let $p,q$ be valuations of the Boolean variables and let  $i \in \set{0,\ldots,n}$. There is a first-order formula 
\begin{align*}
    \varphi_{p,q}(\overbrace{x_1,\ldots,x_n}^{\bar x},\overbrace{y_1,\ldots,y_n}^{\bar y})
    \end{align*}
which holds in an input word $w$ with position tuples $\bar x, \bar y$ if and only if
\begin{align*}
    w \models (p, \bar x) \to^* (p, \bar y)
    \end{align*}
\end{lemma}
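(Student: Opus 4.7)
The plan is to exploit the monotonicity of Boolean variables in the first-order fragment---they can only flip from \texttt{false} to \texttt{true}---so that what would otherwise be an unbounded transitive closure on configurations becomes a bounded first-order recursion. Writing $m$ for the number of Boolean variables of $f$, any run changes the Boolean state at most $m$ times, and I would build the reachability formula by nesting $m$ layers of quantifiers, one per possible Boolean change.

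First, since $f$ is in prenex normal form, the nested for-loops enumerate the position tuples $(x_1,\ldots,x_n)$ in a lexicographic-style order whose $i$-th coordinate ranges between outer variables or the keywords \texttt{first}/\texttt{last} in a fixed direction. Using the order on positions I would write quantifier-free formulas $\bar x \preceq \bar y$ (meaning $\bar y$ occurs at or after $\bar x$ in the iteration order) and $\mathtt{succ}(\bar x) = \bar y$ (the immediate successor in this order). Next, the kernel contains no for-loops and its conditionals test only labels and order among positions, so for each pair $r, r'$ of Boolean valuations there is a quantifier-free formula $\mathtt{jump}_{r,r'}(\bar z)$ stating that a single execution of the kernel at $(r,\bar z)$ leaves the Boolean state equal to $r'$.

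With these ingredients I would inductively define formulas $\varphi^k_{r,r'}(\bar x,\bar y)$, for $r \subseteq r'$ (pointwise on Booleans) and $k\ge 0$, expressing that a run from $(r,\bar x)$ to $(r',\bar y)$ exists and changes the Boolean state at most $k$ times. The base case $\varphi^0_{r,r}(\bar x,\bar y)$ says $\bar x \preceq \bar y$ and that $\mathtt{jump}_{r,r}(\bar z)$ holds for every $\bar z$ with $\bar x \preceq \bar z \prec \bar y$; the universal quantifier here collapses a long stretch of boring steps into a single clause. The inductive step $\varphi^{k+1}_{r,r'}$ is the disjunction of $\varphi^k_{r,r'}$ with an existential pattern: there exist $\bar z \prec \bar y$ and $r''$ with $r \subsetneq r'' \subseteq r'$ such that $\varphi^0_{r,r}(\bar x,\bar z)$, $\mathtt{jump}_{r,r''}(\bar z)$, and $\varphi^k_{r'',r'}(\mathtt{succ}(\bar z),\bar y)$ all hold. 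Because monotonicity bounds the number of Boolean changes by $m$, setting $\varphi_{p,q}(\bar x,\bar y) := \varphi^m_{p,q}(\bar x,\bar y)$ yields the formula required by the lemma.

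The main obstacle is the temptation to directly encode the reflexive-transitive closure of the one-step successor relation on configurations, which is not first-order definable in general. The trick that unlocks this is monotonicity: every run decomposes into at most $m$ \emph{interesting} steps (Booleans flipping) separated by possibly long \emph{boring} stretches during which the state is constant. Boring stretches are captured by a single universal quantifier inside $\varphi^0_{r,r}$, so the whole reachability relation lives at a fixed first-order quantifier depth depending only on $f$.
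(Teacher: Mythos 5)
Your core idea---monotonicity of Boolean variables turns an unbounded transitive closure into a bounded number of ``interesting'' steps separated by ``boring'' stretches that a single universal quantifier can absorb---is exactly the engine of the paper's proof. But your proposal applies it \emph{globally} to the full Boolean valuation along the entire run, and that is where it breaks. In a for-transducer, Boolean variables may be declared \emph{inside} for loops, and they are reinitialized to {\tt false} at each iteration of the loop in which they are declared (the paper stresses that this reinitialization is essential for the first-order fragment; see Example~\ref{ex:for-transducers-do-formulas}, where {\tt atleastoney} is reset at every iteration of the outer loop). Consequently a variable declared in an inner loop can go {\tt false} $\to$ {\tt true} $\to$ {\tt false} $\to$ {\tt true} $\to \cdots$ arbitrarily many times over the whole run: monotonicity holds only \emph{within a single iteration of the declaring loop}, not across the run. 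Your claim that ``any run changes the Boolean state at most $m$ times'' is therefore false, and the decomposition into at most $m$ interesting steps does not exist. A second, related omission is that your step relation $\mathtt{jump}_{r,r'}(\bar z)$ records only what the kernel does; the reinitializations that happen when a loop counter advances (i.e.\ between $\bar z$ and $\mathtt{succ}(\bar z)$) change the Boolean state too and are not accounted for.

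The missing ingredient is the stratification by loop-nesting depth that the paper uses. One defines $i$-configurations tracking only the variables $X_i$ declared at loop level $i$ or above, and proves the reachability formula by induction on $i$ from the innermost loop outward. At level $i$, with the outer positions $\bar z$ fixed, the variables in $X_i$ genuinely are monotone (they are not reset while $\bar z$ stays fixed), so the bounded-number-of-changes argument applies at that level; the ``one full iteration of loop $i$'' relation, which hides all the resets of deeper variables, is first-order by the induction hypothesis for level $i+1$. Your construction is essentially the single-level instance of this scheme, and would be correct only for transducers whose Boolean variables are all declared outside every loop.
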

\begin{proof}    This proof follows the same structure as Lemma~\ref{lem:all-pebble-mso}. The only difference is that instead of using transitive closure (as was the case in Lemma~\ref{lem:all-pebble-mso}), we use the assumption that the variables in a first-order transducer can only go from {\tt false} to {\tt true}. 

    Let $i \in \set{1,\ldots,n}$. Define $X_{i}$ to be the set of Boolean variables in the for-transducer that are declared in the $i$-th for loop or before.      Define an $i$-configuration in an input word $w$ to be a tuple of the form 
    \begin{align*}
        (q,x_1,\ldots,x_i)
    \end{align*}
    where $x_1,\ldots,x_i$ are positions in $w$ and $q$ is a valuation of  $X_{i}$. This is like a configuration, except that it does not contain the valuations of all variables. Intuitively, an $i$-configuration describes the first configuration in an iteration of the $i$-th loop.   A configuration, in the sense used in the statement of the  lemma,  is the same thing as an $i$-configuration for $i=n$. Conversely, an $i$-configuration can be viewed as a configuration (i.e., an $n$-configuration) by setting all remaining Boolean variables to {\tt false} and setting all positions $x_{i+1},\ldots,x_n$ to their initial values in their corresponding loops (which are first or last positions in $w$, depending on whether the corresponding loops are {\tt first..last} or  {\tt last..first}). We write 
    \begin{align*}
        w \models \underbrace{(p, \bar x)}_{\text{$i$-configuration}} \to_i^* \underbrace{(q, \bar y)}_{\text{$i$-configuration}}
        \end{align*}
if $\to^*$ holds for the corresponding $n$-configurations.

\begin{sublemma}
    Let $i \in \set{1,\ldots,n}$ and let  $p,q$ be valuations of $X_i$.  There is a first-order formula 
    \begin{align*}
        \varphi_{p,q}(\overbrace{z_1,\ldots,z_{i-1}}^{\bar z},x,y)
        \end{align*}
    which holds in an input word $w$ and a valuation of $\bar z$ and $y,z$ if and only if
    \begin{align*}
        w \models (p, \bar zx) \to_i^* (p, \bar zy)
        \end{align*}
    \end{sublemma}
\begin{proof}
Note that in the statement of the sublemma, we do not have to say that the run from $(p,\bar zx)$ to $(p, \bar zy)$ does not change the values of the variables in $\bar z$; because this is automatically true in for-transducers, where a position variable can never return to a previously used value once that value has changed. (The same is true for Boolean variables in a first-order for-transducer.)
Induction on $i$, in the opposite order, i.e.~the induction base is when $i=n$. Suppose that we want to prove the statement for some $i \in \set{1,\ldots,n}$ and we have already proved it for $j>i$.  Let us write 
\begin{align}\label{eq:one-step-for}
    w \models (p, \bar zx) \to_i (q, \bar zy)
    \end{align}
when there is a run from  $(p,\bar zx)$ to $(p, \bar zy)$ which contains no  $i$-configurations except for the source and target.  Note that in the one step case, the variable $y$ is uniquely determined by $x$: it is either the next or previous position, depending on the type of the $i$-th for loop.    Using the induction assumption, we can write a first-order formula
\begin{align*}
    \psi_{pq}(\bar zx)
\end{align*}
which holds whenever~\eqref{eq:one-step-for} is satisfied, with $y$ being the next/previous position corresponding to $x$.  Next,  consider the case when multiple steps are allowed, but  the valuations of the Boolean variables are the same all the time, i.e.
\begin{align}\label{eq:many-step-same-boolean}
    w \models (p, \bar zx) \to^*_i (p, \bar zy).
    \end{align}
Note that the assumption that $f$ is a first-order transducer implies that in all intermediate $i$-configurations between $(p,\bar zx)$ and $(p, \bar zy)$, the valuation of the Boolean variables $X_i$ is also $p$, because a Boolean variable can never go from {\tt false} to {\tt true} and then back again.  Thanks to this observation, we can define~\eqref{eq:many-step-same-boolean}  in first-order logic, by saying that  $\psi_{pq}(\bar zu)$ holds for all $u$ between $x$ and $y$. Since the Boolean variables can only grow in a run, the formula in the statement of the lemma is obtained by combining a bounded number of steps of the form~\eqref{eq:one-step-for} and~\eqref{eq:many-step-same-boolean}.
\end{proof}

Using the above sublemma, we get Lemma~\ref{lem:pebble-to-for-formula} easily, by the same reasoning as in the end of the proof of Lemma~\ref{lem:all-pebble-mso}. The observation is that  an arbitrary run of the for-transducer can be decomposed into a bounded number of steps, which can be described in first-order logic using the sublemma. 
\end{proof}

\section{Pebble transducers to polyregular functions}
\label{sec:pebble-to-polyregular}
 In this section we show that every pebble transducer recognises a polyregular function. The converse inclusion was already discussed in Section~\ref{sec:pebble-automata}. Furthermore, since each of the atomic polyregular functions is recognised by a 2-pebble transducer, it follows that every function recognised by a $k$-pebble transducer can be decomposed into  finitely many functions recognised by 2-pebble transducers.  One pebble is not enough, since 1-pebble transducers, also known as two-way automata with output, are closed under composition, which was proved in~\cite[Theorem 1]{chytil1977serial}, see also~\cite[Theorem 12.3]{toolbox}.

The main idea is this. A run of a pebble transducer can be viewed as a sequence of words representing the configurations, as in the following picture:
\mypic{11}
The length is  at most 
\begin{align*}
\text{(number of states)} \cdot  \text{(length of input word)}^{\text{1 + number of pebbles}}
\end{align*}
The 1 + in the exponent is because the input word is copied in each configuration. We show that for every pebble transducer, the word representing the  run can be computed using a polyregular function. Once the run has been computed, the output can easily be recovered using a sequential function, which simply replaces each configuration by the output produced in it.

   The proof is split into two parts. In Section~\ref{sec:one-pebble}, we consider the case of  1-pebble automata (also known as two-way automata), and in Section~\ref{sec:many-pebbles}, we iterate the construction for 1-pebble automata to get the result for $k$-pebble automata. The 1-pebble case in Section~\ref{sec:one-pebble} is the core of the proof, and uses the Factorisation Forest Theorem of Imre  Simon~\cite{simon1990factorization}. Even in the case of a 1-pebble automaton,  runs can have quadratic length, hence  the squaring function.


%
%

\subsection{One pebble}
\label{sec:one-pebble}
In this section, we show that a polyregular function can produce the run of a   1-pebble automaton. Since a 1-pebble automaton is the  same thing as a {two-way automaton}, we use the name two-way automaton for the rest of this section. For the rest of Section~\ref{sec:one-pebble} fix a two-way automaton $\Aa$ with input alphabet $\Sigma$ and states~$Q$. 

\paragraph*{Some notation.}  We begin by fixing some notation on runs and configurations. In the proof we analyse the behaviour of the two-way automaton on infixes of the input word, which do not necessarily begin with the first or end with the last position.  We use the name \emph{partial input} for such infixes. A  partial input is formally defined  to be a word over the alphabet
\begin{align*}
\Sigma \times \powerset {\set{\text{first, last}}} 
\end{align*}
where   ``first'' is only allowed (but not necessary) in the first position, and  ``last'' is only allowed (but not necessary) in the last position.  We write $\inputs$ for the set of partial inputs; this is a regular language.  
Define a \emph{partial configuration} to be a partial input with exactly one distinguished position labelled additionally by a state. A partial configuration is formalised as a word over the  alphabet
\begin{align*}
(\Sigma \times \powerset {\set{\text{first, last}}}) \times  (Q + \varepsilon) 	
\end{align*}
where $Q$ is used exactly once, and erasing the last component leads to a partial input. We write $\confs$ for the set of partial configurations; again this is a regular language. Here is a picture:
\mypic{35}
Since we represent partial configurations as strings, it makes sense to talk about string-to-string transducers that transform them. One example is the successor function 
\begin{align*}
  \mathsf{suc} : \confs \to \confs
\end{align*}
which applies a single transition of the automaton. This successor function is undefined if the source partial configuration has the accepting state, or the head is moved outside the partial configuration (which should not viewed as an error, since it corresponds to the automaton leaving the infix covered by the partial configuration).  The successor function is easily seen to be a rational function; but this is not going to be useful, since the difficulty is in iterating the successor function. For example, the successor function for configurations of Turing machines is also rational. 

Define a \emph{partial run} to be a word 
\begin{align*}
  c_1 |  \cdots | c_n \qquad c_1,\ldots,c_n \in \confs
\end{align*}
which consists of partial configurations, separated by a fresh separator symbol, where consecutive partial configurations are connected by the successor  function, and where the last partial configuration has undefined successor (typically, because the head leaves the part of the input covered by the partial configuration). We write $\runs$ for the set of partial runs. Finally, define 
\begin{align*}
\run : \confs \to \runs  
\end{align*}
to be the (total) function which maps a partial configuration to the unique partial run which begins in that partial configuration. The main result of this section is the following lemma.

\begin{lemma}\label{lem:run-is-polyregular}
	The function $\run$ is polyregular. If the fixed two-way automaton is first-order definable, then $\run$ is first-order polyregular. 
\end{lemma}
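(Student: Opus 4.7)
My plan is to construct $\run$ as a bounded composition of rational functions, squaring, and iterated reverse, using the Factorisation Forest Theorem of Simon applied to the \emph{transition monoid} $M$ of the two-way automaton $\Aa$.  The elements of $M$ are the ``crossing behaviour'' functions $\beta_u \colon Q \times \set{L,R} \to Q \times \set{L,R} + \set{\bot,\text{halt}}$ that record, for each factor $u$ of the input word, what $\Aa$ does when it enters $u$ from either boundary in any state.  Since a partial run has length quadratic in the length of its underlying input, the construction must use the squaring operation as the source of this growth.

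In the first step, a rational pre-processing function annotates each position of the input with its role in a factorisation forest with respect to the homomorphism $\Sigma^* \to M$; by Simon's theorem such a forest exists with constant height $h = O(|M|)$, so the annotation lives in a finite alphabet.  In the second step, I build by induction on the height of the forest the partial runs inside every sub-factor $u$, parametrised by the possible entry modes (entry state plus entry side).  Leaves are trivial.  At a binary node $u = u_1 u_2$, the partial run on $u$ is obtained by interleaving the inductively constructed partial runs on $u_1$ and $u_2$ in the order dictated by the bounded-length crossing sequence at the shared boundary; this is a rational operation.  At an idempotent node $u = u_1 \cdots u_k$ with $h(u_i) = e$ idempotent in $M$, idempotency forces the crossing sequences at all inter-sub-factor boundaries to coincide, so the head's trajectory inside $u$ consists of a bounded number of sweeps across consecutive sub-factors, each sweep uniformly described by $e$.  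The partial run on $u$ has length quadratic in $|u|$ and is produced by squaring the input to lay out $|u|$ copies, then using a rational function built from $M$ to annotate each copy with the correct configuration and to discard the unused ones; iterated reverse is used to place right-to-left sweeps into their chronological position.  Composing the bounded number of forest levels yields a polyregular function.

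The main obstacle is the idempotent case.  Although idempotency makes the head's trajectory within an idempotent block highly uniform, correctly stitching the inductively obtained partial runs into a single partial run on the block — respecting both the chronological order of sweeps and the direction of each sweep, and aligning configurations with the positions of the squaring output — requires delicate coordination between squaring, iterated reverse, and the rational stitching functions defined from $M$.  For the first-order case, $\Aa$ being first-order definable implies that $M$ is aperiodic; the aperiodic refinement of Simon's theorem applies to produce a factorisation forest of bounded height whose idempotents collapse in the sense required, and all the rational functions used in the construction are first-order rational because they are defined purely in terms of $M$-values without any counting.  This shows that $\run$ is first-order polyregular whenever $\Aa$ is first-order definable.
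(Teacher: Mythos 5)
Your overall strategy --- apply the Factorisation Forest Theorem to the monoid of crossing behaviours, build the run by induction on the height of the forest, handle binary nodes by stitching along the bounded crossing sequence at the shared boundary, and handle homogeneous unranked nodes with squaring supplying the quadratic growth --- is essentially the paper's proof (its notion of ``good'' languages and its two lemmas on concatenation and on the homogeneous Kleene star). For the general, non-first-order half of the statement your outline would go through, modulo the stitching details you yourself flag as the main obstacle.

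There is, however, a genuine gap in your treatment of the first-order case, and it sits exactly where you lean on idempotency. You assume that at an unranked node all factors have the same \emph{idempotent} value $e$, and you claim that in the aperiodic case the factorisation and all auxiliary rational functions are first-order rational ``because they are defined purely in terms of $M$-values without any counting''. Producing an idempotent factorisation is itself a counting operation. Take the aperiodic semigroup $\set{1,2}$ in which every product equals $2$ (the syntactic semigroup of ``words of length exactly one''): its only idempotent is $2$, so every factor at an unranked node must have length at least two, and on an input $a^n$ no first-order rational function can mark a boundary set that cuts the word into unboundedly many blocks each of length at least two, since any first-order definable set of positions in $a^n$ is confined to a bounded neighbourhood of the two ends. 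The paper avoids this by dropping idempotency from the unranked case, requiring only that all factors have the same value under the homomorphism; this in turn destroys the clean picture you rely on, in which the head's trajectory inside the block decomposes into sweeps touching only adjacent factors with coinciding crossing sequences at every internal boundary. Without idempotency one only gets, by a pumping argument, that the sub-run started at the $i$-th factor stays within a window of at most $|Q|$ factors to its left, and the construction must compute these windows (again by squaring) and apply the inductively obtained function uniformly to each window. Your idempotent-node construction would need to be reworked to this weaker window property before the first-order claim is established.
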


For a partial input $w$ and a state $q$, we write $qw$ for the partial configuration which has input $w$ and the head over the first position in state $q$. Likewise, we define $wq$, but with the last position used. The following slightly technical definition will be used in the induction proof that comprises the rest of this section.

\begin{definition}[Good]
We say that $  L \subseteq  \inputs$  is  \emph{good} if (a) it is a regular word language; and (b)  there is a polyregular function that agrees with $\run$ on inputs from the set
\begin{align*}
  \set{qw, wq  : w \in L, q \in Q}
\end{align*}
Furthermore, if the fixed two-way automaton is first-order definable, the we additionally require that (i) $L$ is first-order definable, and (ii) the polyregular function in (b) is first-order polyregular.
\end{definition}

The main content of  Section~\ref{sec:one-pebble} is to prove Lemma~\ref{lem:all-good} below, which says that all partial inputs are good. In other words, partial runs can be computed using a polyregular function, at least assuming that the first partial  configuration has the  head over the first or last position. The implication from Lemma~\ref{lem:all-good} to Lemma~\ref{lem:run-is-polyregular} is a straightforward argument using crossing sequences, and is given in Sublemma~\ref{sublemma:middle-q}. 
\begin{lemma}\label{lem:all-good}
The set $\inputs$ of all partial inputs is good.
\end{lemma}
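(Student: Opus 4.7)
The plan is to prove the lemma by structural induction driven by Simon's Factorisation Forest Theorem applied to the transition monoid $M$ of the fixed two-way automaton $\Aa$. For a partial input $w$, its image in $M$ records the behaviour function which sends each pair (entry state, entry side of $w$) to the corresponding exit state and exit side when the automaton is run on $w$ starting in that configuration; this is a finite monoid (aperiodic in the first-order case). The induction will build polyregular implementations of $\run$ restricted to good sublanguages of $\inputs$, and then combine them along a factorisation tree.

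The first step is to isolate a small set of closure properties for good languages. Base case: every singleton $\{a\}$ is good, since on a one-letter partial input $\run$ takes only finitely many values and is trivially (first-order) polyregular. Closure under regular partitioning: if $L$ is good and $K$ is regular (first-order), then $L \cap K$ is good, by prepending a sequential (first-order sequential) filter. Closure under homogeneous concatenation: if $L_1$ and $L_2$ are good and every element of $L_1 \cup L_2$ has a fixed behaviour in $M$, then $L_1 L_2$ is good; indeed a partial run on $w_1 w_2$ is an alternating sequence of excursions into $w_1$ and $w_2$, and because the behaviours of the two factors are fixed, the sequence of excursion entry points is produced by a sequential function of the input, while each excursion is filled in by the polyregular implementation of $\run$ on $L_1$ or $L_2$. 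Closure under homogeneous idempotent iteration: if $L$ is good, every $w \in L$ has a common behaviour, and that behaviour is idempotent in $M$, then $L^+$ is good.

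The main obstacle is the idempotent iteration. When the automaton bounces between many copies of the same idempotent factor, the partial run can have quadratic length and involves many head reversals. The plan is to exploit idempotency to classify each visit to a factor into one of finitely many behaviour modes; show that the sequence of modes used on the sequence of factors (in left-to-right order) is itself the output of a sequential function; use the squaring operation to produce, for each factor occurrence, a marked copy of the input aligned with that factor; use iterated reverse to realign the locally produced partial runs with the alternating directions of head traversal; and finally invoke the inductive polyregular function to fill in the detailed partial run inside each factor. This is the step that forces the use of the non-linear atomic operations, squaring and iterated reverse, and it is the technically heaviest piece.

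To conclude, apply Simon's Factorisation Forest Theorem to the homomorphism $\inputs \to M$: every partial input admits a factorisation tree of bounded depth whose leaves are letters and whose internal nodes are either binary concatenations of factors with fixed behaviour or idempotent concatenations of factors with a common idempotent behaviour. Induction on the depth of this tree, using the base case and the closure properties above (combined with closure under regular partitioning to restrict to each behaviour class of $M$), shows that $\inputs$ is good. In the first-order case, $M$ is aperiodic, Simon's theorem still yields bounded-depth factorisation forests, and every auxiliary language and transducer built along the way stays within the first-order fragments of the relevant models, delivering the first-order version of the statement and hence of Lemma~\ref{lem:run-is-polyregular}.
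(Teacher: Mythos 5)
Your overall strategy is the same as the paper's: apply the Factorisation Forest Theorem to the homomorphism sending a partial input to its crossing type, establish goodness of singletons, closure under intersection with regular languages (the if-then-else construction), closure under binary homogeneous concatenation, and closure under a homogeneous separated iteration, then induct on the height of the factorisation tree. The description of how squaring and iterated reverse enter the iteration step also matches the paper's Sublemmas~\ref{sublemma:windows} and~\ref{sublem:iterated-polyregular}.

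There is, however, one genuine gap, and it sits exactly in the first-order half of the statement. You insist that the iteration nodes of the factorisation forest carry a common \emph{idempotent} behaviour, and then assert that in the aperiodic case ``every auxiliary language and transducer built along the way stays within the first-order fragments.'' That assertion fails for the idempotent version of the theorem: even when the crossing-type semigroup is aperiodic, the rational function that produces an idempotent-homogeneous factorisation is in general not first-order rational, because producing idempotent values forces the transducer to group letters into blocks of a fixed size, which introduces a modular counter (consider the aperiodic semigroup $\set{1,2}$ in which every product equals $2$). This is precisely why the paper's Definition~\ref{def:h-height} and Lemma~\ref{lem:compute-fact-for} drop the idempotence requirement and only ask that the iterated factors have the \emph{same} value under $h$. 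The price of dropping idempotence is that the key locality fact --- that the excursion associated with the $i$-th block only visits blocks within distance $|Q|$ --- no longer follows immediately from idempotence and must instead be established by a pumping argument on states (Sublemma~\ref{claim:head-moves-little}). So to repair your proof you should either weaken your iteration closure property to same-value (not necessarily idempotent) factors and supply that pumping argument, or restrict the idempotence assumption to the non-first-order case and give a separate argument for the first-order one.
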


To  prove the above proposition, we  apply the Factorisation Forest Theorem (see below) to the semigroup homomorphism that maps a partial input to the behaviour of the two-way automaton. We begin by describing this semigroup in more detail.

\paragraph*{Crossing types.} Crossing types are the natural information associated to a two-way automaton; since this notion is so classical and natural we only give the notation and intuition, for a more precise description see~\cite{Carton:2015bl}. Define the \emph{crossing type} of $w \in \inputs$ to be the function
\begin{align*}
  Q \times \set{\text{leftmost, rightmost}} \quad \to  \quad\set{\text{accept, reject}} + Q \times \set{\text{leftmost, rightmost}} 
\end{align*}
which describes the behaviour of the automaton on a partial input, as described in the following picture
\mypic{12} 
Crossing types can be equipped with a semigroup structure such that the function which maps an input to its crossing type becomes a semigroup homomorphism, here is a picture:
\mypic{13}

Furthermore, by~\cite[Theorem 9]{Carton:2015bl}, if $\Aa$ is  a first-order definable two-way automaton, then the semigroup of crossing types  is an aperiodic semigroup.

%

\paragraph*{Proof plan.} We now present the proof strategy for  Lemma~\ref{lem:all-good}, which says that the entire set $\inputs$ is good. The plan is to  show that good languages can be combined using a form of concatenation and Kleene star with separator symbols to get new good languages.  More formally,  we show that if  $L,K \subseteq \inputs$ are good and $|$ is a fresh separator symbol, then:

\begin{enumerate}
	\item {\bf Lemma~\ref{lem:good-concatenation}}. 	For every state $q$  there is a polyregular function $f$
 with
\begin{align*}
  f(w_1|w_2) = \run(q w_1 w_2) \qquad \text{for every } w_1 \in L, w_2 \in K.
\end{align*}

	\item {\bf Lemma~\ref{lem:unranked-sweeper}}. Assume that all words in $L$ have the same crossing type. Then for every state $q$  there is a polyregular function $f$
with 
\begin{align*}
  f(w_1|\cdots | w_n) = \run(q w_1 \cdots w_n) \qquad \text{for every }n \in \set{1,2,\ldots} \text { and }w_1,\ldots,w_n \in L.
\end{align*}

\end{enumerate}
Furthermore, if the underlying automaton is first-order definable, then the polyregular functions in the conclusions of the above lemmas are first-order polyregular.

Before proving the above two lemmas, we show how they imply that all partial inputs are good, as required by Lemma~\ref{lem:all-good}. The idea is to use the Factorisation Forest Theorem, which says that every word can be factorised into several words, and those words can also be factorised, and so on recursively, so that the depth of the recursion is bounded and the factorisations are compatible with Lemmas~\ref{lem:good-concatenation} and~\ref{lem:unranked-sweeper}. The appropriate definition is given below (think of $h$ being the homomorphism which maps a word to its crossing sequence).

\begin{definition}\label{def:h-height} Let $h : \Sigma^+ \to S$ be a semigroup homomorphism, with $S$ finite.  Define the $h$-height of a word $w \in \Sigma^+$ to be the smallest natural number that can be assigned using the following rules.
\begin{enumerate}
	\item every one letter word has $h$-height $1$;
	\item if $w,v$ have $h$-height $< n$, then $wv$ has $h$-height $\le n$.
	\item \label{it:unranked-h-height} if $w_1,\ldots,w_n$ have $h$-height $<n$ and the same value\footnote{Often one  assumes that this same value is an idempotent. We do not make this assumption, which plays a role when computing factorisations in first-order logic, see Footnote~\ref{foonote:non-idempotent}.} under $h$, then $w_1 \cdots w_n$ has height $\le n$.
\end{enumerate}
\end{definition}
Clearly every word has some $h$-height, e.g.~at most  its length (or even the logarithm of its length). 
The Factorisation Forest Theorem  says that the upper bound on $h$-height is actually $3|S|$, i.e.~there is a finite upper bound that works for all words and depends only on the semigroup $S$.  The theorem was originally proved by Imre Simon~\cite[Theorem 9.1]{simon1990factorization} with a bound of $9|S|$, while the optimal bound of $3|S|$ is from~\cite[Theorem 1]{kufleitner2008height}. What is more,  a suitable decomposition can be computed using a rational function, in the following sense. 

\begin{lemma}
[\cite{colcombet2007combinatorial,DBLP:conf/lics/BojanczykDK18}]
\label{lem:compute-fact-for}  Let $h : \Sigma^+ \to S$ be a semigroup homomorphism, and let $|$ be a fresh separator symbol. 
	For every $k \in \set{2,3,\ldots}$ there is a rational function which inputs a word $w \in \Sigma^+$ and outputs a decomposition 
	\begin{align*}
  w_1 | \cdots | w_n \qquad \text{with } w= w_1 \cdots w_n
\end{align*}
such that 
\begin{enumerate}
	\item $w_1,\ldots,w_n$ have $h$-height strictly smaller than $w$; and
	\item either $n=2$ or all $w_1,\ldots,w_n$ have the same value under $h$.
\end{enumerate}
Furthermore, if  $S$ is aperiodic, then a first-order rational function is enough\footnote{\label{foonote:non-idempotent}To get a first-order rational function, it is important that we do not require values to be idempotent in item~\ref{it:unranked-h-height} of Definition~\ref{def:h-height}. To see how this is important, consider the semigroup $\set{1,2}$ where all products have value $2$, i.e.~this is the syntactic semigroup of the language ``words of length exactly $1$''. This semigroup is clearly aperiodic, and yet a rational function cannot produce decompositions with idempotent values, since this would require grouping letters into groups (say, of size two). }. 
\end{lemma}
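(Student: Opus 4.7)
The plan is to produce, for each input word $w$ of $h$-height $k \geq 2$, a canonical one-level decomposition witnessing that every piece has height strictly less than $k$, and to observe that this mapping is rational. The recursive Definition~\ref{def:h-height} guarantees that at least one of the following holds for $w$: (A) $w = uv$ with both factors of $h$-height $< k$; or (B) $w = w_1 \cdots w_n$ with $n \geq 2$, all $w_i$ sharing a common value $s \in S$ and having $h$-height $< k$. I would fix a canonical choice by preferring (B) whenever it is available, taking the maximal refinement into same-$h$-value blocks, and otherwise using (A) with the leftmost admissible cut. This choice is well-defined and directly satisfies the two conclusions of the lemma.

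To implement this by a rational function, I would exploit the fact that Simon's theorem bounds $h$-height by $3|S|$, so that for each of the finitely many relevant values of $k$ the predicate ``the prefix up to position $i$ has $h$-height $\leq k$'' is a regular property. Combining it with the finite information $h(\text{prefix})$ and $h(\text{suffix})$, the canonical cut points of the previous paragraph form a regular marking of positions, and a rational transducer can then scan the word, place markers at these positions, and emit the separator $|$. Rather than redoing the combinatorics from scratch, I would directly appeal to the Simon-factorisation transducer from~\cite{colcombet2007combinatorial}, from which a one-level decomposition of the required shape is obtained by extracting the topmost layer of cuts of the produced factorisation tree.

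The main obstacle is the first-order case. Here I would invoke the aperiodic strengthening of the above, developed in~\cite{DBLP:conf/lics/BojanczykDK18}, which upgrades the Simon-factorisation transducer to a first-order rational function when $S$ is aperiodic; the underlying reason is that in an aperiodic semigroup the $\mathcal{H}$-classes are trivial, so the canonical cut positions can be pinned down by first-order formulas in the position ordering and the labels. The delicate point, already flagged in the footnote, is that item~\ref{it:unranked-h-height} of Definition~\ref{def:h-height} must not demand idempotency of the common $h$-value: if it did, then already the two-element aperiodic semigroup of the ``length-one-word'' language from the footnote would have no first-order one-level factorisation, since first-order logic cannot group adjacent letters into blocks of a fixed size. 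Carefully avoiding this requirement is what makes the aperiodic version go through, and is the main technical subtlety to watch out for.
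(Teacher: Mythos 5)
Your proposal is correct and takes essentially the same route as the paper: the paper gives no proof of this lemma at all, simply attributing it to~\cite{colcombet2007combinatorial} and~\cite{DBLP:conf/lics/BojanczykDK18}, which is exactly where your argument also bottoms out (the regularity of the cut points and the aperiodic upgrade are the content of those references, not something you or the paper re-derive). Your surrounding gloss --- the canonical one-level decomposition extracted from Definition~\ref{def:h-height} and the role of the non-idempotency relaxation in the first-order case --- is a faithful reading of why the cited results yield the stated form.
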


\begin{proof}[of Lemma~\ref{lem:all-good}]
Let $h$ be the homomorphism which maps an input word to its crossing type with respect to our fixed two-way automaton. Since the semigroup of crossing types is finite, the Factorisation Forest Theorem implies  that every partial input has $h$-height bounded by a constant that depends only on the fixed two-way automaton, and not on the length of the word.
	By induction on $k$, we show that the set of all words in $\inputs$ with  $h$-height at most $k$ is good. In the induction step, we use the rational function from Lemma~\ref{lem:compute-fact-for}, and then apply either Lemma~\ref{lem:good-concatenation} or~\ref{lem:unranked-sweeper} to the result, depending on the number of factors produced.
\end{proof}

It remains to prove Lemmas~\ref{lem:good-concatenation} and~\ref{lem:unranked-sweeper}; the rest of Section~\ref{sec:one-pebble} is devoted to proving these lemmas.  Since the proofs  use only first-order polyregular functions,  we neglect to always add that ``if the underlying automaton is first-order definable, then ...''. There is one  exception, Sublemma~\ref{sublemma:windows}, where special care is needed in the first-order case.

\paragraph*{If-then-else.}
We begin by showing  that   polyregular functions can be combined   using an ``if then else'' construction.

	\begin{lemma}\label{lem:ifthenelse}
	If $L \subseteq \Sigma^*$ is a regular language, and $f,g : \Sigma^* \to \Gamma^*$ are polyregular functions, then  the following function is also polyregular 
	\begin{align*}
  w \in \Sigma^* \quad  \mapsto  \quad \begin{cases}
  	f(w) & \text{for }w \in L\\
  	  	g(w) & \text{for }w \not \in L\\
  \end{cases}
\end{align*}
\end{lemma}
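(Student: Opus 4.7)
The key idea is to preprocess the input with a tag indicating $L$-membership, and then show that every polyregular function admits a \emph{tag-preserving lift} which either applies the function or acts as the identity, according to the tag.

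First I use a rational function $\sigma : \Sigma^* \to (\Sigma \cup \set{\#_1,\#_2})^*$ that prepends $\#_1$ when $w \in L$ and $\#_2$ otherwise; this is rational because the MSO queries of Definition~\ref{def:rational-function} can reference $L$. For any polyregular $\phi : \Sigma^* \to \Gamma^*$, define the lift $\tilde\phi$ on $(\Sigma \cup \set{\#_1,\#_2})^*$ by $\tilde\phi(\#_1 w) = \#_1\phi(w)$ and $\tilde\phi(\#_2 w) = \#_2 w$. The core claim is that $\tilde\phi$ is polyregular whenever $\phi$ is, and I would prove it by induction on the composition length of $\phi$: the inductive step is immediate since lifts compose, so only the three atomic building blocks (rational functions, squaring, iterated reverse) need to be verified. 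A rational $\phi$ gives a rational $\tilde\phi$ directly, since the per-position output can branch on the tag via MSO. For squaring, one applies the operation to the whole tagged input and then uses a rational postprocessing that reads the tag off the top-left entry of the produced grid (which carries $\underline{\#_i}$) and either keeps the leading tag together with the squaring-of-$w$ subgrid (tag $\#_1$) or keeps only the first copy of the grid with underlines stripped (tag $\#_2$).

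The iterated reverse case is the main obstacle, because the tag sits outside the block structure on which iterated reverse acts. The resolution is a tag-dependent rational preprocessing. On the $\#_1$ branch I insert a separator $|$ immediately after $\#_1$ so that $\#_1$ becomes its own block; iterated reverse then produces $\#_1\,|\,\text{iter-reverse}(w)$, and a rational postprocessing deletes the extra $|$. On the $\#_2$ branch I rename every original $|$ inside $w$ to a fresh symbol $|_o$ and insert a $|$ before every letter of $w$; the resulting input has only single-letter blocks between consecutive $|$'s, so iterated reverse acts as the identity on it, and a rational postprocessing deletes the inserted $|$'s and renames $|_o$ back to $|$, recovering $\#_2 w$. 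Both branches are handled inside a single rational preprocessing that dispatches on the tag through an MSO query.

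With the lift claim in hand, take $\tilde f$ acting as $f$ on $\#_1$-tagged inputs and as the identity on $\#_2$-tagged, and symmetrically $\tilde g$ acting as $g$ on $\#_2$ and as the identity on $\#_1$. The desired function is then
\[
F \;=\; \mathrm{strip} \,\circ\, \tilde g \,\circ\, \tilde f \,\circ\, \sigma,
\]
where $\mathrm{strip}$ is the rational function deleting the leading tag. Tracing through, $w \in L$ yields $\#_1 w \mapsto \#_1 f(w) \mapsto \#_1 f(w) \mapsto f(w)$, while $w \notin L$ yields $\#_2 w \mapsto \#_2 w \mapsto \#_2 g(w) \mapsto g(w)$. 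Every new function introduced along the way is either a (first-order) rational function or a composition of (first-order) polyregular ones, so the first-order case is also preserved.
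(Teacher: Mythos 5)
Your proposal is correct and follows essentially the same route as the paper: both reduce the if-then-else to a ``conditional identity'' lift of each atomic function (rational, squaring, iterated reverse), verify the lift atomically, and conclude by closure under composition. The only difference is cosmetic --- the paper marks the else-branch by recoloring the whole word into a disjoint alphabet copy, whereas you prepend a tag symbol, which forces the extra (but correct) fiddling you describe to keep the tag out of the way of squaring and of the first block in iterated reverse.
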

\begin{proof}
The basic 	idea is to decompose the natural parallel implementation of ``if then else'' into a sequential one. The  main step is given in the following sublemma.

\begin{sublemma}\label{lem:if-identity}
	For every polyregular function $h : \Sigma^* \to \Gamma^*$ and every alphabet $\Delta$ disjoint from $\Sigma$ and $\Gamma$, there is a polyregular function 
	\begin{align*}
  h^\Delta : (\Sigma + \Delta)^* \to (\Gamma + \Delta)^*
\end{align*}
	 which agrees with $h$ on inputs from $\Sigma^*$ and is the identity on inputs from $\Delta^*$. We have no requirements for inputs that use both $\Sigma$ and $\Delta$.
\end{sublemma}
\begin{proof}
If the conclusion of the sublemma is true for two polyregular functions, then it is also true for their composition, by setting
\begin{align*}
  (h_1 \circ h_2)^\Delta =  h_1^\Delta \circ h_2^\Delta
\end{align*}
Therefore, it remains to show the sublemma for the  atomic functions, namely the sequential functions, squaring, and iterated reverse. The case of rational functions is straightforward, by taking a union of automata. The case of squaring is also easy: first apply squaring, and  if the result contains at least one letter from $\Delta$, then use a  rational function to recover the original input. The most  interesting case is when   when $h$ is iterated reverse operation, which is implemented as follows. 
\begin{enumerate}
	\item Apply the following rational function:
\begin{enumerate}
\item if the input is in $\Sigma^*$, e.g. it is
\begin{align*}
123 | 45 | 67 | 8	
\end{align*}
then leave the input unchanged.
\item if the input is in $\Delta^*$, e.g.~it is
\begin{align*}
{abcdef}
\end{align*}
then add the separator $|$ between every two positions, like this:
\begin{align*}
{a|b|c|d|e|f}
\end{align*}
\item otherwise, output the empty word
\end{enumerate}
	\item apply iterated reverse, with the separator being 
	 $|$
	\item if the output contains letters from $\Delta$ remove all separators $|$.
\end{enumerate}

\end{proof}

Having proved Sublemma~\ref{lem:if-identity}, we return to the proof of Lemma~\ref{lem:ifthenelse} about an if-then-else construction for polyregular functions. 
 Let us write $\red \Sigma$ for a disjoint copy of the alphabet $\Sigma$, and for a word $w \in \Sigma^*$ let  us write $\red w \in \red \Sigma^*$ for the corresponding word over the copied alphabet.  The function in the statement of the lemma is implemented by the following sequence of operations (we assume without loss of generality that $\Sigma$ and $\Gamma$ are disjoint): 
\begin{enumerate}
	\item if the input is not in $L$, replace $w$ by $\red w$;
	\item apply $f^{\red \Sigma}$ as defined in Sublemma~\ref{lem:if-identity};
	\item swap $\Sigma$ with $\red \Sigma$;
	\item apply $g^{\Gamma}$ as defined in Sublemma~\ref{lem:if-identity};
\end{enumerate}
The  functions   items 1 and 3 are clearly rational, while the functions in items 2 and 4 are polyregular thanks to Sublemma~\ref{lem:if-identity}.
\end{proof}

\paragraph*{Concatenation.} We now show the first of the two lemmas needed in Lemma~\ref{lem:all-good}, namely that  two good languages can be combined via concatenation.
\begin{lemma}\label{lem:good-concatenation}
	If $L,K \subseteq \inputs$ are good, then for every state $q$  there is a polyregular function $f$
 with
\begin{align*}
  f(w_1|w_2) = \run(q w_1 w_2) \qquad \text{for every } w_1 \in L, w_2 \in K.
\end{align*}
\end{lemma}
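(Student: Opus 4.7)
The plan is to view $\run(qw_1w_2)$ as a concatenation of boundedly many sub-runs, each confined to $w_1$ or $w_2$, to compute each one using the goodness hypothesis, and to reassemble them into partial configurations of the full input $w_1w_2$.

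First, I would record the \emph{crossing sequence} at the $|$ boundary, i.e.\ the ordered list of (state, direction) pairs at which the automaton's head crosses it. A two-way automaton never revisits a configuration, so this sequence has length at most $2|Q|$, and it is completely determined by $q$ together with the crossing types of $w_1$ and $w_2$. Since crossing types live in a finite semigroup, a rational function can compute the crossing sequence from $w_1|w_2$. I would then use Lemma~\ref{lem:ifthenelse} to branch on the crossing sequence; inside each branch, the number $r$ of sub-runs and the starting state and side of each sub-run are fully known.

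Inside a branch with $r$ sub-runs, the $i$-th sub-run has one of the forms $\run(q'w_1)$, $\run(w_1q')$, $\run(q'w_2)$, $\run(w_2q')$ for some explicit state $q'$, which is precisely what the goodness hypothesis computes. To produce the $i$-th sub-run as partial configurations of the full input $w_1w_2$, I would compose three polyregular steps: (a) a rational step that extracts the relevant half and attaches $q'$ at the correct boundary; (b) the polyregular function supplied by goodness of $L$ or $K$, producing a list $c_1|\cdots|c_t$ of partial configurations of that half alone; and (c) a \emph{lifting} step that turns this list into $c_1w_2|\cdots|c_tw_2$ (or $w_1c_1|\cdots|w_1c_t$ on the other side), so that every configuration carries the full input $w_1w_2$. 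Finally, the $r$ lifted sub-runs are concatenated with $|$ separators by a rational function.

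The main obstacle is step (c), together with the task of running $r$ sub-pipelines on the same input $w_1|w_2$. For the lifting, on an intermediate input shaped like $(c_1|\cdots|c_t)\#w_2$ I would apply squaring and then a two-way rational postprocessing that keeps only the copies whose underline lands on a separator $|$ or on $\#$: on such a copy it scans backwards to the preceding separator to emit the block $c_i$, then jumps past $\#$ to emit $w_2$, then a separator (with special handling to avoid a stray trailing $|$). For the orchestration of the $r$ sub-pipelines, a rational step first duplicates $w_1|w_2$ into $r$ tagged copies over pairwise disjoint alphabets, and then each sub-pipeline is applied to its copy by a ``prefix-apply'' variant of Sublemma~\ref{lem:if-identity}: for every polyregular $h:\Sigma^*\to\Gamma^*$ and disjoint $\Delta$, the function $u\#v\mapsto h(u)\#v$ is again polyregular, which I would prove by the same structural induction on the decomposition of $h$ into rational, squaring, and iterated-reverse atoms that underlies Sublemma~\ref{lem:if-identity}.

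For the first-order variant every ingredient has a first-order counterpart: by the cited result of~\cite{Carton:2015bl} the semigroup of crossing types is aperiodic when the underlying two-way automaton is first-order definable, so the crossing sequence is computed by a first-order rational function; first-order goodness supplies first-order polyregular functions for the sub-runs; and the lifting plus prefix-apply constructions only use first-order rational functions and squaring. The resulting $f$ is therefore first-order polyregular.
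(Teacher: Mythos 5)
Your proposal follows essentially the same route as the paper's proof: branch on the crossing types via Lemma~\ref{lem:ifthenelse}, decompose $\run(qw_1w_2)$ into a bounded number of sub-runs over $w_1$ or $w_2$ determined by the crossing sequence (the paper's Sublemma~\ref{claim:sweeper-binary-concatenation}), compute each sub-run from goodness, lift it to configurations of the full input by squaring plus rational postprocessing (the paper's Sublemma~\ref{sublemma:iterated-concat}), and glue the boundedly many pieces by a pointwise-concatenation closure argument (the paper's Sublemma~\ref{sublemma:poly-cat}). One small correction: the step that ``duplicates $w_1|w_2$ into $r$ tagged copies'' is not a rational function, since a rational function emits a bounded word per input position in input order and so cannot output $w\red{w}$; the duplication must itself be done with squaring followed by rational postprocessing, exactly as the paper does in item~1 of Sublemma~\ref{sublemma:poly-cat}, after which your orchestration goes through unchanged.
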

\begin{proof}
Using the ``if then else'' construction from Lemma~\ref{lem:ifthenelse}, we can assume without loss of generality that  all words in $L$ have the same crossing type, call it $\sigma$, and all words in $K$ have the same crossing type, call it $\tau$.  The following sublemma, which follows almost immediately from the definitions, shows how the run of a two-way automaton  on input $vw$ can be reconstructed in a compositional way from its runs on the factors $v$ and $w$. In the lemma, we use the following notation: for $w \in \inputs$ and $\rho \in \runs$, we write $w \odot \rho \in \runs$ for the partial run obtained prepending $w$ to the left of every partial configuration appearing in $\rho$. Similarly we define $\rho \odot w$. Here is a picture:
\mypic{25}
We will show in Sublemma~\ref{sublemma:iterated-concat} below that $\odot$ can be implemented by a polyregular function. We begin though by showing how a run over $w_1w_2$ decomposes into a bounded number of runs over $w_1$ or $w_2$; this result is a simple application of crossing types. (The sublemma talks about runs that begin in the leftmost position; a symmetric construction deals with runs that begin in the rightmost position.)

%
%

\begin{sublemma}\label{claim:sweeper-binary-concatenation} For every crossing types $\sigma, \gamma$ and every $q \in Q$  there exist $k \in \set{0,1,\ldots}$ and states 
\begin{align*}
  q_1,\ldots,q_k \in Q 
\end{align*}
such that  partial inputs  $w_1, w_2$ with crossing types $\sigma,\tau$ respectively satisfy
	\begin{align*}
  \run (q w_1w_2) = \rho_0 \cdots \rho_k 
 \quad \text{where }  \rho_i = \begin{cases}
  \run(q w_1) \odot w_2 & \text{when $i=0$}\\
  	w_1  \odot \run(q_i w_2) & \text{when $i \in \set{1,3,5,\ldots}$}\\
  	\run(w_1 q_i)  \odot w_2 & \text{when $i \in \set{2,4,6,\ldots}$}
  \end{cases}
\end{align*}

\end{sublemma}
\begin{proof}
	The proof is best seen in a picture:
	\mypic{3}
\end{proof}

To finish the proof of the lemma, we need to show that the construction 
described in Sublemma~\ref{claim:sweeper-binary-concatenation} can be implemented by a polyregular function. The first challenge is implementing the concatenation $\rho_1 \cdots \rho_k$. Note that the number of concatenated blocks $k$ is constant in the sense that it depends only on the fixed crossing types $\sigma, \tau$ and not on the input words $w_1,w_2$. The following sublemma shows that polyregular functions are closed under this type of  concatenation.

\begin{sublemma}\label{sublemma:poly-cat}
	If $f,g : \Sigma^* \to \Gamma^*$ are polyregular, then so is $w \mapsto f(w)g(w)$.
\end{sublemma}
\begin{proof}
The sublemma follows from the two observations below.
\begin{enumerate}
	\item As in Sublemma~\ref{lem:if-identity}, let us write $\red \Sigma$ for a disjoint copy of the alphabet $\Sigma$, and for $w \in \Sigma^*$ let us write $\red w$ for the corresponding word in $\red \Sigma^*$. First observe that the function
\begin{align*}
  w \mapsto w \red w
\end{align*}
is polyregular. This is done as follows. Suppose that the input is 
\begin{align*}
1234	
\end{align*}
Add a fresh separator  at the end of $w$, yielding
\begin{align*}
1234|	
\end{align*}
and then apply squaring, yielding
\begin{align*}
\underline 1234| 1\underline 234|	12\underline 34|	123\underline 4|	1234\underline |	
\end{align*}
Remove the underlines, colour the letters before the first separator black, the letters between the first and second separator red, and remove the remaining letters, giving the desired output
\begin{align*}
1234 \red{1234}.
\end{align*}
 
\item The second observation is that if $h : \Sigma^* \to \Gamma^*$ is polyregular, then the same is true for the function $\bar h$ defined by 
\begin{align*}
  w  \in (\Sigma + \red \Sigma)^* \quad \mapsto \quad \begin{cases}
  	h(w_1) \red{w_2} & \text{if $w=w_1 \red{w_2}$ for   $w_1 \in \Sigma^+$ and $\red{w_2} \in \red \Sigma^+$}\\
  	\varepsilon & \text{otherwise}
  \end{cases}
\end{align*}
As was the case in the proof of Sublemma~\ref{lem:ifthenelse}, it is enough to show that $\bar h$ is polyregular whenever $h$ is one of the atomic functions, and the straightforward proof of that is left to the reader. 
\end{enumerate}
The sublemma follows immediately: we first duplicate the input word as in item 1 above, then apply item 2 with $h=f$, and finally apply a symmetric version of item 2  with $h=g$.
\end{proof}

We now finish the proof of the lemma. Suppose that the input is $w_1|w_2$. 
Thanks to Sublemmas~\ref{claim:sweeper-binary-concatenation} and~\ref{sublemma:poly-cat}, to  finish the proof of the lemma, it is enough to show that for every state $q$, the functions
\begin{align*}
w_1 | w_2  \mapsto \run(q w_1) \odot w_2 \qquad   w_1| w_2  \mapsto w_1 \odot \run(qw_2) 
\end{align*}
are polyregular (and likewise for runs where $q$ is at the end of the input word). By symmetry, we only do the first function. By assumption that $L$ is good and closure of polyregular functions  under concatenation (Sublemma~\ref{sublemma:poly-cat}), the function
\begin{align*}
w_1 | w_2  \mapsto \run(q w_1)  |  w_2
\end{align*}
is polyregular. To finish the proof, we use the following construction, which allows us to implement $\rho|w \mapsto \rho \odot w$ using a polyregular function.
\begin{sublemma}\label{sublemma:iterated-concat}
	For every alphabet $\Sigma$ and  separator $|$ not in  $\Sigma$, the function
	\begin{align*}
  v_1 | \cdots | v_n | v \qquad \mapsto \qquad v_1v | \cdots | v_n v
\end{align*}
is polyregular (in the above, we assume that $v_1,\ldots,v_n,v$ do not use $|$).
\end{sublemma}
\begin{proof}Suppose that the input looks like this:
\begin{align*}
  12|345|6|78
\end{align*}
The number of blocks, as separated by $|$, is unbounded.
Using a sequential function, append a fresh endmarker, say a comma
\begin{align*}
    12|345|6|78,
\end{align*}
and then apply squaring. In the result, keep only the maximal blocks in $\Sigma$ where either the first position is  underlined, or the block is directly followed by a comma and the closest underlined position to the left is the first in its block, giving a result like this:
\begin{align*}
    \underline 12|78 , \underline 345| 78 , \underline 6|78,
\end{align*}
Finally, swap $|$ and the comma, and then remove the commas.
\end{proof}
\end{proof}

\paragraph*{Homogeneous Kleene star.}
To finish the proof of Lemma~\ref{lem:all-good}, we  show that good languages are closed under a variant of the Kleene star, where the starred language contains only words of the same crossing type and the different copies of the starred language are separated by $|$.
\begin{lemma}\label{lem:unranked-sweeper}
Suppose that $L$ is good, and all words in $L$ have the same crossing type. Let $|$ be a fresh separator symbol. Then for every state $q$ of the fixed two-way automaton  there is a polyregular function $f$ with
\begin{align*}
  f(w_1|\cdots | w_n) = \run(q w_1 \cdots w_n) \quad \text{for every }n \in \set{1,2,\ldots} \text{ and }w_1,\ldots,w_n \in L.
\end{align*}
\end{lemma}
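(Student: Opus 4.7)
The run $\run(q w_1 \cdots w_n)$ decomposes into a sequence of \emph{sweeps}: maximal sub-runs during which the head stays inside one block. A sweep on block $w_i$ entered in state $p$ from side $d\in\{L,R\}$ is a copy of $\run(p w_i)$ (if $d=L$) or $\run(w_i p)$ (if $d=R$), with every partial configuration extended by prepending $w_1\cdots w_{i-1}$ and appending $w_{i+1}\cdots w_n$. The number of sweeps is $O(n)$, since each block is visited at most boundedly often (the bound depends only on the crossing-type semigroup). The sequence of sweeps, which I call the \emph{itinerary}, is exactly the trajectory of a deterministic two-way \emph{meta-automaton} $\Aa'$ on the $n+1$ boundary positions, with meta-states of the form $(p,d)$; crucially, because all blocks share crossing type $\sigma$, every interior boundary looks identical to $\Aa'$, so its interior transitions depend only on $\sigma$.

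The first step is to produce the itinerary as a polyregular function of $w_1|\cdots|w_n$. Because $\Aa'$ is deterministic and behaves uniformly in the interior, any maximal stretch of interior moves follows an eventually periodic trajectory in the finite space of meta-states, with a constant net displacement per period. A case analysis on the sign of this displacement---positive drift, negative drift, or no drift---shows that the itinerary splits into a bounded number of monotone \emph{phases}, each an arithmetic progression of boundary positions labelled by a periodic sequence of meta-states, together with $O(1)$ transient segments near the endpoints. Each phase is produced from $w_1|\cdots|w_n$ by using squaring to manufacture linearly many boundary markers and then a rational function to paste on the periodic labels, and the phases are glued together using Sublemma~\ref{sublemma:poly-cat}. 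The second step expands each itinerary entry $(i,p,d)$ into its contextualised sweep: the plain partial run $\run(p w_i)$ or $\run(w_i p)$ is extracted using the goodness of $L$, routed through Lemma~\ref{lem:ifthenelse} to select the right entry state and side; the surrounding context $w_1\cdots w_{i-1}$ and $w_{i+1}\cdots w_n$ is then injected into every configuration of the sweep using an unbounded version of the $\odot$ operation, directly generalising Sublemma~\ref{sublemma:iterated-concat} via squaring and iterated reverse. Concatenating the expanded sweeps in itinerary order yields $\run(q w_1\cdots w_n)$.

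\textbf{Main obstacle.} The crux is producing the itinerary: this amounts to simulating a two-way automaton on an essentially unary input using only squaring, iterated reverse, rational functions and the closure properties already established, all while keeping the output length linear in $n$ and in the correct order. The pendulum-with-drift analysis of $\Aa'$ pins down the shape of the trajectory, but packaging it as a polyregular function requires careful bookkeeping to handle transients at the endpoints and to thread the itinerary together with Sublemma~\ref{sublemma:poly-cat}. The first-order case is no harder: aperiodicity of the crossing-type semigroup forces the period of $\Aa'$'s interior cycle to equal $1$, so all arithmetic progressions involved are first-order definable and every rational function used can be chosen first-order rational.
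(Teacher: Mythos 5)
Your sweep-by-sweep decomposition and the drift analysis of the meta-automaton are sound in outline, but the step that packages the itinerary as a polyregular function has a real hole. Each meta-step moves the head to an \emph{adjacent} block, so within a drift phase whose interior cycle has period $k>1$ the sequence of visited blocks is not an arithmetic progression: a period with net displacement $\Delta$ produces a $\pm 1$ walk that locally zigzags, e.g.\ $5,6,5,6,7,6,7,8,7,\dots$ for displacements $+1,-1,+1$. The temporal order of sweeps is therefore not the left-to-right order of the visited blocks. Squaring emits one marked copy per position \emph{in left-to-right order of the mark}, and the other atomic operations (rational functions, iterated reverse) preserve the relative order of unbounded-length blocks, so ``squaring followed by a rational function to paste on the periodic labels'' cannot output the marked copies in itinerary order; nor does gluing the $k$ interleaved progressions with Sublemma~\ref{sublemma:poly-cat} help, since that concatenates them rather than interleaving them. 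This ordering mismatch is exactly the obstruction that makes the lemma nontrivial. The gap is fixable---group the visits of a phase into periods anchored at an arithmetic progression of blocks, produce one squaring-copy per anchor, and expand each anchor into its $k$ visits (at bounded offsets, in a fixed local order) via the bounded concatenation of Sublemma~\ref{sublemma:poly-cat} combined with Sublemma~\ref{sublem:iterated-polyregular}---but as written the plan asserts monotonicity where there is none and thereby skips the only genuinely delicate point. (In the first-order case the period is $1$ and the walk really is monotone, so there your argument goes through; it is the general case that breaks.)

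For comparison, the paper sidesteps the reordering problem by re-chunking the run by blocks rather than by sweeps: $\rho_i$ is the portion of the run from the first visit to $w_i$ until just before the first visit to $w_{i+1}$, so that $\rho=\rho_1\cdots\rho_n$ with chunks indexed in left-to-right block order---precisely the order that squaring produces. A pumping argument (Sublemma~\ref{claim:head-moves-little}) shows each $\rho_i$ stays within a window of at most $|Q|$ blocks ending at $w_i$, so it can be computed uniformly from a marked window via Sublemma~\ref{sublemma:middle-q}, with the windows generated in order by Sublemma~\ref{sublemma:windows} and processed one by one using Sublemma~\ref{sublem:iterated-polyregular}. If you want to keep the itinerary-based route, you should carry out the grouping-by-periods construction in detail; otherwise, switching to a block-indexed chunking of the run removes the difficulty at the source.
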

\begin{proof}
Suppose that the input is 
\begin{align*}
 w_1 | \ldots | w_n \qquad \text{with }w_1,\ldots,w_n \in L
\end{align*}
Let $q$ be a state as in the assumption of the lemma, and let us write
\begin{align*}
  \rho \eqdef \run (qw_1 \cdots w_n)
\end{align*}
For $i \in \set{1,\ldots,n}$, define
\begin{itemize}
	\item[$\rho_i$] is sequence of configurations from  $\rho$ that begins in the configuration with the  first visit in $w_i$ and ends just before the first visit in $w_{i+1}$; and
	\item[$q_i$]  is the state used in the first visit in $w_i$.
\end{itemize}
 If $\rho$ never visits $w_i$, then $\rho_i$ is empty and $q_i$ is undefined. If $\rho$ never visits $w_{i+1}$, or $i=n$, then $\rho_i$ is a suffix of $\rho$.  Note that the first visit in $w_i$ necessarily is in the leftmost position, since $\rho$ begins in the leftmost position of $w_1| \cdots | w_n$. By the assumption that all of the words $w_1,\ldots,w_n$ have the same crossing type, it follows that the sequence $q_1,\ldots,q_n$ behaves in a certain cyclic way, this will be explained in the proof of Sublemma~\ref{sublemma:windows} below, but not used.
By definition we have
\begin{align*}
  \rho = \rho_1 \cdots \rho_n
\end{align*}
The goal is to show that the function $w_1| \ldots | w_n \mapsto \rho$ is polyregular, where the number $n$ of words is not fixed, but their crossing type is. 

The following lemma is a version of Lemma~\ref{lem:good-concatenation} which allows us to compute runs on concatenations of two inputs from good languages, but with the head positioned between the inputs and not over the first position. If $w,v$ are partial inputs and $p$ is a state, then we write $w(pv)$ to denote the partial configuration where the head is in state $p$ over the first position of the input word $v$; we need the parenthesis since $(wp)v$ means something else, namely that the head is in the last position of $w$. 
\begin{sublemma}\label{sublemma:middle-q} 
If $L,K$ are good then there is a polyregular function which agrees with $\run$ on partial configurations of the form 
\begin{align*}
w(pv) \qquad \text{where }w \in K, v \in L, p \in Q.
\end{align*}
\end{sublemma}
\begin{proof}
The same reasoning, using crossing sequences, as in the proof of Lemma~\ref{lem:good-concatenation}. 
\end{proof}
The above sublemma also gives the implication from Lemma~\ref{lem:all-good} to Lemma~\ref{lem:run-is-polyregular}. 
By unfolding the definition of $\rho_i$, we see that 
\begin{align*}
  \rho_i = \run(w_1 \cdots w_{i-1} (q w_i))  \odot w_{i+1} \cdots w_n
\end{align*}
Using Sublemma~\ref{sublemma:middle-q}, we can compute $\rho_i$ for any fixed value of $i$, since the language
$L^{i-1}$ is good by  Lemma~\ref{lem:good-concatenation} applied several times. However, this approach yields a polyregular function which depends on $i$, while we need a uniform construction that does not depend on $i$, since the number runs $\rho_1,\ldots,\rho_n$ is unbounded.  To get the uniform construction, the  following observation is crucial.

\begin{sublemma}\label{claim:head-moves-little}
For every $w_1,\ldots,w_n \in L$  and every  $i \in \set{1,\ldots,n}$,  the run $\rho_i$ visits only words $w_j$ with $  i - j  \le |Q|$.
\end{sublemma}
\begin{proof}  A pumping argument, which uses the assumption that all words in $L$ have the same crossing type. Suppose that $\rho_i$ visits $w_{i-k}$ for some $k \in \set{1, 2,\ldots}$ such that $i-k>1$. By the assumption that all of the words $w_1,\ldots,w_n$ have the same crossing type, also $\rho_{i-1}$ visits $w_{i-k-1}$. A corollary is that if $\rho_i$ visits $w_1$, then the same is true for $\rho_{i-1}$. Since every input position can be visited at most once in each state, it follows that $\rho_{i}$ cannot visit the first position when $i$ exceeds the number of states, and the result follows\footnote{If we  assume that the fixed crossing type of all words in $L$ is idempotent, then we could get a stronger conclusion, namely that $\rho_i$ visits only $w_i$ and $w_{i-1}$. However, as explained in Footnote~\ref{foonote:non-idempotent}, the assumption on idempotency cannot be made in the first-order definable case.}.  \end{proof}

For words $w_1,\ldots,w_n$ and $i \in \set{1,\ldots,n}$, consider the decomposition of the word $w_1 | \cdots | w_n$ into four parts shown below
\begin{align*}
  \underbrace{w_1 |  \cdots |w_{j-1}|}_{x_i} \underbrace{w_{j}| \cdots |w_{i-1}|}_{y_i} \underbrace{w_i|}_{w_i} \underbrace{w_{i+1}| \cdots |w_n}_{z_i} \qquad j = \max(1, i - |Q|).
\end{align*}
In other words, $y_i w_i$ describes a window of the $\le |Q|$ blocks that contain the head positions of the  run $\rho_i$. 
By Sublemma~\ref{claim:head-moves-little}, 
\begin{align}\label{eq:four-parts}
  \rho_i =  x_i \odot  \run(y_i q_i w_i) \odot z_i
\end{align}
The partial input $y_i$  comes from a good language (after removing the separators), thanks to Lemma~\ref{lem:good-concatenation} iterated at most $|Q|$ times. (We need to use closure of good languages under unions, because the number of iterations could be $\le |Q|$, but closure under unions is an easy corollary of the if-then-else construction in Sublemma~\ref{lem:ifthenelse}.)  Therefore, we can use Sublemma~\ref{sublemma:middle-q} to compute the run $\rho_i$, assuming that the decomposition into $x_i, y_i,  w_i, z_i$ is given and the state $q_i$ is known. This decomposition can indeed be computed, thanks to the following result (as usual, we use red to denote a disjoint copy of the alphabet).

\begin{sublemma}\label{sublemma:windows}
	The following function is polyregular
	\begin{align*}
	w_1 | \cdots | w_n \quad \mapsto \quad x_1 \red{y_1 q_1 w_1} z_1  | \cdots | x_n \red{y_n q_n w_n} z_n
\end{align*}
where $w_1,\ldots,w_n \in L$ and $|$ is a fresh separator symbol. 
\end{sublemma}
\begin{proof}
	Similar to Sublemma~\ref{sublemma:iterated-concat}: first take a square, and then apply rational post-processing. The states $q_1,\ldots,q_n$ can be computed using a rational function (which is a first-order rational function in the case when the two-way automaton is first-order definable). Actually, the assumption that all words in $L$ have the same crossing type can be used to obtain a stronger result, namely that the sequence $q_1,q_2,\ldots$ is a lasso, in the following sense. There exist $k,k_0 \in \Nat$, which depend only on the crossing type of $L$, such that
	\begin{align*}
  q_i = q_{i+k} \qquad \text{for all }i \ge k_0.
\end{align*}
Furthermore, if the two-way automaton is first-order definable, then $k=1$, since otherwise there would be a counter.
\end{proof}

To complete the proof of the lemma, the final piece is the following result, which shows that polyregular  functions can be iterated over blocks in an input word with separators.
\begin{sublemma}\label{sublem:iterated-polyregular}
	If $f : \Sigma^* \to \Gamma^*$ is polyregular then the same is true for
	\begin{align*}
  w_1 | \cdots | w_n  \quad \mapsto \quad f(w_1)| \cdots | f(w_n)
\end{align*}
where $|$ is a fresh separator symbol not appearing in $w_1,\ldots,w_n$. 
\end{sublemma} 
\begin{proof}
It is enough to prove the lemma for the atomic polyregular functions. For sequential functions, the construction is natural, likewise for iterated reverse. For squaring, we illustrate the construction on an example. Suppose that the input is like this:
\begin{align*}
  12|3|45 
\end{align*}
We first use a sequential function to add a marker at the end (a comma), and then  apply squaring, yielding a result like this:
\begin{align*}
  \underline 12|3|45,   1 \underline  2|3|45,  12\underline |3|45,  12|\underline 3|45,  12|3\underline |45,  12|3|\underline 45,  12|3|4\underline 5,  12|3|45\underline , 
\end{align*}
Use a rational function to colour red every block (defined as maximal infix which has neither $|$ nor $,$) which contains an underlined position:
\begin{align*}
 \red{ \underline 12}|3|45,   \red{1 \underline  2}|3|45,  12\underline |3|45,  12|\red{\underline 3}|45,  12|3\underline |45,  12|3| \red{\underline 45},  12|3|\red{4\underline 5},  12|3|45\underline , 
\end{align*}
Keep only the red letters, and for every two consecutive red blocks where the  underlined position goes from first to last, separate the blocks using $|$.
\end{proof}

Let us complete the proof of the lemma. Suppose that the input is 
\begin{align*}
  w_1 | \cdots | w_n \qquad \text{with }w_1,\ldots,w_n \in L.
\end{align*}
Using Sublemma~\ref{sublemma:windows},  compute the word
\begin{align*}
x_1 \red{y_1 q_1 w_1} z_1  | \cdots | x_n \red{y_n q_n w_n} z_n
\end{align*}
By Sublemma~\ref{sublemma:middle-q}, the function
\begin{align*}
x_i \red{y_i q_i w_i} z_i \quad \mapsto \quad \rho_i  
\end{align*}
is polyregular. Applying Sublemma~\ref{sublem:iterated-polyregular} to this function,  compute 
\begin{align*}
  \rho_1 | \cdots | \rho_n.
\end{align*}
Finally, we remove the separators. 
\end{proof}

\subsection{Many pebbles}
\label{sec:many-pebbles}
In Section~\ref{sec:one-pebble}, we showed that a polyregular function can compute the run of a 1-pebble automaton. In this section, we lift the result to $k$-pebble automata, by reducing to the 1-pebble case. The reduction is given in  following lemma.

\begin{lemma}\label{lem:reduce-pebbles}
Let $k>1$. For every
\begin{align*}
  f : \Sigma^* \to \Gamma^*
\end{align*}
recognised by a $k$-pebble transducer
there exist:
\begin{enumerate}
\item a rational\footnote{Actually, the rational function in item 1 could be avoided  using Lemma~\cite[Lemma 3]{ullman1967approach}, which says that a two-way automaton can simulate regular lookaround.} function $g : \Sigma^* \to \Delta^*$; 
	\item a $1$-pebble automaton $\Aa$ with input alphabet $\Delta$;  
	\item a $(k-1)$-pebble transducer $h$ that inputs runs of $\Aa$;
\end{enumerate}
such that the following diagram commutes:
\begin{align*} 
	\xymatrix{ \Sigma^* \ar@[red][d]_{\red g}  \ar[rr]^f&  &\Gamma^* \\  \red{\Delta^*} \ar@[red][rr]_{\red {\run}} & &\red{\runs} \ar@[red][u]_{\red h}}
\end{align*}
where $\run$ is the function from Lemma~\ref{lem:run-is-polyregular} that maps an input word of $\Aa$ to the corresponding run of  $\Aa$, as described in Section~\ref{sec:one-pebble}.
Furthermore, if the automaton for $f$ is first-order definable, then the same is true for $g$, $\Aa$ and $h$.
\end{lemma}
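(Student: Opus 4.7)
The plan is to isolate the behavior of pebble~$1$ of the given $k$-pebble transducer $T$ into $\Aa$, delegate the remaining pebbles $2,\ldots,k$ to $h$, and use $g$ to precompute, as annotations on the input word, the summary information that $\Aa$ needs in order to handle push/pop pairs atomically. Concretely, any run of $T$ decomposes into \emph{outer steps}, where only pebble~$1$ is on the stack and it is moved by a single transition, and \emph{inner phases}, which start by pushing pebble~$2$ at the current position of pebble~$1$, consist of an entire $(k-1)$-pebble subcomputation on pebbles $2,\ldots,k$, and end by popping pebble~$2$. During an inner phase the position of pebble~$1$ stays fixed, and the net effect of the phase on the control state and the output is determined by the input word, the position $p$ of pebble~$1$, and the state $q$ at which the push occurred.

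For such a phase, define the summary $\sigma(w,p,q)$ to be the control state of $T$ at the termination of the corresponding inner phase (or a distinguished failure symbol if the phase fails to terminate). The key claim is that ``$\sigma(w,p,q)=q'$'' is an \mso-definable unary query on $w$, and a first-order-definable one under the FO hypothesis on $T$. This follows from Lemma~\ref{lem:all-pebble-mso} (respectively from Definition~\ref{def:fo-definable-pebble-automaton}): the predicate simply asserts the existence, inside $w$, of a run of $T$ from the configuration with pebbling $(p,p)$ and the state produced by the push, to some $2$-pebble configuration $((p,p'),q'')$ whose next, deterministic transition pops pebble~$2$ and yields $q'$.

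With the summary in hand, I assemble the three ingredients. The rational function $g$ enlarges the alphabet to $\Delta = \Sigma \times (Q \to Q \cup \{\bot\})$ and annotates each position $p$ with the table $q \mapsto \sigma(w,p,q)$; since each entry is cut out by a fixed unary \mso (resp.\ FO) query, $g$ is a rational (resp.\ first-order rational) function, obtained by applying Definition~\ref{def:rational-function} with one query per possible table entry. The automaton $\Aa$ then emulates pebble~$1$ of $T$: non-push transitions are mirrored verbatim, while a push transition at state $q$ is collapsed into a single $\Aa$-transition that reads the annotation under the head and jumps directly to $\sigma(w,p,q)$. Its single pebble is simply pebble~$1$ of $T$, so $\Aa$ is a genuine $1$-pebble automaton. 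Its run, produced by $\run$, is exactly the sequence of $T$-configurations in which only pebble~$1$ is present. The $(k-1)$-pebble transducer $h$ scans this run and, at each configuration, outputs either the word produced by the corresponding outer transition of $T$, or, at a push configuration, uses its $k-1$ pebbles as pebbles $2,\ldots,k$ of $T$ to simulate the inner phase and stream out the output it produces. Commutativity of the diagram is then immediate from the definitions.

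The main obstacle is establishing the \mso-/FO-definability of the summary $\sigma$: one must phrase it as reachability to a configuration that is \emph{immediately before} a pop of pebble~$2$, which requires combining the existing reachability formulas with a single quantifier over the finite transition function of $T$. Once this definability is in place, the constructions of $g$, $\Aa$, and $h$ and the verification that $f = h \circ \run \circ g$ are essentially bookkeeping; preservation of first-order definability is automatic, because $g$ is built from FO queries and both $\Aa$ and $h$ inherit their reachability formulas from those of~$T$.
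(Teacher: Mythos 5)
Your proposal is correct and follows essentially the same route as the paper: the paper likewise isolates the ``main configurations'' (only pebble~$1$ present), uses Lemma~\ref{lem:all-pebble-mso} (respectively, the definition of first-order definability) to show that the state-and-offset summary of each inner subcomputation is an \mso-/FO-definable unary query, lets $g$ annotate each position with this summary table, lets $\Aa$ replay pebble~$1$'s trajectory on the annotated word, and lets a $(k-1)$-pebble transducer regenerate the output of each inner phase, which by stack discipline never moves pebble~$1$. The only cosmetic difference is that the paper packages the summary as triples $(p,\delta,q)$ covering both single moves of pebble~$1$ and push--pop phases uniformly, rather than treating outer steps and inner phases as separate cases.
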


Before proving the lemma, let us use it to show that  every pebble transducer recognises a polyregular function. The  proof is by induction on the number of pebbles. For the induction base, we use Lemma~\ref{lem:run-is-polyregular}, which says that a polyregular function can transform an input word into the run of a given 1-pebble automaton; once the run is given, the output of the run can be recovered using a rational function. For the induction step, we apply Lemma~\ref{lem:reduce-pebbles}, and observe that all three red arrows in the diagram from the lemma are polyregular: $g$  is polyregular because it is rational, $\run$  is polyregular by Lemma~\ref{lem:run-is-polyregular}, and $h$ is polyregular by induction assumption. It remains to prove the lemma.

\begin{proof}[of Lemma~\ref{lem:reduce-pebbles}] The basic idea is that a run of a $k$-pebble automaton can be decomposed into configurations that have only pebble 1, and intermediate subcomputations that do not move pebble 1. The intermediate subcomputations  can be simulated using $k-1$ pebbles. The rational function $g$ is used to determine how pebble 1 is moved, without entering into the subcomputations that use more pebbles. A more formal proof is given below.

Define a \emph{main configuration} in a $k$-pebble automaton to be a configuration  where only pebble 1 is present.  

\begin{sublemma}\label{lem:guiding-theory}
	 Let $\Bb$ be a $k$-pebble automaton with input alphabet $\Sigma$. 
There is a letter-to-letter rational function
\begin{align*}
g : \Sigma^* \to \Delta^*	
\end{align*}
such that the following is true for every $w \in \Sigma^*$.  Let $(q_i, x_i)$ be the state and head position in the $i$-th main configuration of the automaton in the run of $\Bb$ on input $w$. Then  
\begin{align*}
  \underbrace{x_{i+1} - x_i}_{\text{and offset in $\set{-1,0,1}$}} \qquad \text{and} \qquad q_{i+1}
\end{align*}
depend only on $q_i$ and the label of $g(w)$ in position $x_i$.  Furthermore, if $\Bb$ is first-order definable, then $g$ is first-order rational.
\end{sublemma}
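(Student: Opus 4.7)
The plan is to exploit the fact that a main configuration $(q_i, x_i)$ has a uniquely determined successor main configuration $(q_{i+1}, x_{i+1})$ (unless the automaton accepts or the run falls off the input), and that $x_{i+1} - x_i \in \{-1,0,1\}$ follows from stack discipline: if the next transition out of $(q_i, x_i)$ is a head move, pebble $1$ shifts by at most one; if instead it is a \texttt{push}, pebble $1$ stays frozen throughout the ensuing excursion with pebbles $2,\ldots,k$ and returns as the sole pebble at the same position once the excursion eventually pops back. The output alphabet $\Delta$ is therefore chosen to encode, at each position $x$, the finite function
\[
 q \in Q \ \mapsto\ (q',\delta) \in Q \times \{-1,0,1,\bot\}
\]
that records the next main configuration $(q', x+\delta)$ reached from $(q,x)$, with a distinguished symbol $\bot$ for the case in which the automaton halts.

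First I would split the analysis by case on the transition out of $(q,x)$ in the $1$-pebble subconfiguration. In the direct case, where the action is $-1$, $0$, or $1$, the outcome is read off the transition function and depends only on $q$ and on the label of $w$ at $x$ (together with the first/last predicates at $x$). In the \texttt{push} case, pebble $2$ is pushed on top of pebble $1$ at $x$ in some state $q^+$ determined by the transition function, and $\Bb$ then deterministically runs a subcomputation that, if it does not halt, eventually pops pebble $2$ in a uniquely determined state $q^-$; the outcome is then $(q^-, 0)$.

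The key step is to show that $q^-$ is expressible by a unary query over $(w,x)$. In the general case this is Lemma~\ref{lem:all-pebble-mso}: reachability between configurations of $\Bb$ is \mso-definable, so the formula $\varphi^{2,1}_{q^+,q^-}(x,x;x)$ supplied by that lemma is a unary \mso query on $x$ which holds exactly when the excursion that starts in state $q^+$ at $x$ terminates in state $q^-$. In the first-order case, the same reachability relation is first-order definable by the very assumption of Definition~\ref{def:fo-definable-pebble-automaton}, so the analogous query is first-order. Combining the direct and \texttt{push} cases, for every $q \in Q$ and every candidate outcome in $Q \times \{-1,0,1\} \cup \{\bot\}$ there is a unary query on $(w,x)$ that holds iff the main configuration after $(q,x)$ is precisely that outcome.

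To finish, set $g(w)_x$ to be the unique element of $\Delta$ whose coordinates match the truth values of these queries at $x$. Since $\Delta$ is finite and each letter is selected by a Boolean combination of finitely many unary \mso (resp.\ first-order) queries, $g$ matches Definition~\ref{def:rational-function} of a rational function, and of a first-order rational function in the FO-definable case; it is letter-to-letter by construction. The main obstacle to watch for is the first-order case: one must avoid routing through Lemma~\ref{lem:all-pebble-mso}, whose proof uses a transitive-closure fixpoint that requires set quantification, and instead rely directly on the first-order reachability formulas postulated by Definition~\ref{def:fo-definable-pebble-automaton}.
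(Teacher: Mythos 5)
Your proof follows the paper's argument exactly: label each position $x$ with the outcomes of the unary \mso queries describing the next main configuration (obtained from Lemma~\ref{lem:all-pebble-mso}, or directly from Definition~\ref{def:fo-definable-pebble-automaton} in the first-order case), and observe that such a labelling is a letter-to-letter rational function. The only nuance is that the raw reachability formula $\varphi^{2,1}_{q^+,q^-}(x,x;x)$ you invoke asserts reachability rather than \emph{first} reachability of a one-pebble configuration (the run could pop back to pebble $1$ at $x$ several times in different states), so you should restrict to runs whose intermediate configurations keep at least two pebbles --- a restriction that the superficial-run decomposition in the proof of Lemma~\ref{lem:all-pebble-mso} already provides.
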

\begin{proof}
Thanks to  Lemma~\ref{lem:all-pebble-mso}, for every state $q,p$ and offset $\delta \in \set{-1,0,1}$ there is an \mso sentence $\varphi_{q,\delta,p}(x)$ with one free first-order variable such that 
\begin{align*}
w \models \varphi_{q,\delta,p}(x)	
\end{align*}
if and only if the automaton $\Bb$, when in main configuration $(q,x)$, does a subcomputation such that the next main configuration is $(p,x+\delta)$. The  function $g$ simply labels each position $x$ with the set
\begin{align*}
\set{(p,\delta,q) : w \models \varphi_{p,\delta,q}(x)}	
\end{align*}
This function is rational, thanks to compositionality of \mso.  In the case when $\Bb$ is first-order definable, we do not need to use Lemma~\ref{lem:all-pebble-mso}, but we just appeal to the definition of a first-order definable pebble automaton.
\end{proof}
 
Using the function $g$ from the above sublemma, it is not hard to design a 1-pebble automaton $\Aa$, which has the same states as $\Bb$, with the following property. For every input word $w$, the $i$-th configuration of $\Aa$ (which is also the $i$-th main configuration, since all configurations are main configurations in a 1-pebble automaton) in the word $g(w)$  has the same state and head position as the $i$-th main configuration of $\Bb$ on input $w$.  Finally, to recover the output of $f$, we apply to each configuration of $\Aa$ the transducer from the following sublemma. 
\begin{sublemma}\label{sublemma:intermediate-subcomp}
	There is a $(k-1)$-pebble transducer which inputs a  configuration $c$ of $\Bb$, and returns the output of $\Bb$ in the subcomputation that starts in $c$  and ends in the next main configuration (not including the output produced in the next main configuration).
\end{sublemma}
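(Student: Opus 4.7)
The plan is to build the $(k-1)$-pebble transducer as a direct step-by-step simulation of $\Bb$. The key observation is that during the subcomputation between two consecutive main configurations, pebble~$1$ of $\Bb$ never moves (the stack discipline forces all activity onto pebbles $2, \ldots, k$), with the sole exception of the degenerate case where the very first transition of $\Bb$ moves pebble~$1$, in which case we simply emit the single produced letter and halt. Since the position of pebble~$1$ is recorded in the marker of the encoded configuration $c$ that serves as the input to the transducer, we do not need a pebble of our own to track it.

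First I would have the transducer scan its input to locate the marker and read off $\Bb$'s state, placing its sole initial pebble at the marker position. I would then simulate $\Bb$ transition-by-transition: $\Bb$'s state is kept in the finite control; $\Bb$'s push of pebble~$2$ is matched by turning the transducer's initial pebble into the tracker of $\Bb$'s pebble~$2$ (it is already at the marker, exactly where pebble~$2$ is born); subsequent pushes, pops and moves of $\Bb$'s topmost pebble are matched by the same actions on the transducer's topmost pebble. Pebble comparisons involving $\Bb$'s pebble~$1$ are answered from the input marker, those between pebbles $2, \ldots, k$ from the transducer's own stack. Every output letter produced by $\Bb$ is emitted. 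The simulation halts either immediately (if the first transition moves pebble~$1$) or the moment $\Bb$ would pop its pebble~$2$; the letter produced at that final step is not emitted. In this way the transducer uses at most $k-1$ pebbles, matching $\Bb$'s pebbles $2, \ldots, k$ one-for-one.

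The only nontrivial point is the first-order case. I would argue that if $\Bb$ is first-order definable in the sense of Definition~\ref{def:fo-definable-pebble-automaton}, then so is the transducer we construct. A reachability query for the transducer translates, via the simulation, to the existence of a run of $\Bb$ between two configurations that agree on the position and state of pebble~$1$ and whose run never pops below pebble~$2$; this restriction is expressible in first-order logic on top of the reachability formulas $\varphi^{ij}_{pq}$ of $\Bb$. I expect this to be routine rather than the main obstacle; the real substance lies in the clean separation of pebble~$1$, handled by the input marker, from pebbles $2, \ldots, k$, handled by the pebble stack, which is precisely what makes the reduction in the pebble count possible.
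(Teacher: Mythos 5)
Your proposal is correct and takes essentially the same approach as the paper, whose entire proof of this sublemma is the one-line observation ``By stack discipline, the subcomputation does not move pebble 1'' --- exactly your key point that pebble~1 can be read off the input marker while the transducer's $k-1$ pebbles track $\Bb$'s pebbles $2,\ldots,k$. Your additional details (the degenerate single-transition case, the re-purposing of the initial pebble at the push of pebble~2, and the first-order preservation) are all routine elaborations that the paper leaves implicit.
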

\begin{proof}
By stack discipline, the subcomputation does not move pebble 1.
\end{proof}
\end{proof}

\section{Polyregular functions to list functions}
\label{sec:polyregular-to-programs}
In this section we prove that every  polyregular function (i.e.~any composition of sequential functions, squaring and iterated reverse, as described in Section~\ref{sec:polyregular}) is a polynomial list function (i.e.~can be defined in the functional programming language from Section~\ref{sec:list-programs}), and the construction preserves first-order definability. Polynomial list functions (and their first-order fragment) are closed under function composition: if $M$ and $N$ are polynomial list function, then their composition is the program
\begin{align*}
\lambda x. M (N x).
\end{align*}
Therefore, it remains to show that the basic building blocks of polyregular functions are polynomial list functions, which we do in the following sections:
\begin{itemize}
	\item Iterated reverse in Section~\ref{sec:reverse-to-haskell}; 
	\item Squaring in Section~\ref{sec:squaring-to-haskell};
	\item Sequential functions in Section~\ref{sec:rational-to-haskell}.
\end{itemize}
The main challenge is the sequential functions, since this requires showing that polynomial list functions can simulate the state updates in a finite state automaton. For this, we use a result from~\cite{DBLP:conf/lics/BojanczykDK18}, which in turn requires implementing a function called ${\blockterm}$, which separates a list which has two types of elements into a list of lists which have only one type of elements, as shown in the following example:
\begin{align*}
	[1,2,a,b,c,3,d,e,4,5,6] \quad \mapsto \quad [[1,2],[a,b,c],[3],[d,e],[4,5,6]]
\end{align*}

The code for the polynomial list functions described in this section, written in Haskell notation, is given in Appendix~\ref{sec:haskell}, and therefore this section only illustrates the programs on examples.  The interested reader is invited to read (or run) the code from the appendix.

\subsection{Iterated reverse} 
\label{sec:reverse-to-haskell}
We begin by showing that iterated reverse is a first-order polynomial list function. We first implement(not iterated) reverse (Lemma~\ref{lem:not-iterated-reverse}), and then we implement  $\blockterm$    (Lemma~\ref{lem:block}).
\begin{lemma}\label{lem:not-iterated-reverse}
	Reverse is a first-order polynomial list function.
\end{lemma}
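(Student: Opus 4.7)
I need to exhibit a closed term of type $\tau^* \to \tau^*$ in the polynomial list calculus whose denotation is $\reverseterm$, using only the atomic first-order operations (in particular no group combinator).

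The starting observation is that $\splitterm$ enumerates the positions of $x = [x_1,\ldots,x_n]$ in increasing order of prefix length: the naive composition $\mapterm\,(\lambda p.\,\headterm(\projterm_1\,p)) \circ \splitterm$ merely returns $[x_1,\ldots,x_n,\bot]$, i.e.\ the identity modulo a trailing $\bot$.  The task therefore reduces to rearranging these heads into the opposite order, using only $\mapterm$, $\flatterm$, $\splitterm$, $\headterm$, $\tailterm$, the pair and coproduct constructors, and $\caseterm$.

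My plan is a nested $\mapterm$--$\splitterm$ construction producing an $O(n^2)$-sized two-dimensional enumeration indexed by pairs of positions $(i,j)$.  For each cell, the term emits either a singleton head or the empty list, chosen so that after two applications of $\flatterm$ the surviving emissions appear in the order $x_n, x_{n-1},\ldots, x_1$.  The natural target is the ``anti-diagonal'' $|xs_i| = |ys_j|$ (equivalently $i+j = n$): along it, $\headterm(\projterm_1\,p_j)$ hits each $x_k$ exactly once, and because the flattening order traverses the outer index $i$ from $0$ to $n$ while the matched inner index $j = n-i$ decreases, the flattened output is precisely $[x_n, x_{n-1},\ldots,x_1]$.

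The main obstacle is to pick out the anti-diagonal cells structurally, since the calculus provides neither equality on $\tau$ nor numerical comparison of lengths. I would handle this by a polymorphic structural matching between $xs_i$ and $ys_j$ carried out with further $\splitterm$ applications and case analyses on $\headterm$, $\tailterm$ over the coproduct $\tau + \set{\bot}$---a check that inspects only the shape of the two lists, not their contents.  Since the resulting term uses no group primitive, it lies in the first-order fragment, witnessing that $\reverseterm$ is a first-order polynomial list function.
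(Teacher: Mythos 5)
There is a genuine gap, and it sits exactly where you yourself locate the ``main obstacle''. Your construction needs, for each cell of the $O(n^2)$ grid, a test of whether $|xs_i|=|ys_j|$, and you propose to realise it by a ``polymorphic structural matching'' of the two lists using further $\splitterm$, $\headterm$, $\tailterm$ and $\caseterm$. No such term is exhibited, and the test in its general form is not expressible in the calculus: an equal-length predicate on two lists would make the characteristic function of $\set{w\#v : |w|=|v|}$ a polynomial list function, hence (by the equivalence with polyregular functions and the regularity of preimages, Theorem~\ref{thm:regular-continuous}) that non-regular language would be regular. Without $\mathtt{fold}$ there is no way to consume two lists in lockstep, so any working version of your anti-diagonal selection would have to exploit the fact that $xs_i$ and $ys_j$ are complementary pieces of the \emph{same} input word; that is precisely the part you leave unspecified, so the proof does not go through as written.

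The larger issue is that the whole $O(n^2)$ detour is unnecessary, and the paper's proof is essentially a one-liner. Per Example~\ref{ex:splitterm} (and the Haskell code in the appendix), $\splitterm\,[x_1,\ldots,x_n]$ enumerates the pairs with the \emph{second} coordinate running through the suffixes in order of \emph{increasing} length: $[],[x_n],[x_{n-1},x_n],\ldots,[x_1,\ldots,x_n]$. So it suffices to project every pair to its second coordinate, map each resulting list to the singleton containing its head (and to $[]$ when empty, via $\caseterm$), and apply $\flatterm$; the output is $[x_n,x_{n-1},\ldots,x_1]$. You observed that taking heads of one projection yields the identity, but that computation presupposes the opposite enumeration order (the one shown in Figure~\ref{fig:atomic-combinators-1}, which is indeed inconsistent with Example~\ref{ex:splitterm}); under the order actually used in the paper, the same three-step composition yields the reverse directly, with no length comparison and no nested $\splitterm$.
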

\begin{proof} The corresponding Haskell code is in Appendix~\ref{sec:haskell-iterated-reverse}, so we just show the construction on an example.
Suppose that the input list is 
\begin{align*} 
[1,2,3,4]	
\end{align*}
Apply $\splitterm$, yielding a list like this:
\begin{align*}
[([1,2,3,4],[]),([1,2,3],[4]),([1,2],[3,4]),([1],[2,3,4]),([],[1,2,3,4])]	
\end{align*}
Project every pair in the above list to its second coordinate, yielding
\begin{align*}
[[],[4],[3,4],[2,3,4],[1,2,3,4]]	
\end{align*}
To each element of the above list,  apply the function 
\begin{align*}
[a_1,\ldots,a_n] \mapsto \begin{cases}
	[]  & \text{if $n=0$}\\
	[a_1] & \text{otherwise}
\end{cases}	
\end{align*}
yielding a list like this:
\begin{align*}
[[],[4],[3],[2],[1]]	.
\end{align*} 
Finally, apply $\flatterm$, yielding the desired list
\begin{align*}
	[4,3,2,1].
\end{align*}
\end{proof}

We  now show that polynomial list functions can implement the $\blockterm$ function mentioned at the beginning of Section~\ref{sec:polyregular-to-programs}. More formally, for types $\tau$ and $\sigma$, define  $\blockterm_{\tau \sigma}$ to be the function which maps a list $l \in (\tau + \sigma)^*$  to the unique list in $(\tau^* + \sigma^*)^*$  which: (a) yields $l$ after applying $\flatterm$; and (b) alternates between lists in $\tau^+$ and $\sigma^+$, in particular contains only nonempty lists. The function $\blockterm$  is one of the basic building blocks in the programming language from~\cite{DBLP:conf/lics/BojanczykDK18}, but it turns out to be implementable in the programming language from this paper. 

\begin{lemma}\label{lem:block}
	For every types $\tau$ and $\sigma$, the function  $\blockterm_{\tau \sigma}$ is a first-order polynomial list function.
\end{lemma}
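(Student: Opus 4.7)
The plan is to enumerate every decomposition $L = xyz$ of the input list, test whether $y$ is a maximal homogeneous block (nonempty, of uniform type, and with no same-type neighbour on either side in $x$ or $z$), and for each decomposition passing the test output a singleton list whose sole element is $y$ reinterpreted as an element of $\tau^* + \sigma^*$. A final $\flatterm$ concatenates these singletons in order, producing $\blockterm(L)$.

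To enumerate all triples $(x,y,z)$, I would apply $\splitterm$ to $L$, then $\mapterm$ over the resulting list of pairs $(x,u)$ with a function that applies $\splitterm$ to $u$ and repackages the data as a list of triples, and finally $\flatterm$. The triples come out in lexicographic order on $(|x|,|y|)$, which is exactly the order in which the maximal blocks of $L$ will appear in the intended output. The boundary conditions at the two ends of $y$ are easy to check: use $\headterm$ on $z$ and on the reverse of $x$ (via Lemma~\ref{lem:not-iterated-reverse}), compare with $\headterm\,y$, and discriminate with $\caseterm$.

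The key subroutine, and the main conceptual obstacle, is the homogeneity test on $y$: in the absence of any fold combinator there is no obvious way to collapse a list of booleans into a single boolean. The trick is to simulate an existential quantifier by an emptiness test on a filtered list: given a predicate $P$ on elements, apply $\mapterm$ to replace each element by either a singleton or the empty list according to $P$, then $\flatterm$, then $\headterm$, obtaining $\bot$ precisely when $P$ holds nowhere. One verifies homogeneity of $y$ by checking that either no element of $y$ is a $\sigma$-element or no element is a $\tau$-element, and one converts a homogeneous $y : (\tau + \sigma)^*$ into the corresponding list in $\tau^*$ or $\sigma^*$ by the same $\mapterm$/$\flatterm$ idiom (sending elements of the wrong type to the empty list and of the right type to a singleton); one then wraps the result with $\coprojterm 0$ or $\coprojterm 1$ according to which case succeeded. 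Everything else is routine plumbing with $\caseterm$, pair constructors, and $\mapterm$; in particular $\groupterm$ is never used, so the construction stays in the first-order fragment.
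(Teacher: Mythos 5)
Your construction is correct and essentially the paper's: both enumerate decompositions with $\splitterm$, detect block boundaries by comparing the $\tau$/$\sigma$ tags of adjacent elements, implement the existential/universal tests via the map-to-singletons-then-$\flatterm$-then-$\headterm$ idiom, and reassemble with $\flatterm$ (the paper splits once to find block starts and extracts each block as the homogeneous prefix of the corresponding suffix, whereas you split twice to enumerate infixes and filter for maximality, but this is the same quadratic enumeration). The one detail to watch is order: with the paper's convention for $\splitterm$ (prefix lengths decreasing), your passing triples come out with $|x|$ decreasing rather than increasing, so, exactly as in the paper's own proof, one application of reverse from Lemma~\ref{lem:not-iterated-reverse} is needed before the final flattening.
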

\begin{proof}
The corresponding Haskell code is in Appendix~\ref{sec:haskell-iterated-reverse}, so we just show the construction on an example.
Suppose that the input list is 
\begin{align*}
[1,2,3,a,b,4,5,6,7,c,8,d,e,f]
\end{align*}
Apply $\splitterm$, and in the resulting list keep only those pairs $(x,y)$ such that $y$ is nonempty, and  either $x$ is empty, or the last element of $x$ has a different type ($\tau$ vs $\sigma$) than the first element of $y$, yielding a result like this:
\begin{align*}
([1,2,3,a,b,4,5,6,7,c,8],[d,e,f])],\\
([1,2,3,a,b,4,5,6,7,c],[8,d,e,f]),\\ 	
([1,2,3,a,b,4,5,6,7],[c,8,d,e,f]),\\
([1,2,3,a,b,4],[5,6,7,c,8,d,e,f]),\\
([1,2,3],[a,b,4,5,6,7,c,8,d,e,f]),\\
[ ([],[1,2,3,a,b,4,5,6,7,c,8,d,e,f])    \end{align*}
Reverse the list, see Lemma~\ref{lem:not-iterated-reverse}, and keep only the second coordinates:
\begin{align*}
[ [1,2,3,a,b,4,5,6,7,c,8,d,e,f],\\ [a,b,4,5,6,7,c,8,d,e,f],\\ [5,6,7,c,8,d,e,f],\\ [c,8,d,e,f],\\ [8,d,e,f],\\ [d,e,f]]	
\end{align*}
Finally, for each element in the result,  keep only the prefix that agrees on type ($\tau$ vs $\sigma$) with the first element of the list, yielding the desired result
\begin{align*}
[ [1,2,3],\\ [a,b],\\ [5,6,7],\\ [c],\\ [8],\\ [d,e,f]]	
\end{align*}
\end{proof}

Iterated reverse  is obtained by  applying the block function from the above lemma (with $\tau$ being the separator and $\sigma$ being the remaining letters), then using map and  reverse from Lemmas~\ref{lem:not-iterated-reverse}, and finally applying $\flatterm$.   

\subsection{Squaring}
\label{sec:squaring-to-haskell}  We now show that squaring is a first-order polynomial list function.  
\begin{lemma}\label{lem:squaring-is-program}
Squaring is a  first-order polynomial list function.
\end{lemma}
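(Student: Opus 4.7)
The plan is to implement squaring by first using $\splitterm$ to enumerate all positions of the input, then using $\mapterm$ to produce one underlined copy of the word per position, and finally using $\flatterm$ to concatenate the copies. Let $\Sigma$ be the input alphabet and write $\underline{\Sigma}$ for the sum type $\Sigma + \Sigma$, where the right injection represents the underlined copy.

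Given an input $w = [a_1,\ldots,a_n] : \Sigma^*$, the first step would be to apply $\splitterm^\Sigma$, yielding
\[
[([],[a_1,\ldots,a_n]),\ ([a_1],[a_2,\ldots,a_n]),\ \ldots,\ ([a_1,\ldots,a_n],[])] : (\Sigma^* \times \Sigma^*)^*.
\]
The heart of the construction is a polynomial list function $g : \Sigma^* \times \Sigma^* \to \underline{\Sigma}^*$ that on a pair $(l,r)$ returns the empty list if $r$ is empty, and otherwise returns the concatenation of (a) $l$ with every letter tagged as non-underlined, (b) the singleton list consisting of the first letter of $r$ tagged as underlined, and (c) the tail of $r$ with every letter tagged as non-underlined. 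Then the whole squaring function is
\[
\lambda x : \Sigma^*.\ \flatterm^{\underline{\Sigma}}\ (\mapterm\ g\ (\splitterm^\Sigma\ x)).
\]
Correctness follows because the pair $([a_1,\ldots,a_{i-1}],[a_i,\ldots,a_n])$ of $\splitterm\ w$ is sent by $g$ to the word $a_1\cdots a_{i-1}\underline{a_i}\,a_{i+1}\cdots a_n$, while the final pair $(w,[])$ contributes nothing, so flattening produces exactly the square of $w$.

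It remains to build $g$ from the atomic combinators. To tag letters, use $\mapterm$ with the coprojections $\coprojterm_0,\coprojterm_1 : \Sigma \to \underline{\Sigma}$; to split $r$ into its head and tail, use $\headterm$ and $\tailterm$; to concatenate three lists, build a three-element list with the list constructor of item~\ref{it:terms-list} of Definition~\ref{def:set-of-terms} and apply $\flatterm$. The hard part will be bookkeeping: both $\headterm$ and $\tailterm$ return sum types with $\bot$, so the typed combination of their outputs requires the same error-handling pattern illustrated in Example~\ref{ex:headtail}, using $\caseterm$ to branch on whether $r$ is empty and to fold both branches back into a single value of type $\underline{\Sigma}^*$. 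Since neither $\groupterm$ nor anything else outside the first-order fragment is used, the resulting term defines a first-order polynomial list function, as required.
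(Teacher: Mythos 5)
Your construction is essentially the paper's own proof: both apply $\splitterm$, map each (prefix, suffix) pair to the corresponding copy with one underlined letter (sending the empty-suffix pair to the empty list so it vanishes under flattening), and concatenate with $\flatterm$; the paper merely interposes an explicit triple $(\text{prefix},\text{head},\text{tail})$ before a reformatting step, which you inline into your function $g$. The only caveat --- shared with the paper's own sketch and repaired in its appendix code by reversing the output of $\splitterm$ --- is that under the ordering of $\splitterm$ from Example~\ref{ex:splitterm} the copies emerge in decreasing order of the underlined position, so one additional application of $\reverseterm$ (available by Lemma~\ref{lem:not-iterated-reverse}) to the outer list is needed before flattening.
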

\begin{proof}The corresponding Haskell code is in Appendix~\ref{sec:haskell-squaring}, so we only illustrate the code using an example.  Suppose that the input is a list like this
\begin{align*}
  [1,2,3,4]
\end{align*}
Apply $\splitterm$,  yielding a list like this
\begin{align*}
[([1,2,3,4],[]),([1,2,3],[4]),([1,2],[3,4]),([1],[2,3,4]),([],[1,2,3,4])]\end{align*}
To every pair in the above list,  apply the function
\begin{align*}
(a,b) \quad \mapsto \quad \begin{cases}
 [] & \text{if $b$ is empty}\\
 [(a, \text{head of $b$}, \text{tail of $b$})] & \text{otherwise}
 \end{cases}	
\end{align*}
yielding a list like this
\begin{align*}
[[],[([],1,[2,3,4])], [([1],2,[3,4])], [([1,2],3,[4])], [([1,2,3],[4],[])]]\end{align*}
Flatten the above list using $\flatterm$, yielding
\begin{align*}
[([],1,[2,3,4]), ([1],2,[3,4]), ([1,2],3,[4]), ([1,2,3],[4],[])]\end{align*}
Next,  apply a reformatting function to yield a list
\begin{align*}
[[\underline 1,2,3,4], [1, \underline  2,3,4], [1,2,\underline  3,4], [1,2,3,\underline 4]]	
\end{align*}
where $\underline n$ stands for copy of letter $n$ in a disjoint copy of the alphabet. 
\end{proof}

\subsection{Sequential functions} 
\label{sec:rational-to-haskell}
We are left with showing that every sequential function can be implemented using a polynomial list function, and if the sequential function is first-order rational then a first-order polynomial list function is enough.

There are two ways to solve this problem.

The first way is to use Theorem~\ref{thm:krohn-rhodes}, which says that we only need to do the construction for  sequential  functions where the underlying automaton either: (a) has two states and is counter-free; or (b) has a transformation monoid which is a group. Both kinds of functions can be easily implemented using polynomial list functions, and the construction for (a) needs only a first-order polynomial list function. The construction for (a) is implemented using the $\blockterm$ function discussed in Lemma~\ref{lem:block}.

The second way is to use regular list functions defined in~\cite{DBLP:conf/lics/BojanczykDK18}. We discuss the second way in more detail, because it  highlights the relationship between the polynomial list functions from this paper, and the class of regular list functions, which can be seen as the linear growth fragment of polynomial list functions.
 We begin by defining the class of functions considered in~\cite{DBLP:conf/lics/BojanczykDK18}. 

\begin{definition}[Regular and first-order list functions]  \label{def:regular-list-functions} The class of  \emph{regular list functions} is the smallest class of functions which
	\begin{itemize}
		\item contains all functions described in Figure~\ref{fig:atomic-linear};
		\item is closed under the combinators described in Figure~\ref{fig:combinators-linear}.
	\end{itemize}
	The  \emph{first-order list functions} are the ones that can be constructed without using the  group products in Figure~\ref{fig:atomic-linear}.
\end{definition}

\newcommand{\simpleratomic}[2]{#1 & : & #2 \\}

\begin{figure}
\begin{eqnarray*}
\simpleratomic{\firstterm^{\tau \sigma}}{(\tau \times \sigma) \to \tau}
\simpleratomic{\secondterm^{\tau \sigma}}{(\tau \times \sigma) \to \tau}
\simpleratomic{\leftterm^{\tau \sigma}}{\tau \to (\tau + \sigma)}
\simpleratomic{\rightterm^{\tau \sigma}}{\tau \to (\tau + \sigma)}
\simpleratomic{\flatterm^{\tau }}{(\tau^*)^* \to \tau^*}
\simpleratomic{\headterm^{\tau}}{\tau^* \to (\tau + \set \bot)}
\simpleratomic{\tailterm^{\tau}}{\tau^* \to (\tau^* + \set \bot)}
\\
 \atomicfunction{\blue{\distrterm^{\tau \sigma \pi}}}{\tau \times (\sigma + \pi) \to (\tau \times \sigma) + (\tau \times \pi)}{distribute $+$ across $\times$}
 \atomicfunction{\blue{\reverseterm^{\tau}}}{\tau^* \to \tau^*}{reverse the input list}
\atomicfunction{\blue{\constterm^a}}{\tau \to \sigma}{return $a$ for every argument}
	\atomicfunction{\blue{\appendterm^\tau}}{(\tau \times \tau^*) \to \tau^*}{add first argument to the left of the list in the second argument }
	\atomicfunctionbis{\blue{\blockterm^{\tau \sigma }}}{(\tau+\sigma)^* \to (\tau^* + \tau^*)^*}{group into maximal blocks of type $\tau^*$ or $\sigma^*$}{[1,a,3,4,b,c] $\mapsto$ [[1],[a],[3,4],[b,c]] }
	\atomicfunctionbis{\blue{\groupterm^*_G}}{G^* \to G^*}{the $i$-th element of the output list is the product}{of the first $i$ elements in the input list}
\end{eqnarray*}
  \caption{\label{fig:atomic-linear}Atomic linear first-order list functions. The types $\tau, \sigma$ are required to be arrow-free, i.e.~are constructed from finite sets using $\tau + \sigma$, $\tau \times \sigma$ and $\tau^*$. The functions in black are already present in Figure~\ref{fig:atomic-combinators-1}, while the functions in \blue{blue} are not present, but can be derived.}
\end{figure}

\begin{figure}
\begin{align*}
  \frac{f: \tau \to \sigma \quad g :\sigma \to \pi}{ f \circ g : \tau \to \pi} \qquad \red{x \mapsto f(g(x))}
\end{align*}

\begin{align*}
  \frac{f_1: \tau_1 \to \sigma \quad f_2 :\tau_2 \to \sigma}{ 
  \langle f_1,f_2 \rangle : (\tau_1 + \tau_2) \to \sigma} \qquad \red{x \mapsto \begin{cases}
  	f_1(x) & \text{if $x \in \tau_1$} \\
  	  	f_2(x) & \text{if $x \in \tau_2$} 
  \end{cases}}
\end{align*}

\begin{align*}
  \frac{f_1: \tau \to \sigma_1 \quad f_2 :\tau \to \sigma_2}{ 
  (f_1,f_2) : \tau \to (\sigma_1 \times \sigma_2)} \qquad \red{x \mapsto (f_1(x), f_2(x))}
\end{align*}

\begin{align*}
  \frac{f: \tau \to \sigma}{ 
  f^* : \tau^* \to \sigma^*} \qquad \red{[x_1,\ldots,x_n] \mapsto [f(x_1),\ldots,f(x_n)]}
\end{align*}

  \caption{\label{fig:combinators-linear} Closure properties of linear list functions.}
\end{figure}

By definition, every regular list function (in particular, every first-order one), has a type of the form $\tau \to \sigma$ where $\tau$ and $\sigma$ are arrow-free, i.e.~are constructed from finite sets using only the type constructors $\tau + \sigma$, $\tau \times \sigma$ and~$\tau^*$.  In this sense, regular and first-order list functions cannot use higher-order types.  Another key design property is that the atomic functions from Figure~\ref{fig:atomic-linear} have linear growth and the combinators in~\ref{fig:combinators-linear}  preserve this property.  This is because regular list functions are designed to capture the regular string-to-string functions, which have linear growth. The linear growth also explains why $\lambda$-abstraction or $\splitterm$ are not allowed in regular list functions, since they could be used to generate functions of super-linear growth, see Example~\ref{ex:mapterm}.

\begin{lemma}\label{lem:linear-in-polynomial}
	Every  regular list function is a polynomial list function. Likewise for the first-order fragment.
\end{lemma}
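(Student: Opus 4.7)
The plan is to check, ingredient by ingredient, that the clauses defining regular list functions (Definition~\ref{def:regular-list-functions}) can all be matched by polynomial list functions, and that the first-order restriction is preserved. Since polynomial list functions are closed under composition (via $\lambda$-abstraction and application) and under the pairing and case-analysis combinators of Figure~\ref{fig:combinators-linear} (via $(\lambda x.(f\,x,\,g\,x))$ and the built-in $\caseterm$), and since $\mapterm$ from Figure~\ref{fig:atomic-combinators-1} directly implements the $f \mapsto f^*$ combinator, the closure conditions of Figure~\ref{fig:combinators-linear} are immediate. Thus the work reduces to expressing each atomic function of Figure~\ref{fig:atomic-linear} as a polynomial list function, with first-order polynomial list functions used whenever the atomic is first-order (i.e.~everything except $\groupterm^*_G$).

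The black atomics of Figure~\ref{fig:atomic-linear} are literally among the polynomial list atomics of Figure~\ref{fig:atomic-combinators-1}. Among the blue atomics, $\firstterm$ and $\secondterm$ are $\projterm 0$ and $\projterm 1$; $\leftterm$ and $\rightterm$ are $\coprojterm 0$ and $\coprojterm 1$ from item~\ref{it:terms-coproj} of Definition~\ref{def:set-of-terms}; $\constterm^a$ is $\lambda x.\,a$; $\distrterm$ is a one-line $\caseterm$ on the second coordinate; and $\appendterm$ is $\lambda (x,l).\,\flatterm\,[[x],l]$. The functions $\reverseterm$ and $\blockterm$ have already been shown to be first-order polynomial list functions in Lemma~\ref{lem:not-iterated-reverse} and Lemma~\ref{lem:block} respectively.

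The only remaining atomic is the prefix-product $\groupterm^*_G : G^* \to G^*$, which uses the group structure and therefore is not required to be first-order. The plan here is to express it as the composition
\begin{align*}
\lambda x.\ \mapterm\ (\lambda p.\ \groupterm^G\ (\projterm 0\ p))\ (\splitterm\ x),
\end{align*}
possibly followed by a cleanup step that removes the prefix corresponding to the empty split (or, equivalently, by changing $\splitterm$ to only the nonempty-prefix splits using $\tailterm$). Since $\splitterm$ produces the list of prefix/suffix pairs in the correct order, projecting to the first coordinate and mapping $\groupterm^G$ over the result gives the list of running products.

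I do not foresee a serious obstacle: the proof is essentially a look-up table, with the only mildly nontrivial items (iterated reverse, $\blockterm$) already handled earlier in the section, and the running-product construction packaged from $\splitterm$, $\mapterm$ and $\groupterm^G$. The first-order statement follows because every item above except $\groupterm^*_G$ is built without $\groupterm^G$, so if the source regular list function avoids $\groupterm^*_G$ then the produced polynomial list function avoids $\groupterm^G$.
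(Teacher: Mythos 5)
Your proposal is correct and follows essentially the same route as the paper: the paper likewise observes that the closure combinators of Figure~\ref{fig:combinators-linear} are immediate for polynomial list functions, that the black atomics of Figure~\ref{fig:atomic-linear} are already present, that $\reverseterm$ and $\blockterm$ are covered by Lemmas~\ref{lem:not-iterated-reverse} and~\ref{lem:block}, and that $\groupterm^*_G$ is obtained from $\groupterm_G$ via $\splitterm$ (the paper leaves the remaining blue atomics as an exercise, which you have simply written out). The only point to watch is the exact order and the empty-prefix entry in the output of $\splitterm$, but you flag the needed cleanup and $\reverseterm$ is available, so this is fine.
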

\begin{proof}
	Clearly polynomial list functions and their first-order fragment have the closure properties described in Figure~\ref{fig:combinators-linear}.
	The atomic functions in Figure~\ref{fig:atomic-linear} are either already atomic polynomial list functions (the ones in black) or can be implemented using polynomial list functions (the ones in \blue{blue}). The implementations of $\reverseterm$ and $\blockterm$ were given in Lemmas~\ref{lem:not-iterated-reverse} and~\ref{lem:block}, while the implementations of the remaining functions
\begin{align*}
\blue{\distrterm_{\tau \sigma \pi} \qquad \constterm_a \qquad  \appendterm_\tau  \qquad \groupterm^*_G}
\end{align*}
are straightforward and  left to the reader. Note  the difference of the group operations \begin{align*}
\groupterm_G : G^* \to G \qquad   \blue{\groupterm^*_G : G^* \to G^*}
\end{align*} 
that are used in Figures~\ref{fig:atomic-combinators-1} and~\ref{fig:atomic-linear} respectively. These can be defined in terms of each other:  $\groupterm_G$  can be defined in terms of $\blue{\groupterm^*_G}$  using head and reverse, and a converse construction can be done using  $\splitterm$. However, $\splitterm$ is not available in Figure~\ref{fig:atomic-linear}, and hence the more powerful $\blue{\groupterm^*_G}$ is used as an atomic function in Figure~\ref{fig:atomic-linear}. One can, in fact, show that replacing $\blue{\groupterm^*_G}$ by $\groupterm_G$ in Figure~\ref{fig:combinators-linear} would lead to a weaker class of functions.
\end{proof}

The main result about regular list functions and their first-order fragment is that they correspond to the class of regular string-to-string functions.  Consider string-to-string functions, i.e.~functions of type\footnote{There is a notation clash here. Finite sets in automata theory are denoted using uppercase letters $\Sigma, \Gamma$,  while  $\lambda$-calculus typically uses  lowercase letters $\sigma,\tau$ for types, which includes the case of finite sets. We use lowercase when typing programs from the $\lambda$-calculus, and uppercase letters in other situations. } 
\begin{align*}
f : \sigma^* \to \tau^*	 \qquad \text{where $\sigma,\tau$ are finite sets.}
\end{align*}
When restricted to string-to-string functions, 
the regular  list functions are exactly the same as  as \mso string-to-string transductions~\cite[Theorem 6.1]{DBLP:conf/lics/BojanczykDK18}. Since  \mso string-to-string transductions are the same as functions recognised by  two-way transducers (i.e.~1-pebble transducers)~\cite[Theorem 13]{engelfriet2001mso}, it follows that for string-to-string transducers, the regular list functions are the same as 1-pebble transducers. 
This equivalence also works for the first-order case: the first-order list functions are exactly the same as 
first-order transductions~\cite[Theorem 4.3]{DBLP:conf/lics/BojanczykDK18}, and  first-order string-to-string  transductions are the same as functions recognised by first-order definable 1-pebble automata~\cite{Carton:2015bl}. Putting these results together, we get the following theorem.

\begin{theorem}\label{thm:folf} For  string-to-string  transducers, i.e.~functions of type
	\begin{align*}
f : \sigma^* \to \tau^*	 \qquad \text{where $\sigma,\tau$ are finite sets,}
\end{align*}
the regular list functions are exactly the same as 1-pebble transducers. The first-order list functions are exactly the same as  the first-order definable 1-pebble transducers.
 \end{theorem}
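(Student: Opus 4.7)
The plan is to chain together three known equivalences, since by the structure of the statement we only need to assemble results that are already in the literature. First I would recall from Lemma~\ref{lem:linear-in-polynomial} (not directly needed here, but reassuring) that regular list functions sit inside polynomial list functions, so the objects we are comparing are well-defined on the string-to-string type $\sigma^* \to \tau^*$. Then the heart of the argument is to concatenate the equalities of classes along the path: regular list functions $=$ \mso string-to-string transductions $=$ 1-pebble transducers, and symmetrically first-order list functions $=$ first-order string-to-string transductions $=$ first-order 1-pebble transducers.

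For the regular case I would cite \cite[Theorem 6.1]{DBLP:conf/lics/BojanczykDK18}, which provides the equivalence between regular list functions (restricted to the string-to-string type) and \mso string-to-string transductions. I would then compose this with the classical Engelfriet--Hoogeboom theorem~\cite[Theorem 13]{engelfriet2001mso}, stating that \mso string-to-string transductions are exactly the functions computed by deterministic two-way transducers, which by our Definition~\ref{def:pebble-transducer} and the discussion in Section~\ref{sec:pebble-transducers-definition} are precisely the 1-pebble transducers. Composing these two equivalences gives the first sentence of the theorem.

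For the first-order case, the argument is identical modulo replacing each ingredient by its aperiodic counterpart: \cite[Theorem 4.3]{DBLP:conf/lics/BojanczykDK18} gives that first-order list functions on strings are exactly first-order string-to-string transductions, and the characterisation of Carton and Dartois~\cite{Carton:2015bl} shows that first-order string-to-string transductions coincide with functions computed by first-order definable two-way transducers, which are our first-order definable 1-pebble transducers.

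Since each of the three cited equivalences is effective and preserves the string-to-string type, there is no real obstacle to overcome; the only thing to double check is that the syntactic conventions match up. In particular, I would briefly verify that our notion of first-order definable 1-pebble automaton (Definition~\ref{def:fo-definable-pebble-automaton}), which demands a first-order formula for the reachability relation on configurations, agrees with the aperiodicity condition used in \cite{Carton:2015bl}; this is immediate for 1-pebble devices because crossing types then form an aperiodic semigroup if and only if the underlying two-way automaton is first-order definable in our sense, as already noted in Section~\ref{sec:one-pebble}. With this minor bookkeeping the theorem follows.
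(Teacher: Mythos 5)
Your proposal is correct and follows exactly the paper's own argument: the paper likewise obtains the theorem by chaining \cite[Theorem 6.1]{DBLP:conf/lics/BojanczykDK18} with \cite[Theorem 13]{engelfriet2001mso} for the regular case, and \cite[Theorem 4.3]{DBLP:conf/lics/BojanczykDK18} with \cite{Carton:2015bl} for the first-order case. Your extra remark about reconciling the first-order definability convention with aperiodicity of crossing types is a reasonable bookkeeping point that the paper handles implicitly via its earlier citation of \cite[Theorem 9]{Carton:2015bl}.
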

 
Putting together Lemma~\ref{lem:linear-in-polynomial} and Theorem~\ref{thm:folf}, we see that every 1-pebble transducer is a polynomial list function, and this construction preserves first-order definability. Since sequential functions are recognised by 1-pebble transducers, we obtain that every rational function is a polynomial list function (and the construction preserves first-order definability).  This argument would have also worked for iterated reverse, but the main work in Section~\ref{sec:reverse-to-haskell} consisted of coding non-iterated reverse and $\blockterm$, which were necessary to get Lemma~\ref{lem:linear-in-polynomial}.

This completes the proof that all polyregular functions are polynomial list functions (and the corresponding first-order result). 

\paragraph*{Combinators.} The regular list functions  (i.e.~the ones corresponding to 1-pebble automata) from Definition~\ref{def:regular-list-functions} are  \emph{combinatory}  in the sense that they do not have any variables and $\lambda$ construction. We can also identify a combinatory syntax for  polynomial list functions:  define a \emph{combinatory polynomial list function} to be any function generated by the rules from Definition~\ref{def:regular-list-functions} plus  $\splitterm$. It is not hard to see that squaring is a combinatory polynomial list function; and the same is true for iterated reverse and sequential functions (here $\splitterm$ is not needed) as we have seen above. It follows that
\begin{align*}
	\text{polyregular functions} \quad \subseteq \\ \text{combinatory polynomial list functions} \quad \subseteq \\ \text{polynomial list functions} \quad \ \ \ 
\end{align*}
Furthermore, since the first and third lines above are equal, it follows that the polynomial list functions collapse to their combinatorial fragment, i.e.~$\lambda$ abstraction can be eliminated without affecting the power of the programming language, at least as long as string-to-string functions are concerned.

\section{List functions to for-transducers}
\label{sec:program-to-for}
In this section, we show that every string-to-string function computed by a  polynomial list function can also be computed by a for-transducer, and the translation preserves first-order definability. 

In the first part of the proof, Section~\ref{sec:for-composition}, we show that string-to-string functions computed by for-transducers are closed under composition. The idea is similar to the composition closure for pebble transducers: if $f,g$ are for-transducers, then a for-transducer computing the composition $f \circ g$ is the same as the for-transducer $f$, except that instead of positions it uses  configurations of $g$. 

In the second and main part of the proof, Section~\ref{sec:beta-reduction}, we 
use a  semantics of polynomial list functions based on term rewriting.  The key idea is to do $\beta$-reduction in parallel, e.g.~if there is a list of terms, then one step of $\beta$-reduction can be applied in parallel to each term on the list. By doing $\beta$-reduction in parallel,  only a bounded number   number of rewriting steps is needed to compute the value of a term. Since a single parallel $\beta$-reduction step can be computed by a for-transducer, and for-transducers are closed under composition, it follows that a for-transducer can compute the value of a polynomial list program.

\subsection{Composition of for-transducers}
\label{sec:for-composition}
The goal of this section is to prove the following lemma. 
\begin{lemma}\label{lem:for-composition}
    String-to-string functions recognized by for-transducers are closed under composition. The construction preserves first-order definability.
\end{lemma}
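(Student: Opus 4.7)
The plan is to adapt the stack-of-stacks construction from Theorem~\ref{thm:pebble-closed-under-composition}. Given for-transducers $f : \Sigma^* \to \Gamma^*$ and $g : \Gamma^* \to \Delta^*$, I will build a for-transducer $h$ that computes $g \circ f$ on input $w \in \Sigma^*$. The key idea is that positions in $f(w)$ correspond bijectively to those moments in the execution of $f$ on $w$ at which the kernel outputs a letter, and such a moment is identified by a tuple of positions in $w$ (the values of $f$'s position variables) together with the Boolean state of $f$ at that moment.

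I would first place both $f$ and $g$ in prenex normal form via Lemma~\ref{lem:for-normal-form}, and normalise $f$ so that its kernel produces at most one output letter per iteration (this is done by splitting the kernel along its conditional branches). Let $n$ and $m$ be the numbers of position variables of $f$ and $g$. The for-transducer $h$ has $nm$ position variables: for each position variable $y_i$ of $g$, there are $n$ position variables $x_1^i, \ldots, x_n^i$ that encode the $f$-configuration representing the position $y_i$ of $f(w)$. Each for-loop of $g$ over $y_i$ is replaced in $h$ by $n$ nested loops over $x_1^i, \ldots, x_n^i$, whose directions are the directions of $f$'s own loops (or all reversed, if $g$'s loop goes from last to first); this makes the nested loops enumerate $f$-configurations in $f$'s execution order, matching the left-to-right (or reverse) order of the positions of $f(w)$.

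Alongside each block of nested loops, $h$ declares a fresh copy of $f$'s Boolean variables and updates them at every iteration by executing $f$'s kernel; this exactly reconstructs the Boolean state of $f$ at each of its output-producing configurations. The translation of $g$'s kernel is then straightforward: a test $a(y_i)$ is replaced by the test ``$f$'s kernel, with the stored Boolean state and the position values $x_1^i, \ldots, x_n^i$, outputs the letter $a$''; an order test $y_i \le y_j$ becomes a lexicographic comparison on the tuples with per-coordinate order dictated by the direction of the corresponding $f$-loop; assignments to $g$'s Boolean variables and $g$'s output instructions are copied verbatim. Configurations of $f$ that produce no output are silently skipped by a guard inside $h$'s kernel. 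Stack discipline of nested loops is preserved because $h$'s loops are literally obtained by substituting blocks of loops for single loops.

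The main obstacle is the first-order case, specifically when $g$ has a ``\texttt{last..first}'' loop. Forward enumeration of $f$'s configurations updates the Boolean state monotonically (since $f$ is first-order), but reverse enumeration would require Boolean variables to go from \texttt{true} back to \texttt{false}, violating the first-order discipline. I would handle this by adding, before the reverse-enumeration block, a preliminary forward pass through $f$'s configurations that records in a monotone fashion the position of each event that flips a given Boolean variable of $f$; during the reverse pass the Boolean state at any given $f$-configuration is then recovered by a first-order comparison against these recorded positions, so no Boolean is ever unset. Auxiliary Booleans needed to preserve $f$'s state across nested $g$-loops are declared inside the appropriate outer loop, hence re-initialised at each iteration, and are set at most once per iteration, which is consistent with the first-order discipline. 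The remaining verifications — that the nested-loop enumerations hit exactly the positions of $f(w)$ in the correct order, and that the simulated tests, comparisons and outputs agree with $g$ applied to $f(w)$ — are a routine induction on the prenex structure of the programs.
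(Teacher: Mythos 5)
Your overall architecture matches the paper's: prenex normal form for both transducers, normalising $f$ so its kernel emits at most one letter per iteration (so positions of $f(w)$ are $n$-tuples of positions of $w$), blocks of $n$ nested loops replacing each loop of $g$, and lexicographic comparison for the order tests. The gap is in how you recover the label of an intermediate position, i.e.\ $f$'s Boolean state at the configuration $(x_1,\ldots,x_n)$. Your incremental maintenance of that state ``alongside'' the nested loops only works when the enumeration follows $f$'s own execution order; for a \texttt{last..first} loop of $g$ you enumerate $f$-configurations in reverse, and then the Boolean state at a configuration is simply not computable from the state at its successor (the kernel's update on Booleans need not be injective), so the scheme already breaks in the general case, not only under the first-order discipline. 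Moreover, the patch you propose for the first-order case --- a preliminary forward pass that ``records the position of each event that flips a Boolean variable'', against which the reverse pass later compares --- is not expressible in the model: for-transducers have no assignable position variables, only Booleans, so a position cannot be stored for a later first-order comparison.

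The repair, which is what the paper does, is to abandon incremental maintenance altogether: inside the body of the nested loops, for each tuple $(x_1,\ldots,x_n)$, run a fresh copy of $f$ from scratch and wait until it reaches that configuration; this determines both whether the kernel outputs a letter there (so that silent configurations can be skipped) and which letter it outputs. Because the fresh copy's Boolean variables are declared locally --- hence re-initialised at every iteration of the enclosing loops --- and, when $f$ is first-order, only ever go from \texttt{false} to \texttt{true} within that single run, this construction is independent of the direction of enumeration and preserves first-order definability with no extra machinery.
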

\begin{proof}
  Consider two for-transducers
\begin{align*}
  \xymatrix{\Sigma^* \ar[r]^f & \Gamma^* \ar[r]^g & \Delta^*}
\end{align*}
Our goal is to show that the composition 
\begin{align*}
  \xymatrix{\Sigma^* \ar[r]^{g \circ f} & \Delta^*}
\end{align*}
is also a for-transducer, and if both $f,g$ were first-order, then the same is true for $g \circ f$.  
Recall the prenex normal form that was defined in Section~\ref{sec:for-to-pebble}. 
Using Lemma~\ref{lem:for-normal-form}, we can  assume that both $f$ and $g$  are in prenex normal form. An example of $f$ and $g$ is given in Figure~\ref{fig:composition-for-transducers}, the for-transducer for their composition is illustrated in Figure~\ref{fig:composite}. 

\begin{figure}
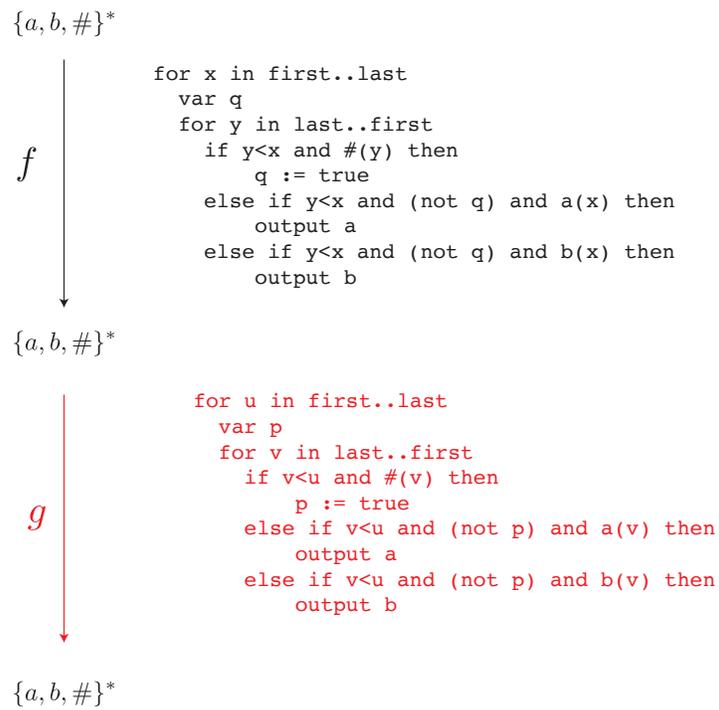

  \mypic{32}
  \caption{\label{fig:composition-for-transducers} Example instance of  composition of two for-transducers. In this example,  both for-transducers define  the same function, namely iterated reverse, only using renamed variables. }
\end{figure}

Suppose that the input to the composition   is some word $w \in \Sigma^*$. Our goal is to compute $g(f(w))$  using a single for-transducer that simulates the run of  $g$ over an input of the form $f(w)$. The idea is the same as in the proof of Theorem~\ref{thm:pebble-closed-under-composition}: use the same code as  $g$, except that positions in  the intermediate word  are represented by  configurations of $f$ that were  used to produce them. 

Consider the first for-transducer $f$, and let $n$ be the number for loops that it uses.  For an input word $w \in \Sigma^*$ and an $n$-tuple of positions $x_1,\ldots,x_n$, define
\begin{align*}
  f(w,x_1,\ldots,x_n) \in \Gamma^* 
\end{align*}
to be the letters that are output by the kernel of $f$ (recall that the kernel of a for-transducer in prenex normal form is the part that does not use for loops) in the iteration of the for loops where the position variables are set to $x_1,\ldots,x_n$. Since the kernel has no loops, the above word has fixed length. To simplify the proof of Lemma~\ref{lem:for-composition}, we  assume that this fixed length is at most one, i.e.~in each iteration of the innermost loop of $f$  at most one letter is produced. This assumption means that  each position in the intermediate  word $f(w)$ is represented uniquely by a tuple of position variables of $f$ in the input word $w$.  The proof without this additional assumption requires additional notation, but follows the same lines, and is left to the reader. 

As in the proof of Lemma~\ref{lem:pebble-to-for-formula}, we order $n$-tuples of positions in the input word $w$  using  a lexicographic order  $\preceq$, with the $i$-th coordinate being ordered first-to-last or last-to-first depending on the type of the $i$-th for loop. By definition 
\begin{align*}
f(w) = \prod_{x_1,\ldots,x_n}  f(w,x_1,\ldots,x_n) 
\end{align*}
where $\prod$ stands for concatenation, with $n$-tuples  $(x_1,\ldots,x_n)$ ordered by $\preceq$. 

The code of the   for-transducer $g \circ f$ is the same as the one for $g$, except that the for loops, the order tests {\tt x<y} and the label tests {\tt a(x)} are modified as follows, to account for the representation of positions in $f(w)$ via $n$-tuples of positions in $w$.

\begin{enumerate}
  \item Suppose that  $g$ does a loop of one of the two types below (which are the only types allowed in a for-transducer in prenex normal form):
  \begin{align}
    \underbrace{\tt for\ u\ in\ first..last}_{\text{first to last}} \qquad 
    \underbrace{\tt for\ u\ in\ last..first}_{\text{last to first}}
    \label{eq:for-loop-transducer}  
  \end{align}
  To simulate this loop, the  for-transducer computing $g \circ f$  enumerates through all $n$-tuples  of positions in the input word $w$, ordered according to the ordering $\preceq$ -- in  the first-to-last case --  or the opposite of $\preceq$ -- in the last-to-first case. This enumeration is done using $n$ nested for loops.   For each $n$-tuple of positions $(x_1,\ldots,x_n)$ from this enumeration, the for-transducer computing $g \circ f$  checks  if the output
  \begin{align*}
    f(w,x_1,\ldots,x_n) \in \Gamma^* 
  \end{align*}
  is  nonempty.  This check is done by running a fresh copy of $f$ from scratch, and waiting until it reaches configuration $(x_1,\ldots,x_n)$.  If the above output is empty, then the body of the for loop in~\eqref{eq:for-loop-transducer} is skipped, otherwise it is executed, with the label and order tests simulated as described below. 
  \item Suppose that $g$ tests the order $x \le y$ on two positions in the intermediate word $f(w)$. In the  for-transducer computing  $g \circ f$, these positions are represented as two $n$-tuples $(x_1,\ldots,x_n)$ and $(y_1,\ldots,y_n)$, and the corresponding comparison is $\preceq$, which is a Boolean combination of comparisons on the positions $x_i$ and $y_i$, and therefore can be implemented by a for-transducer.
  \item Suppose that $g$ tests the label $a(x)$ of a position $x$ in the intermediate word.  In the  for-transducer $g \circ f$, this position is represented as an $n$-tuple $(x_1,\ldots,x_n)$. To determine the label of this position, the  transducer $g \circ f$ does the same trick as in item 1: it runs a fresh copy of $f$ from scratch and waits until it reaches configuration $(x_1,\ldots,x_n)$. 
\end{enumerate}
\end{proof}

\begin{figure}
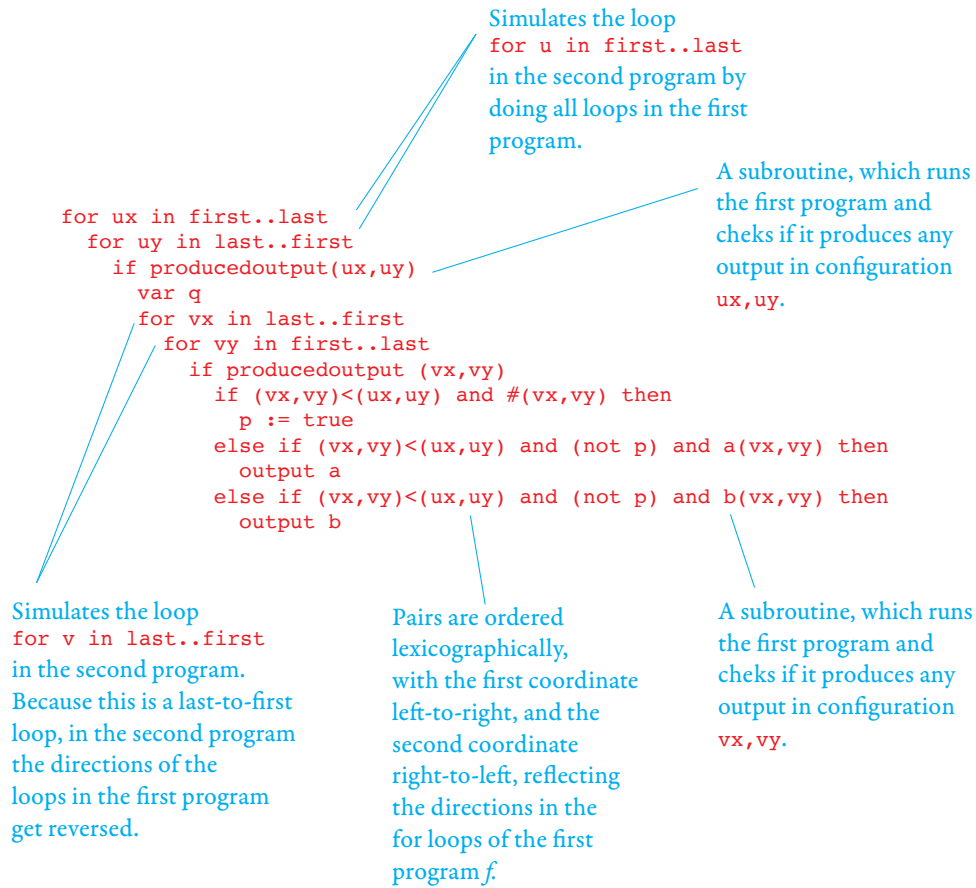

  \mypic{33}   
  \caption{\label{fig:composite} The  for-transducer computing the composition of the functions described in Figure~\ref{fig:composition-for-transducers}.}
\end{figure}
 
Before continuing with the proof that for-transducers can simulate polynomial list functions, we observe that the composition result above can be used to show equivalence of for-transducers with pebble transducers and polyregular functions.   
For-transducers are easily seen to recognize all of the atomic polyregular functions, i.e.~iterated reverse, squaring and sequential functions. Also,  first-order for-transducers are enough when only first-order sequential functions are used, see e.g.~Example~\ref{ex:for-transducers-do-formulas}. Therefore, the composition closure result from Lemma~\ref{lem:for-composition} implies that for-transducers recognize all polyregular functions, and  the conversion preserves first-order definability. Combining this with the results from the previous two sections, we get the equivalence of the following three models:

\begin{align*}
  \xymatrix{
     {\small \text{for transducers}} \ar[rr]^{\text{Section~\ref{sec:for-to-pebble}}} & & {\small \text{pebble transducer}} \ar[d]^{\text{Section~\ref{sec:pebble-to-polyregular}}}  \\
   & &   {\small \text{polyregular functions}} \ar[llu]^{\text{Lemma~\ref{lem:for-composition}}}
  } 	
  \end{align*}

\subsection{For-transducers implementing $\beta$-reduction}
\label{sec:beta-reduction}
In this section we complete the proof that  polynomial list programs can be simulated by for-transducers. The idea is to use a term rewriting approach to the  semantics of polynomial list functions, and to show that this approach can be simulated by a for-transducer.

\begin{definition}
  [$\beta$-reduction] \label{def:beta-reduction}Define $\beta$-reduction to be the binary relation on terms, denoted by 
  \begin{align*}
    M \to_\beta N,
  \end{align*}
  which holds if $N$ can be obtained from $M$ by applying one of the following reduction rules to a subterm:
  \begin{align*}
    \begin{array}{rcl}
      \reductionrule{\isterm}{type}{\blue{\isterm_a} b }{\begin{cases}
        \mathtt{true} & \text{if $a=b$}\\
        \mathtt{false} & \text{if $a\neq b$}.
      \end{cases}}
      \reductionrule{\lambda}{\tau \to \sigma}{\churchpar{\blue{\churchpar{(\lambda \church x \tau \church M \sigma )}{\tau \to \sigma}} \church N{\tau}} \sigma}{\churchpar{\church M \sigma [\church x \tau := \church N \tau]}\sigma \quad\text{\red{if $x$ is not bound in $M$}} }
  \reductionrule{\mapterm}{(\tau \to \sigma) \to \tau^* \to \sigma^*}{\churchpar{\church {\blue{\mapterm\ }} {(\tau \to \sigma) \to \tau^* \to \sigma^*} \church M {\tau \to \sigma} \church{[\church {N_1} \tau,\ldots,\church{N_k}\tau ]}{\sigma^*}} {\sigma^*}}{
    \church{[\churchpar{\church M {\tau \to \sigma} \church{N_1} \tau} \sigma ,\ldots,\churchpar{\church M {\tau \to \sigma} \church{N_k} \tau} \sigma]}{\sigma^*}
  }
  \reductionrule{\projterm \ }{(\tau_0 \times \tau_1) \to \tau_i}{\churchpar{\church {\blue{\projterm i}\ } {(\tau_0 \times \tau_1) \to \tau_i} \church{(\church{M_0} {\tau_0}, \church{M_1} {\tau_1}}{\tau_0 \times \tau_1})}{\tau_1}}{\church {M_i}{\tau_0} \quad\text{\red{for $i \in \set{0,1}$}}}
  \reductionrule{\caseterm}{(\tau_0 \to \sigma) \to (\tau_1 \to \sigma) \to  (\tau_0 + \tau_1) \to \sigma }{
  {\blue \caseterm} \ M_0 \ M_1\ (\coprojterm i N) 
  }
  {\church{\church {M_i}{\tau_i \to \sigma} \church N {\tau_i}}\sigma  \quad\text{\red{for $i \in \set{0,1}$}}}
  \reductionrule{\splitterm}{type}{{\blue \splitterm} \ [M_1,\ldots,M_n]} {[([],[M_1,\ldots,M_n]),\ldots,([M_1,\ldots,M_n],[]
  )]}
  \reductionrule{\headterm}{type}{{\blue \headterm}\ [M_1,\ldots,M_n]}{\rightterm\ M_1 \quad\text{\red{for $n \ge 1$}} }
  \reductionrule{\headterm}{type}{{\blue \headterm}\ []}{\leftterm\ \bot }
  \reductionrule{\tailterm}{type}{{\blue \tailterm}\ [M_1,\ldots,M_n]}{\rightterm\ [M_2,\ldots,M_n] \quad\text{\red{for $n \ge 1$}}  }
  \reductionrule{\tailterm}{type}{{\blue \tailterm}\ []}{\leftterm\ \bot}
    \end{array}
  \end{align*}
  In the rules above, the blue colour highlights the first term of the left hand side of a rule, which is called the \emph{leading term} of the rule (we use leading terms to define a reduction strategy).    
\end{definition} 
It is easy to see that    $\beta$-reduction  changes neither the type nor the  semantics of a term.
Define a \emph{redex} in a term $M$ to be a subterm to which a reduction rule can be applied, and define the \emph{type} of a redex to be the type of its main term. An example of a term and its redexes is shown in Figure~\ref{fig:redexes}. 

The \emph{normal form} of a term $M$ is a term that has no redexes, and which can be obtained from $M$ by finitely many steps of $\beta$-reduction and renaming bound variables (the latter process is  called $\alpha$-conversion)\footnote{One can show that  $\beta$-reduction for our terms is well-founded and has the Church-Rosser property; the reasons is that  our calculus can be embedded in System {\bf F}, see  Section 11.3 in~\cite{sorensen2006lectures}. This implies that each term has a unique normal form up to renaming of bound variables, and that this normal form is reached regardless of the order in which $\beta$-reduction is applied. However, the uniqueness of normal forms and strong normalisation are  not needed for this section, so we do not  prove them. Nevertheless, we acknowledge  the uniqueness of normal forms,  by talking about  ``the'' normal form of a term as opposed to ``a'' normal form. }.
For terms which represent strings -- i.e.~terms that have  type $\tau^*$ where $\tau$ is a finite set -- there is  no difference between a value -- i.e.~a string -- and a term in normal form, as given in the following lemma.  The lemma also implies the uniqueness of normal forms for terms representing strings. 
\begin{lemma}\label{lem:normal-form-is-value}
  Let  $M$ be a term in normal form without free variables, whose type is $\tau^*$ for some finite set $\tau$. Then $M = [a_1,\ldots,a_n]$ for some $a_1,\ldots,a_n \in \tau$. 
\end{lemma}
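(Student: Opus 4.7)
The natural approach is to strengthen the statement to a \emph{canonical forms} lemma that handles all arrow-free types simultaneously, and then specialise to the list case. Concretely, I would prove by structural induction on $M$ that, for any closed term $M : \tau$ in normal form,
\begin{itemize}
\item if $\tau$ is a finite set, then $M = a$ for some $a \in \tau$;
\item if $\tau = \tau_0 \times \tau_1$, then $M = (M_0, M_1)$ with each $M_i$ a closed normal form of type $\tau_i$;
\item if $\tau = \tau_0 + \tau_1$, then $M = \coprojterm_i N$ for some $i \in \set{0,1}$ and some closed normal form $N : \tau_i$;
\item if $\tau = \sigma^*$, then $M = [N_1, \ldots, N_k]$ with each $N_j$ a closed normal form of type $\sigma$.
\end{itemize}
Granting this, the lemma follows immediately: the fourth item gives $M = [N_1,\ldots,N_n]$, and the first item applied to each $N_j : \tau$ turns it into an element of the finite set $\tau$.

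For the induction I would case-analyse the outermost constructor of $M$, using the grammar of terms from Definition~\ref{def:set-of-terms}. The variable case~\ref{it:terms-var} is ruled out by closedness. The cases~\ref{it:terms-abs},~\ref{it:terms-finite},~\ref{it:terms-pair},~\ref{it:terms-coproj}, and~\ref{it:terms-list} produce terms whose outer shape matches exactly one of the type constructors, and (since sub-terms of a normal form are normal, and since the sub-terms of a closed term of one of these shapes are themselves closed) each such case immediately satisfies the corresponding clause of the canonical forms statement. The case~\ref{it:terms-atomic} of a bare atomic program is handled because every atomic program in Figure~\ref{fig:atomic-combinators-1} has an arrow type, and no clause of the canonical forms lemma needs to be checked for arrow types.

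The only nontrivial case is the application case~\ref{it:terms-app}, and this is where the real argument lives. Unfold $M$ to the form $H\, N_1 \cdots N_k$ with $k \ge 1$ and $H$ itself not an application. Then $H$ must come from one of the other seven term formation rules. I would rule out the possibilities one by one: $H$ cannot be a variable (closedness), cannot be a $\lambda$-abstraction (else $H N_1$ is a $\beta$-redex, contradicting normal form), and cannot be a finite-set element, pair, coprojection, or list literal (none of these has an arrow type, so they cannot appear as the head of an application). Therefore $H$ is an atomic program from Figure~\ref{fig:atomic-combinators-1}.

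The remaining work is a finite case-analysis on which atomic program $H$ is, showing in each case that once the arguments are filled in to bring $H N_1 \cdots N_k$ to a type covered by the canonical forms statement, some prefix of the application is already a redex. The arguments $N_i$ are closed normal forms of strictly smaller size, so the inductive hypothesis applies to them: for instance, the argument to $\projterm_i$ has product type, hence by induction is a pair, triggering the $\projterm$ reduction rule; the list argument to $\mapterm$, $\splitterm$, $\headterm$, $\tailterm$ (and analogously $\flatterm$ and $\groupterm^G$) is by induction a literal list, triggering the corresponding rule; the third argument to $\caseterm$ has sum type, hence by induction is $\coprojterm_i$ of something, triggering the $\caseterm$ rule; and the argument to $\isterm_a$ is a finite-set element by induction, triggering the $\isterm$ rule. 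In every case we reach a contradiction with $M$ being in normal form. The main obstacle, and the reason the statement needs to be strengthened from list types to all arrow-free types before the induction goes through, is precisely this atomic-head subcase: to conclude that each atomic program whose result type is arrow-free necessarily fires, I need canonical-form information about arguments whose types are not list types (pairs and sums in particular), which is exactly what the strengthened statement provides.
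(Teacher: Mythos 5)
Your proposal is correct and follows essentially the same route as the paper's own proof (Lemma~\ref{lem:values-non-functional} in Appendix~\ref{sec:values}): strengthen to a canonical-forms statement covering all non-function types, induct on the term, peel off the spine of applications to expose a head that is not itself an application, rule out all heads except atomic programs, and then show by case analysis that an atomic head applied to canonical-form arguments always exposes a redex. The only cosmetic differences are that the paper phrases the induction as being on term size rather than structure, and that both you and the paper implicitly assume reduction rules for $\flatterm$ and $\groupterm$ that Definition~\ref{def:beta-reduction} does not list explicitly.
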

\begin{proof}
  A standard and folklore case analysis, see Appendix~\ref{sec:values}.     
\end{proof}


To compute the output of a polynomial list program $M : \tau^* \to \sigma^*$   on an input string $[a_1,\ldots,a_n] \in \tau^*$, we do the following:
\begin{enumerate}
  \item produce the term $M\ [a_1,\ldots,a_n]$ by prepending $M$ to the input word;
  \item convert the term from step 1 into normal form;
  
\end{enumerate}
Since $\beta$-reduction does not change the semantics of terms, and normal form terms of string type are the same as strings thanks to Lemma~\ref{lem:normal-form-is-value},  it follows that the normal form of $M\ [a_1,\ldots,a_n]$ is the same as the output of $M$ on input $[a_1,\ldots,a_n]$, as defined in Section~\ref{sec:list-programs}.  To prove that the above steps can be simulated by a for-transducer, we show that there is a reduction strategy (a choice of which redexes to reduce first, with possibly several redexes being reduced in parallel) such that (a) one step of the reduction strategy can be implemented by a for-transducer; and (b) there is a constant $k$ depending only on $M$  such that for every input string $[a_1,\ldots,a_n]$ at most $k$ steps of the reduction strategy are needed to convert $M \ [a_1,\ldots,a_n]$ into normal form. Since for-transducers are closed under composition by Lemma~\ref{lem:for-composition}, it follows that every string-to-string function recognised by a polynomial list program is also recognised by a for-transducer. We begin by explaining how terms are represented as strings so  that they can be handled by a for-transducer.

\begin{figure}
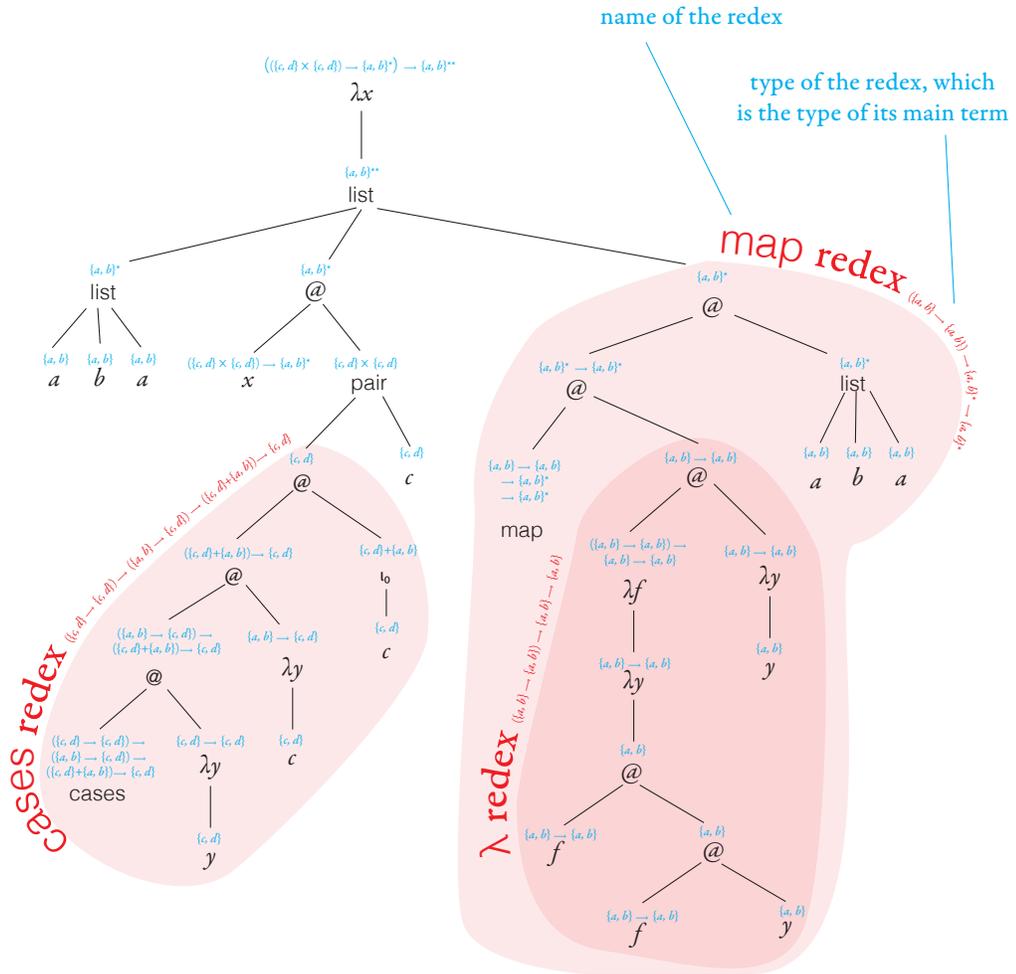

  \mypic{4}
  \caption{\label{fig:redexes}A term with three redexes. }
\end{figure}

\paragraph*{For-transducers manipulating terms.}  The difficulties with representing terms as strings come from: (a) renaming bound variables, (b) having complicated  types in subterms, and (c) dealing with the tree structure of a term. To get around these  difficulties, we assume  a bound on the depth of terms and on the set of types allowed in subterms, as given in the following definition.

\begin{definition}\label{def:bounded-terms} For a finite set of types 
    $\Gamma$  and   $k \in \set{1,2,\ldots}$, define 
  \begin{align*}
    \terms(\Gamma,k)
  \end{align*}  
  to be the  terms which satisfy the following conditions:
  \begin{enumerate}
    \item the depth (length of the longest path in its syntax tree) is at most $k$;
    \item  all    subterms   have types in  $\Gamma$; 
    \item for each type $\tau \in \Gamma$ there are at most $k$ variable names, call them  $x^\tau_1,\ldots,x^\tau_k$.
  \end{enumerate}
  
\end{definition}

Note that the size (number of nodes in the syntax tree) of a term can be unbounded even for terms of depth two, since  the depth of a list of terms is one plus the maximal depth of terms on the list.
The restriction on the variable names in item 3 of Definition~\ref{def:bounded-terms} is not important for the semantics, because  bound variables in a term of depth at most $k$ can be renamed so that there are at most $k$ variable names of each type (or even $k$ variables altogether, if we would allow the same variable name to be used for different types). Nevertheless, having a finite set of variable names  makes it possible to represent the terms as strings over a finite alphabet, which consists of the variable names, the atomic functions, $\lambda$, parentheses and commas. 
Because the depth of terms is bounded, a finite automaton can check if a string represents some term, in particular it can check if the parentheses are well matched. 
Using such a string  representation, we can talk about a function
\begin{align*}
  f : \terms(\Gamma,k) \to   \terms(\Gamma,m) \qquad \text{for $k,m \in \Nat$}
\end{align*}
being recognized by a string-to-string device, such as a for-transducer. This notion is used in the following lemma, which is the main result of this section.

\begin{lemma}\label{lem:red-for}
  For every  finite set of types $\Gamma$  and  $k \in \set{1,2,\ldots}$ there exists
  \begin{align*}
  m \in \set{1,2,\ldots} \qquad \text{and} \qquad   f : \terms(\Gamma,k) \to   \terms(\Gamma,m)
  \end{align*}
  such that $f$ is recognized by a for-transducer, and maps  every input term to its   normal form. For terms without the group product operation, a first-order for-transducer is enough.
\end{lemma}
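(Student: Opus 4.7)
My plan is to define a parallel $\beta$-reduction step, implement each step by a single for-transducer, bound the number of steps by a constant depending only on $\Gamma$ and $k$, and assemble $f$ by composition using Lemma~\ref{lem:for-composition}. Terms in $\terms(\Gamma,k)$ are already represented as strings over the finite alphabet consisting of the variable names $x_i^\tau$, the atomic function symbols, $\lambda$, parentheses, commas and list brackets. Because depth is bounded by $k$, well-formedness, the location of redexes, and identification of the subterms participating in each redex can all be handled by a finite automaton tracking a bounded depth counter; in particular the constituents of any redex are accessible through a bounded number of nested \texttt{for}-loops, which is exactly the regime in which for-transducers are most comfortable.

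For termination of parallel reduction I would use an arrow-rank measure. Assign to each type its \emph{arrow rank}, where finite sets have rank $0$, the constructors $\tau+\sigma$, $\tau\times\sigma$, $\tau^*$ inherit the maximum rank of their constituents, and $\tau\to\sigma$ gets $\max(\mathrm{ord}(\tau)+1,\mathrm{ord}(\sigma))$. The parallel step reduces, in one shot, all outermost redexes present in the current term. Inspection of Definition~\ref{def:beta-reduction} shows that any new redex created by such a step (arising either from $\beta$-substitution putting a $\lambda$-term next to an argument, or from a $\mapterm$, $\splitterm$ or $\caseterm$ step exposing an inner redex) has a leading term whose type has strictly smaller arrow rank than the leading-term type of the redex that created it. Since all types come from the finite set $\Gamma$, the arrow rank is bounded by $\max_{\tau\in\Gamma}\mathrm{ord}(\tau)$, so a constant number of parallel steps suffices. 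The depth can grow by only a bounded additive amount per step (for example one level for each of $\mapterm$ and $\splitterm$), so the intermediate terms stay in $\terms(\Gamma,m)$ for some $m$ depending only on $\Gamma$ and $k$.

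The main obstacle is implementing a single parallel reduction step by one for-transducer. The hardest case is $\beta$-reduction, where every occurrence of the bound variable $x$ in $M$ must be replaced by a fresh copy of $N$: the for-transducer locates each $\lambda$-redex by a constant number of nested \texttt{for}-loops (since term depth is bounded by $k$) and performs the substitution by running an outer loop over the positions of $M$ while an inner loop emits a copy of $N$ at each $x$-position --- the same copying idiom used to implement squaring as a for-transducer. The $\mapterm$ and $\splitterm$ rules follow the same two-level loop pattern, with the outer loop emitting the outer list structure and an inner loop emitting the per-element payload; the remaining rules are local rewrites within a bounded window. Non-overlapping outermost redexes are dispatched in a single pass using Boolean registers that track which rule is being applied inside each enclosing position; in the first-order setting these registers change value at most once per enclosing position, so first-order definability is preserved, and the final $f$ is the composition of the constantly many parallel-reduction transducers via Lemma~\ref{lem:for-composition}.
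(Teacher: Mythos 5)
Your overall architecture --- a parallel reduction step implemented by a single for-transducer, composed constantly many times via Lemma~\ref{lem:for-composition}, with Boolean registers handling the case analysis --- is the same as the paper's, and your sketch of how one step is carried out (bounded nesting of loops to locate redexes, an inner loop copying $N$ at each occurrence of $x$) matches the paper's treatment, which is equally terse on that point. The gap is in the termination argument, in two places.

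First, reducing \emph{all outermost} redexes in parallel does not make any rank-based quantity decrease. If $(\lambda x.M)\,N$ is outermost and $N$ contains redexes, those inner redexes are not outermost, so your step does not touch them; the substitution $M[x:=N]$ then duplicates them once per occurrence of $x$, with their ranks unchanged. Hence after one parallel step the term can contain at least as many (indeed more) redexes of maximal rank as before, and the inference ``the arrow rank is bounded, so a constant number of parallel steps suffices'' does not follow. Second, even for genuinely new redexes your strict-decrease claim fails for the measure you chose: in the upward-creation case $((\lambda x.\lambda y.P)\,N)\,Q \to_\beta (\lambda y.P[x:=N])\,Q$ the new redex has leading term of type $\sigma$ where the old one had type $\tau\to\sigma$, and with $\mathrm{ord}(\tau\to\sigma)=\max(\mathrm{ord}(\tau)+1,\mathrm{ord}(\sigma))$ one can have $\mathrm{ord}(\sigma)=\mathrm{ord}(\tau\to\sigma)$ (take $\tau$ a finite set and $\sigma$ of high order). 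The paper avoids both problems at once: it measures redexes by the \emph{degree} of the leading term's type (the depth of the type's syntax tree), which does satisfy $\deg(\tau),\deg(\sigma)<\deg(\tau\to\sigma)$, and it reduces in parallel not the outermost redexes but all redexes of \emph{maximal degree} having no redex of the same degree below them. With that choice the substituted arguments contain no maximal-degree redexes, so none are duplicated; newly created redexes have strictly smaller degree; and the maximal number of degree-$d$ redexes on a root-to-leaf path --- which is at most the current depth, hence controlled by $k$ and the doubling accumulated so far --- strictly decreases at each step (Sublemma~\ref{lem:norm-falls}). That is what yields the constant bound $m$. (A smaller inaccuracy: substitution can double the depth of a term rather than increase it by a bounded additive amount, though this does not affect the conclusion once the number of steps is constant.)
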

In particular, the lemma says that the depth of normal forms is bounded by a constant $m$, once the $k$ and $\Gamma$ have been fixed. This constant is a tower of exponentials, whose height is linear in the size of $\Gamma$, and this cannot be avoided, see   Section 3.7 in~\cite{sorensen2006lectures}.

The above lemma,  together with Lemma~\ref{lem:normal-form-is-value} about normal forms of terms of string type, completes the proof that for-transducers can simulate polynomial list programs. The rest of Section~\ref{sec:program-to-for} is devoted to proving the above lemma.  We define a specific reduction strategy and show that it  terminates in a constant number of steps. The key point is that the reduction strategy is parallel, i.e., multiple redexes are reduced at the same time.

\paragraph*{The reduction strategy.} The reduction strategy  is adapted from  Theorem 3.5.1 in~\cite{sorensen2006lectures}, with the modification that it uses parallel reduction when possible. Likely  other reduction strategies would work as well, e.g.~reducing all innermost or all outermost redexes.  Fix $\Gamma$ and $k$ in the assumptions of Lemma~\ref{lem:red-for}. Define the \emph{degree} of a type to be the depth of its syntax tree, as illustrated below:
\mypic{42} 
Define the \emph{type} of a redex in a term to be the type of its leading term, see the remarks at the end of Definition~\ref{def:bounded-terms} and the example in Figure~\ref{fig:redexes}.  Define a \emph{maximal redex} in a term $M$ to be any redex such that (a) the type $\tau$ of the redex has maximal degree among the types of other redexes in the same term; (b) the redex is maximally deep in the sense that  there are no redexes of the same degree in its subtree. A term might have multiple maximal redexes, but they are incomparable in the tree ordering thanks to condition (b), so it makes sense to reduce them in parallel. Our reduction strategy is to reduce all maximal redexes in parallel: for a term $M$, define $\reduce M$ to be the result of reducing in parallel all maximal redexes (and doing nothing if there are no redexes).  It is easy to see that the depth of $\reduce M$ is at most twice the depth of $M$. The doubling of depth occurs when reducing application with $\lambda$, other redexes increase the depth by at most 2. 

The following lemma, together with the closure of for-transducers under composition from Lemma~\ref{fig:composition-for-transducers}, completes the proof of Lemma~\ref{lem:red-for}, and therefore also the proof that polynomial list-programs can be simulated by for-transducers.

\begin{lemma}\label{lem:reduction-lemma}
  Let $\Gamma$ be a finite set of types and let $k\in \set{1,2,\ldots}$. 
  \begin{enumerate}
    \item One step of the  reduction strategy, i.e.~the function
    \begin{align*}
      \reduce : \terms(\Gamma,k) \to   \terms(\Gamma,2k) 
    \end{align*}
    is recognised by a for-transducer. For terms without group products, a first-order for-transducer is enough.
    \item There exists some $m \in \set{1,2,\ldots}$ such that 
    \begin{align*}
      \overbrace{\reduce \circ \cdots \circ \reduce}^{\text{$m$ times}}(M)
    \end{align*}
    is in normal form for every $M \in \terms(\Gamma,k)$. 
  \end{enumerate}
\end{lemma}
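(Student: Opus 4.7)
My plan is to handle the two parts of the lemma separately.

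For part~1, I would first fix a string encoding of terms in $\terms(\Gamma, k)$ over a finite alphabet consisting of the bounded set of variable names, the atomic function symbols of Figure~\ref{fig:atomic-combinators-1}, the symbol $\lambda$, and the necessary parentheses, commas and list brackets. Because the syntactic depth is bounded by $k$ and the variable stock is bounded, bracket matching, location of enclosing subterms and free-variable analysis all become first-order properties of the string, checkable by a for-transducer whose quantifier depth (equivalently, whose loop nesting) depends only on $k$. The for-transducer implementing $\reduce$ then proceeds in three stages. First, it computes the maximum type degree $d$ of any redex occurring in the term; since $\Gamma$ is finite and every redex's leading term has a type from $\Gamma$, this amounts to a bounded disjunction of first-order tests over the string representation. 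Second, it marks the positions of all maximal redexes, i.e.\ those of type degree $d$ that are maximally nested in the sense that no redex of degree $d$ occurs properly below them (such positions are pairwise incomparable in the tree, so parallel reduction is well defined). Third, it rewrites the string, replacing each marked redex by its contractum from Definition~\ref{def:beta-reduction}.

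Each individual rewrite is within the reach of a for-transducer. The $\lambda$-rule is parallel substitution: for each free occurrence of the bound variable $x$ in the body $M$, a fresh copy of the argument $N$ is emitted in its place, which is exactly the polynomial-growth pattern that for-transducers handle via nested loops (and which is first-order, since each output position is a tuple of input positions compared in first-order fashion). The $\mapterm$ rule distributes the function argument over the positions of the list, $\splitterm$ emits all prefix/suffix pairs (a squaring-style construction), while $\caseterm$, $\projterm$, $\headterm$, $\tailterm$ and $\isterm$ are simple local rewrites. None of these rules involves the group product, so when the input term has no group products the simulating transducer can be taken to be first-order.

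For part~2, I would use a type-degree argument adapted from Theorem~3.5.1 of~\cite{sorensen2006lectures}. The central observation is: when a redex of type $\rho$ is contracted, every newly created redex has leading-term type of strictly smaller degree than $\rho$. This is verified rule by rule. In the $\lambda$-rule, with original redex type $\tau \to \sigma$, any new redex in $M[x := N]$ arises either at a position where $N$ is now applied to something (so its leading term has type $\tau$, of degree less than $\tau \to \sigma$) or at the outer context of the reduced redex (where the leading term has type $\sigma$, again of smaller degree). Analogous inequalities hold for the other rules. Consequently, one application of $\reduce$ strictly decreases the maximum type degree of redexes appearing in the term. Because $\beta$-reduction introduces no types beyond those already appearing, and because all types in $\terms(\Gamma, k)$ come from the fixed finite set $\Gamma$, the maximum type degree is bounded by a constant $m$ depending only on $\Gamma$; after $m$ iterations of $\reduce$ no redexes remain and the term is in normal form.

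The main obstacle I anticipate is in part~1, specifically the bookkeeping needed to implement parallel substitution in a first-order for-transducer while preserving the syntactic tree structure: identifying which occurrences of a bound variable are actually free within the body, pairing each contractum with the correct outer context, and arranging for the output string to remain a well-formed element of $\terms(\Gamma, 2k)$. The termination argument in part~2 is, by contrast, essentially the standard type-degree computation for simply typed lambda calculus, and Lemma~\ref{lem:for-composition} on closure of for-transducers under composition then packages the two parts into the statement of Lemma~\ref{lem:red-for}.
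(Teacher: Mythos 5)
Your part~1 matches what the paper does (the paper treats the formalization of the rewrite rules as routine, and your three-stage plan of locating the maximal degree, marking the maximal redexes, and emitting the contracta is the intended construction). The problem is in part~2. Your key claim --- that one application of $\reduce$ strictly decreases the maximum type degree of redexes in the term --- is false. The strategy $\reduce$ contracts only the \emph{maximal} redexes, which by condition (b) of the definition must be maximally deep: a redex of top degree $d$ that has another degree-$d$ redex strictly below it is \emph{not} maximal and is left untouched. Such an outer redex survives the parallel step with its degree unchanged. For instance, in a term of the shape $(\lambda x.\,((\lambda y.\,y)\,x))\,z$ with both $\lambda$-redexes of the same type degree, only the inner one is reduced, and the resulting term $(\lambda x.\,x)\,z$ still contains a redex of that degree. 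So the maximum degree can persist across many applications of $\reduce$, and your conclusion that the process terminates after $m$ steps with $m$ equal to the maximal type degree does not follow. Indeed, that bound would contradict the known lower bounds for normalisation of simply typed terms (Section~3.7 of~\cite{sorensen2006lectures}, cited in the paper), which force a tower of exponentials.

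The paper's argument repairs exactly this point with a finer measure: Sublemma~\ref{lem:norm-falls} shows that $\reduce$ does not increase the maximal degree $d$ and strictly decreases $|M|_d$, the maximal number of degree-$d$ redexes occurring on a single root-to-leaf path of the syntax tree. Since $|M|_d$ is bounded by the depth of $M$, boundedly many applications of $\reduce$ eliminate all redexes of degree $d$; but each application can double the depth, so eliminating degree $d$ from a term of depth $k$ may produce a term of depth up to $k \cdot 2^k$ before one recurses on degree $d-1$. Iterating over the finitely many degrees realised by types in $\Gamma$ yields the constant $m$, which is a tower of exponentials rather than the linear bound your argument would give. Your rule-by-rule observation that newly created redexes have strictly smaller degree is correct and is part of the proof of the sublemma, but it must be combined with the path-counting measure to handle the surviving non-maximal redexes of the top degree.
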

\begin{proof}
  The first item is proved by formalizing the $\beta$-reduction rules  in a straightforward manner. We concentrate on the second item, about normalization in a bounded number of steps.  
  \begin{sublemma}\label{lem:norm-falls}
      Let $M$ be a term where the maximal degree of redexes is $d$. Then all redexes in $\reduce M$ have degree $\le d$ and 
      \begin{align*}
        |\reduce M|_d  < |M|_d
      \end{align*}
      where  $|M|_d$ is  the maximal  number of redexes of degree $d$ that can be found on root-to-leaf paths in the syntax tree of $M$. 
  \end{sublemma}
  \begin{proof}
    By inspecting the reduction rules in the definition of $\beta$-reduction. This inspection is easier by looking at pictures of the reduction rules as tree transformations, see  Appendix~\ref{sec:reduction-pictures}. 
  \end{proof}

  A corollary of the above sublemma is that if a term $M$ has depth $k$ and its maximal redexes have degree $d$, then $k$ applications of  $\reduce$ will eliminate all redexes of degree $d$, yielding a term  where all redexes have degree $< d$. Furthermore, each application of $\reduce$ at most doubles the depth of a term, and thus eliminating all redexes of degree $d$ in a term of depth $k$ leads to a term of depth at most $k \cdot 2^k$. 
  Putting these observations together, we see that a term normalises after a number of steps which depends only on the depth of the initial term and the maximal degree of  types appearing in it. This dependence is a tower of exponentials, with the height of the tower being the maximal degree, see  Section 3.7 in~\cite{sorensen2006lectures} for why this bound is tight.
\end{proof}

\pagebreak
\part{Algorithms}
This part is about algorithms for polyregular functions. It has only one section, Section~\ref{sec:evaluation}, which shows  that polyregular functions can be efficiently evaluated.  A future version might include an algorithm (or an undecidability proof) for the equivalence problem, which is left open for now:
\begin{description}
\item	{\bf Open Problem.} Is equivalence decidable for polyregular functions, i.e., can one decide if two given  polyregular functions produce the same outputs on every input?
\end{description}
\section{Evaluation algorithms}
\label{sec:evaluation}
In Section~\ref{sec:linear-time}, we show that for every polyregular function there is a linear time algorithm for computing its values, in the sense that the running time is linear in the combined input and output size. In Section~\ref{sec:aczero}, we focus on another aspect of efficient evaluation, namely parallelization. We show that every first-order polyregular function can be computed in \aczero, i.e., by circuits of polynomial size and constant depth (assuming that the output contains a special empty letter $\varepsilon$ which can be ignored).  

\subsection{Linear time}\label{sec:linear-time}
Polyregular functions can be evaluated in time linear in the combined input and output size. 
\begin{theorem}\label{thm:linear-time-evaluation}
	Every polyregular function can be evaluated in time 
	\begin{align*}
 \Oo(\text{length of input string} + \text{length of output string}).
\end{align*}
\end{theorem}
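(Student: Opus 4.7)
The plan is to use the equivalent for-transducer model together with constant-delay enumeration of MSO queries on strings. First, convert the polyregular function $f$ into a for-transducer in prenex normal form using Theorem~\ref{thm:main} and Lemma~\ref{lem:for-normal-form}. Such a transducer consists of $k$ nested for loops followed by a loop-free kernel of constant size, whose naive evaluation takes $\Theta(n^k)$ time on an input of length $n$---this is too slow whenever the output size $m$ is much smaller than $n^k$. The goal is therefore to enumerate only those iterations in which the kernel actually produces output, spending constant time per emitted letter.

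For each pair consisting of an \texttt{output} instruction occurring in the kernel and a valuation $q$ of the Boolean variables just before executing the kernel, construct an MSO formula
\[
  \varphi_{q,\mathrm{instr}}(x_1,\ldots,x_k)
\]
that holds in the input word iff, when the for loops are iterated to the tuple $(x_1,\ldots,x_k)$, the Boolean variables have value $q$ and the guard of the instruction is true. Such a formula exists by the compositional reasoning used in Lemma~\ref{lem:all-pebble-mso} applied to the pebble transducer that simulates the for-transducer (Lemma~\ref{lem:for-to-pebble}): the Boolean valuations reached at iteration $(x_1,\ldots,x_k)$ are MSO-definable from the tuple, because they are determined by the configuration reachability relation of the pebble transducer, which itself is MSO-definable. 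The number of formulas is a constant depending only on $f$.

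Now apply the constant-delay enumeration algorithm for MSO queries on strings from~\cite{Kazana:2013jq}: after $O(n)$ preprocessing, one can enumerate the tuples satisfying a fixed MSO formula in lexicographic order with $O(1)$ delay per tuple. The order in which the for-transducer visits iterations is a lexicographic order on position tuples, with each coordinate sorted first-to-last or last-to-first depending on the direction of its loop; a coordinate sorted last-to-first can be handled either by reversing the input or by reindexing positions. Running the constant number of enumerations simultaneously and merging their outputs with a priority queue of constant size produces the output letters in the correct order with amortized $O(1)$ delay, giving total time $O(n+m)$.

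The main obstacle I expect is justifying that the Boolean-variable valuations can be captured by MSO formulas over the current position tuple, since in principle they depend on the entire past history of execution. The resolution is to appeal to MSO-definability of reachability in pebble transducers (Lemma~\ref{lem:all-pebble-mso}): the valuation of the Boolean variables at the start of a given iteration is determined by the configuration reached at that point in the simulating pebble transducer, and this configuration is MSO-definable from its pebble positions. After that the remaining work is routine bookkeeping to align the enumeration orders and the merge. For the first-order polyregular case, the same plan works with first-order formulas and a first-order constant-delay enumeration, which foreshadows the stronger \aczero bound of the next subsection.
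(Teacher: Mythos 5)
Your proposal is correct and takes essentially the same route as the paper: both arguments rest on MSO-definability of the reached configuration/valuation at a given position tuple (via Lemma~\ref{lem:all-pebble-mso}) combined with the linear-preprocessing, constant-delay machinery of~\cite{Kazana:2013jq}. The only real difference is packaging: the paper works directly with the pebble transducer and iterates an MSO-definable ``next productive configuration'' function to respect the (non-lexicographic) run order, whereas you start from a for-transducer in prenex normal form so that the output order becomes lexicographic and can be obtained by merging a constant number of constant-delay enumerations --- two equivalent ways of handling the same order-of-output issue that the paper explicitly flags as the only delta from~\cite{Kazana:2013jq}.
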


A natural idea for evaluating a polyregular function would be to simply implement algorithmically the semantics of any one of the devices (compositions of atomic functions, polynomial list functions, for-transducers, or pebble automata). Unfortunately, in each case,  the naive algorithm would have running time that is super-linear in the output size; the reason being that all of the  devices mentioned above  can do many steps before producing any output. Therefore, we need to run an optimisation in the algorithm such that subcomputations which do not produce any output are simulated in constant time. 

The rest of Section~\ref{sec:linear-time} is devoted to finding such an optimisation. 
Our proof shows a slightly stronger result, namely that after a precomputation that is linear in the input string, one can start producing the output string  with constant delay between positions. The ideas in the proof are closely based  constant delay enumeration algorithms for \mso on strings, see~\cite{bagan2006mso} and especially~\cite{Kazana:2013jq}. The only difference between Theorem~\ref{thm:linear-time-evaluation} and~\cite{bagan2006mso,Kazana:2013jq} is that in Theorem~\ref{thm:linear-time-evaluation}, we have to list tuples according to a given order, while the tuples in~\cite{bagan2006mso,Kazana:2013jq} can be listed in an order chosen by the algorithm (e.g.~lexicographic). This difference is not very important for the solution.

Let $f$ be a polyregular function, which is recognised by a $k$-pebble transducer $\Aa$. 
A configuration of the $k$-pebble transducer   is called \emph{productive} if its state outputs at least one letter. The general idea in the linear time algorithm from  Theorem~\ref{thm:linear-time-evaluation}
is to enumerate the productive configurations, with constant delay between two consecutive ones.  By Lemma~\ref{lem:all-pebble-mso}, for every states $p$ and $q$ and numbers $i,j \in \set{1,\ldots,k}$  there is an \mso formula 
\begin{align*}
\varphi^{ij}_{pq}(\underbrace{x_1,\ldots,x_i}_{\bar x}, \underbrace{y_1,\ldots,y_j}_{\bar y})	
\end{align*}
such that for every nonempty word $w$ over the input alphabet,
\begin{align*}
w \models \varphi^{ij}_{pq}(\bar x, \bar y)	
\end{align*}
holds if and only the automaton $\Aa$ has a run from configuration $p(\bar x)$ to configuration $q(\bar y)$. Using the above formulas, we can express in \mso that the automaton has a run from $p(\bar x)$ to $q(\bar y)$ which uses only non-productive configurations except for the source and target. In other words, the ``next productive configuration'' function can be defined in \mso.  The following lemma shows that, using a data structure that can be computed in linear time with respect to the input, one can evaluate in constant time every \mso definable partial function from tuples of positions to tuples of positions, in particular the ``next productive configuration'' function. Therefore, to prove Theorem~\ref{thm:linear-time-evaluation} it remains to prove the lemma.

\begin{lemma}\label{lem:compute-function-constant-time}
	Consider an \mso formula
\begin{align*}  
	\varphi(\overbrace{x_1,\ldots,x_i,y_1,\ldots,y_j}^{\text{first-order variables}})
\end{align*} 
which uses order $<$ and label predicates $a(\_)$ for $a \in \Sigma$. 
There is an algorithm which does the following for every input string $w \in \Sigma^*$.
\begin{enumerate}
	\item Computes a data structure in time linear in $|w|$;
	\item Using the data structure, answers the following queries in constant time: 
	\begin{itemize}
		\item {\bf Input.} A tuple of positions $\bar x = x_1,\ldots,x_i$ in $w$;
		\item {\bf Output.} The lexicographically least  tuple of positions $\bar y = y_1,\ldots,y_j$ such that $w \models \varphi(\bar x,\bar y)$, or an answer that no such tuple exists.
	\end{itemize}
\end{enumerate}
\end{lemma}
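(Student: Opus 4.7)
The plan is to adapt the constant-delay \mso enumeration algorithm of Kazana and Segoufin~\cite{Kazana:2013jq} to our slightly different setting, where for each query a specific (lexicographically least) tuple rather than the entire enumeration is requested. First, compile $\varphi$ into an equivalent deterministic finite automaton $\Aa$ over the enriched alphabet $\Sigma \times \set{0,1}^{i+j}$, whose accepting runs on $w$ correspond to valuations $(\bar x, \bar y)$ satisfying $\varphi$.

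I would reduce the problem to a sequence of $j$ single-witness queries. The lexicographically least $\bar y = (y_1,\ldots,y_j)$ is determined greedily, coordinate by coordinate: once $y_1,\ldots,y_{k-1}$ have been chosen, set $y_k$ to be the smallest position $z$ for which the residual \mso formula
\begin{align*}
\psi_k(\bar x, y_1,\ldots,y_{k-1}, z) \;=\; \exists y_{k+1}\cdots\exists y_j.\ \varphi(\bar x, \bar y)
\end{align*}
holds, or report failure if no such $z$ exists. Since $j$ is fixed, each of the formulas $\psi_1,\ldots,\psi_j$ can be preprocessed separately, and answering the $j$ single-witness queries in constant time each gives the claimed bound.

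For the single-witness problem, I would build a skeleton tree: a balanced binary tree $T$ over the positions of $w$, whose nodes are annotated with the \mso $r$-type of the corresponding interval (where $r$ is the quantifier rank of $\varphi$), parametrised by how the free variables fall inside the interval. Since there are only finitely many such $r$-types, this annotation is a bounded-information object, and $T$ is built in time $O(|w|)$ by standard \mso compositionality. A query ``smallest $z$ such that $\psi_k(\bar u, z)$ holds'', where $\bar u$ denotes the tuple of currently-fixed positions, is then answered by a descent through $T$: at each internal node, the types of its two children, combined with the distribution of $\bar u$ across the two subtrees, determine whether the left subtree already contains a valid witness, and the descent proceeds into that subtree.

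The main obstacle is pushing the query time from the straightforward $O(\log |w|)$ down to a genuine $O(1)$. This is precisely the technical core of~\cite{Kazana:2013jq}: the descent through $T$ is accelerated by precomputed lookup tables for composing types and by jump pointers that compress maximal paths of type-preserving nodes, so that each query visits only a bounded number of distinguished nodes. The only difference from~\cite{Kazana:2013jq} is that we must produce tuples in a fixed lexicographic order rather than an order chosen by the algorithm; this is handled by always descending into the leftmost subtree containing a witness, which fits the skeleton-tree framework without essential modification.
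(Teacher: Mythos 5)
Your overall plan matches the paper's in outline: the greedy reduction of the $j$-tuple to $j$ successive single-witness queries is exactly what the paper does (it assumes $j=1$ without loss of generality), and the idea of an interval tree annotated with \mso $r$-types queried by descent is also the right shape. The problem is that the one step that actually carries the content of the lemma --- getting the query time from $O(\log|w|)$ down to $O(1)$ --- is not established. You build a \emph{balanced binary} tree, which has depth $\Theta(\log|w|)$, and then assert that ``jump pointers that compress maximal paths of type-preserving nodes'' shortcut the descent to constantly many nodes. This is not a workable mechanism as stated: the decision of which child to descend into at a given node depends not only on the $r$-type of that node's interval but on how the fixed positions $\bar u$ and the boundaries of the query interval are distributed across its two subtrees, and this data changes at every level along the descent, so there is no reason the path decomposes into boundedly many ``type-preserving'' stretches. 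Attributing the fix to~\cite{Kazana:2013jq} does not close the gap, because that is not how the constant bound is obtained there either.

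The paper avoids the issue by not using a balanced binary tree at all. It applies Simon's Factorisation Forest Theorem to build an unranked factorisation tree of \emph{constant} depth (depending only on the quantifier rank $r$, not on $w$), computable in linear time, in which every node with more than two children has all children of the same \emph{idempotent} $(r{+}1)$-type. Constant depth makes the descent constant length for free; the remaining danger is that an unranked node may have unboundedly many children, and here idempotence is used to show that the \emph{leftmost} witness, if any, must lie in one of at most four distinguished children (the ones containing the endpoints of the query interval and their immediate neighbours), so only a bounded number of children are ever inspected. This idempotence observation is the key lemma your proposal is missing; without it, or without some genuinely argued path-compression scheme, the constant-time claim does not follow.
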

\begin{proof} The idea is to use factorisation forests and compositionality of \mso, in the same way as  in~\cite{colcombet2007combinatorial},~\cite{Kazana:2013jq}  or~\cite[Section 2.1]{bojanczyk2009factorization}. Since the proof is a minor adaptation of~\cite{colcombet2007combinatorial,bojanczyk2009factorization,Kazana:2013jq}, we only give a rough sketch.

	We assume without loss of generality that $j=1$, i.e.~the output tuple $\bar y$ consists of only one position. The case of $j>1$ can be easily reduced to the case of $j=1$, by computing successively the coordinates of the output tuple $\bar y$. 

	For $r \in \set{1,2,\ldots}$ define  $r$-equivalence  to be the equivalence relation on $\Sigma^*$ which identifies two words when they satisfy the same \mso sentences of quantifier rank at most $r$.  Compositionality for \mso says that  $r$-equivalence classes can be equipped with a monoid structure  so that mapping a word to its $r$-equivalence class becomes a  monoid homomorphism (see e.g.~\cite[proof of Lemma 4.1]{thomas1997languages}). Furthermore, for every position $y$ and every \mso formula $\psi(y)$  of quantifier rank at most $r$, whether or not $w \models \psi(y)$  depends only on the following information (an interval is a connected set of positions in a word):
		\mypic{44}
	A similar result is true for queries with more free variables.

	The data structure from item 1 in the lemma will be a factorisation as in the Factorisation Forest Theorem; we have already used a similar data structure in Section~\ref{sec:pebble-to-polyregular}. Let us begin with some terminology.  A \emph{factorisation} of an interval $I$ is a sequence of intervals $I_1,\ldots,I_m$ which partitions $I$, listed in left-to-right order. When talking about the $r$-equivalence class of an interval $I$ in a word $w$, we mean the $r$-equivalence class of the word $w[I]$ which is obtained from $w$ by keeping only positions from $I$.

	\paragraph*{The data structure.} We now define the data structure from item 1 of the lemma.  Define an \emph{$r$-factorisation} of an input word $w \in \Sigma^*$ to be a tree (an unranked tree with ordered siblings) with nodes labelled  by intervals in $w$ such that:
	\begin{enumerate}
		\item the root is labelled by the full interval containing all positions;
		\item leaves are labelled by intervals with one position only;
		\item if  a node has label $I$ and its children have labels $I_1,\ldots,I_m$  then
		\begin{enumerate}
			\item $m \ge 2$ and the intervals $I_1,\ldots,I_m$ are a factorisation of $I$;
			\item if $m > 2$, then $I_1,\ldots,I_m$  have the same  $r$-equivalence class, which is idempotent in the semigroup of $r$-equivalence classes\footnote{In Section~\ref{sec:pebble-to-polyregular} we did not assume idempotence, because we wanted the factorisation to be defined in first-order logic. Here we do not need first-order definability, which allows us to use idempotence. }.
		\end{enumerate}
	\end{enumerate}
	
	Let $r$ be the quantifier rank of the formula $\varphi$ in the assumption of the lemma. 
One can compute in linear time an  $(r+1)$-factorisation $t$ of $w$  whose  depth  is constant, i.e.~depends only on $r$ and not on $w$. The  algorithm that computes $t$  is implicit in the proof of the Factorisation Forest Theorem~\cite{simon1990factorization}, see also~\cite{bojanczyk2010efficient} for an explicit description of this algorithm and its more efficient versions. The $(r+1)$-factorisation $t$ is the data structure from item 1 in the statement of the lemma.   We also assume that each interval $I$ in the factorisation  is labelled by its $(r+1)$-equivalence class of $I$; these labels can be computed in linear time.

\newcommand{\firstposi}{\mathit{first}}
\newcommand{\lastposi}{\mathit{last}}
	
\paragraph*{Using the data structure.}
	It remains to show that the data structure $t$ can be used to answer in constant time the queries from item 2 in the statement of the lemma. The key step is given in the following sublemma. To get the result from the statement of the lemma, one applies a routine compositionality argument, see~\cite[Theorem 3.1]{Kazana:2013jq}. Therefore, we only prove the sublemma. 

	\begin{sublemma}
		Using $t$, the following  can be answered in constant time:
		\begin{itemize}
			\item {\bf Input.} An interval $I$ and  an \mso query $\psi(y)$ of quantifier rank $r$;
			\item {\bf Output.} The first position $y \in I$ such that $y$ satisfies $\psi$ in $w[I]$, or an answer that there is no such position.
		\end{itemize}	
	\end{sublemma}
	\begin{proof}
		Distinct nodes in $t$ are labelled by distinct intervals, and therefore  we can identify nodes in $t$ with the  intervals that label them. 
		Choose the node in the tree which represents an interval  $J  \subseteq I$  that contains $I$ and is smallest inclusion-wise  for this property. This node can be computed in constant time,  because the depth of the tree is constant. The algorithm works by induction on the  height of the  subtree of $J$.   If  $J$ is a leaf, then $w[I]$ has one letter only, and therefore one only needs to check if the unique position $y \in I$ satisfies $\psi(y)$ in $w[I]$, which can be done in constant time. 
		
		Otherwise $J$ has at least two children whose represent a factorisation
		\begin{align*}
			J = J_1 \cup \cdots \cup J_m.
		\end{align*}
		We only deal with the more  interesting case  of $m > 2$, in which case all of the intervals $J_1,\ldots,J_m$ have the same $(r+1)$-equivalence class $e$, which is idempotent.  Choose  $\firstposi, \lastposi \in \set{1,\ldots,m}$ so that $J_{\firstposi}$ contains the first position of $I$ and  $J_\lastposi$ contains the last position of $I$. By assumption on minimality of $J$, we have $\firstposi < \lastposi$. 
		The key observation is that if there is some position $y \in I$  which satisfies $\psi(y)$ in the word $w[I]$, then the leftmost position with this property  belongs to one of the intervals 
		\begin{align}\label{eq:four-intervals}
			J_{\firstposi},J_{\firstposi+1}, J_{\lastposi-1}, J_{\lastposi}.
		\end{align}
		This is because of the  assumption on the idempotence of the $(r+1)$-type $e$, which tells us when an interval contains at least one position satisfying a given rank $r$ query $\psi(y)$. (This is the same kind of observation as was used in Sublemma~\ref{claim:head-moves-little}.) Therefore, we can use the induction assumption to search for the  position $y$ in one of  the four intervals in~\eqref{eq:four-intervals}.
	\end{proof}
\end{proof}

We finish Section~\ref{sec:linear-time} with  some questions for future work. 
As mentioned at the beginning of the proof of Theorem~\ref{thm:linear-time-evaluation}, our proof shows that  after a precomputation that is linear in the input size, one can start producing the output word with constant delay between positions. This construction leaves some space for improvement. One improvement would be to lower the constants in the $\Oo$ notation. In the current algorithm, the constants in the linear time  are towers of exponentials in the size of the query, and therefore  the algorithm is not likely to be  practical. Maybe the constants can be made  polynomial in the size of a pebble automaton recognising the function? Another improvement would be to have random access to the output in the following sense:  after a precomputation (linear time, or maybe $n \log n$) one can  answer in constant time queries of the form ``give the $i$-th letter of the output  word''. Yet another direction is other kinds of access to the output word, e.g., pattern matching. These questions are  left for future work.

\subsection{ \aczero}
\label{sec:aczero}
This  section is about evaluating polyregular functions using \aczero  circuits (i.e., constant depth and polynomial size). There are two caveats: this can only be done for first-order polyregular functions, and the circuits are allowed to produced blank letters which do not count in the output. 
 The takeaway is that evaluation of first-order polyregular functions can be done efficiently in parallel. 
\paragraph*{Circuits defining string-to-string functions.} For definitions of circuits and the class \aczero, see~\cite[Section VIII]{straubing2012finite}. 
A circuit with $n$ input gates and $k$ output gates defines a Boolean function $2^n \to 2^k$. 

\begin{definition}[Functions in \aczero]\label{def:aczero}
	A function $f : 2^* \to 2^*$ is said to be in \aczero if there is  a family of constant depth and polynomial size circuits $\set{C_n}_{n \in \Nat}$ such that for input words $w$ of length $n$, the output  $f(w)$ is obtained by applying the circuit $C_n$ and reading the output gates.  We extend the definition of \aczero functions to functions of type $\Sigma^* \to \Gamma^*$, for finite alphabets $\Sigma,\Gamma$, by coding letters as bit strings of fixed length.
\end{definition}
Under the above  definition,  the length of the output is uniquely determined by  the length of the input (call such functions \emph{fixed output length}), because it is determined by the number of output gates in the  circuit appropriate to the input length. We want to use circuits to compute  functions where the output length is not determined by the input length (call such functions \emph{variable output length}), e.g.~the homomorphism
\begin{align*}
  h : \set{a,b}^* \to \set{a}^*
\end{align*}
which erases all $b$'s does not have fixed output length but is  polyregular.   To extend the definition of \aczero to functions of variable length, we use the class of functions which can be decomposed as 
\begin{align*}
  \xymatrix{\Sigma^* \ar[r]^f & (\Gamma+\varepsilon)^* \ar[r]^h & \Gamma^*}
\end{align*}
where $f$ is in \aczero as in Definition~\ref{def:aczero} (and therefore has fixed output length) and $h$ is the function that erases symbol $\varepsilon$. We use \homaczero to denote the resulting class of functions.
An equivalent definition of the class, which motivates its name, would be to consider compositions of an \aczero function followed by an arbitrary string-to-string homomorphism (i.e.~a homomorphism of free monoids), and not just the special homomorphism that erases $\varepsilon$.

Here is the main result of Section~\ref{sec:aczero}.
\begin{theorem}\label{thm:homacz}
Every first-order polyregular function is in \homaczero.
\end{theorem}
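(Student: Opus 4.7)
The plan is to use the equivalence (Theorem~\ref{thm:main}) between first-order polyregular functions and first-order for-transducers. Given a first-order polyregular $f$, fix a first-order for-transducer computing $f$ in prenex normal form (Lemma~\ref{lem:for-normal-form}), with $k$ nested for loops binding position variables $x_1, \ldots, x_k$. For an input $w$ of length $n$ we have
\begin{align*}
f(w) \ = \ \prod_{\bar x} f(w, \bar x),
\end{align*}
where $\bar x$ ranges over $k$-tuples of positions in $w$, in the lexicographic order $\preceq$ dictated by the for loops (each coordinate being first-to-last or last-to-first), and $f(w, \bar x)$ is the word of length at most some fixed constant $C$ produced by the kernel in the iteration with position variables equal to $\bar x$.

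The key point is that for a first-order for-transducer, the word $f(w, \bar x)$ is determined by a constant-size Boolean combination of first-order queries in the free variables $\bar x$. The kernel itself is loop-free and executes a fixed tree of conditionals, so its output is determined by the initial values of the Boolean variables at entry to the kernel, together with the relative order and labels of the positions in $\bar x$. The first-order restriction forces each Boolean variable to change value at most once (from false to true), and a direct adaptation of Lemma~\ref{lem:pebble-to-for-formula} shows that whether a given Boolean variable is true at the beginning of the iteration indexed by $\bar x$ is expressible by a first-order formula $\psi(\bar x)$ over the input word.

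The \homaczero circuit is now straightforward to construct. For input length $n$, produce $C \cdot n^k$ output positions indexed by pairs $(\bar x, j)$ with $j \in \set{1, \ldots, C}$, ordered so that $(\bar x, j)$ precedes $(\bar x', j')$ iff $\bar x \prec \bar x'$, or $\bar x = \bar x'$ and $j < j'$. This ordering does not depend on $w$ and can be hardwired into the layout of the output gates. At position $(\bar x, j)$ the circuit outputs the $j$-th letter of $f(w, \bar x)$, or the blank symbol $\varepsilon$ if $|f(w, \bar x)| < j$. Each such output letter is a fixed Boolean combination of first-order queries $\varphi(\bar x)$ evaluated on $w$ with $\bar x$ hardwired; since first-order queries on strings are evaluable in uniform \aczero (see e.g.~\cite[Section VIII]{straubing2012finite}), each output letter is produced by a constant-depth, polynomial-size subcircuit. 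The whole family has polynomial size and constant depth, and applying the $\varepsilon$-erasing homomorphism recovers $f(w)$, placing $f$ in \homaczero.

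The main obstacle is the first-order definability of the Boolean-variable valuations at entry to the kernel for a given $\bar x$; this is essentially the content of Lemma~\ref{lem:pebble-to-for-formula}, and its specialization to for-transducers in prenex normal form is routine once one tracks the nesting level at which each Boolean variable is declared (and hence reinitialized). Everything else reduces to the classical translation of first-order queries on strings into \aczero circuits, applied uniformly across the polynomially many tuples $\bar x$.
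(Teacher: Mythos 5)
Your proof is correct, but it takes a genuinely different route from the paper. The paper deliberately avoids the for-transducer characterization: it first observes that \homaczero is not closed under composition (Example~\ref{ex:homaczero}), then proves Lemma~\ref{lem:sweeper-decompose-block-size}, which rewrites any first-order polyregular function as $h \circ f_1 \circ \cdots \circ f_n$ where each $f_i$ is squaring, iterated reverse, or a first-order sequential function with \emph{fixed block size}, and $h$ is a single letter-erasing homomorphism at the end. Each $f_i$ then has fixed output length, fixed-output-length \aczero functions compose, and the sole $\varepsilon$-erasure happens once at the outside. You instead invoke the (already proved) equivalence with first-order for-transducers from Theorem~\ref{thm:main}, put the transducer in prenex normal form, and build the circuit in one shot: the output is $\prod_{\bar x} f(w,\bar x)$ over the $n^k$ tuples in the hardwired order $\preceq$, each block $f(w,\bar x)$ has constant length and is determined by the Boolean valuation at entry to the kernel (first-order definable by Lemma~\ref{lem:pebble-to-for-formula}) together with the quantifier-free type of $\bar x$, and each output gate is therefore a fixed Boolean combination of first-order queries, evaluable in \aczero. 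Your construction sidesteps the non-closure of \homaczero under composition just as effectively, because the $\varepsilon$-padding occurs only inside constant-size blocks whose positions are fixed by $n$ alone. The trade-off: the paper's argument is self-contained modulo an easy normalization of compositions and never needs Part~II, whereas yours leans on the full equivalence theorem and the first-order reachability lemma; in exchange you get a cleaner structural statement (every first-order polyregular function is, up to $\varepsilon$-erasure, a concatenation over $k$-tuples of positions of first-order-definable constant-length blocks), which arguably makes the role of the first-order hypothesis more transparent.
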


The assumption on first-order is crucial, because the product operation in the two-element group is polyregular (even sequential), but  not in \aczero, see~\cite[Theorem 4.3]{furst1984parity}. 

\begin{myexample}\label{ex:homaczero}
	An alternative idea for dealing with variable output lengths is to pad the output with a symbol $\#$. The resulting class, call it \emph{padding \aczero}, would not contain some polyregular functions.  Consider the homomorphism
	\begin{align*}
		h : \set{a,b}^* \to \set{a}^*
	\end{align*}
	which erases all letters $b$. This function is clearly polyregular. If $h$ were in padding \aczero,  then there would be a family of \aczero circuits, in the sense of Definition~\ref{def:aczero},   computing the function
\begin{align*}
	w \in \set{a,b}^* \quad \mapsto \quad  a^n\#^m 
\end{align*}
where $n$ is the number of $a$'s in $w$ and $m$ is the number of $b$'s.  By composing a circuit for the above function with a circuit that checks if the first $\#$ is on an even numbered position, we would get an \aczero  family of circuits for parity, contradicting~\cite[Theorem 4.3]{furst1984parity}. 
\end{myexample}

A natural idea for proving Theorem~\ref{thm:homacz}  would be to show that the class of functions \homaczero is closed under composition, and then show that the atomic first-order polyregular functions (first-order sequential functions, iterated reverse and squaring) are in \homaczero. Unfortunately,  \homaczero is not closed under composition, which can be shown  using the same reasoning as in Example~\ref{ex:homaczero}. Therefore, we need a different approach.

	We say that a sequential function has \emph{fixed block size} of there is some $c \in \set{0,1,2,\ldots}$ such that for every transition in the underlying automaton, the associated output word (see item~\ref{it:rational-output-words} in Definition~\ref{def:rational-function}) has length exactly $c$. In particular, for nonempty inputs, the output is exactly $c$ times longer than the input.
\begin{lemma}\label{lem:sweeper-decompose-block-size}
	Every first-order polyregular function can be decomposed as 
	\begin{align*}
  h \circ f_1 \circ \cdots \circ f_n
\end{align*}
where 
\begin{enumerate}
	\item $h$ is a homomorphism where letters are mapped to strings of length $\le 1$;
	\item each of $f_1,\ldots,f_n$ is squaring, iterated reverse,  or a first-order sequential function with fixed block size.
\end{enumerate}
\end{lemma}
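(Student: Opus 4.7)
The plan is to start from the standard atomic decomposition of a first-order polyregular function and push all intermediate erasure homomorphisms, introduced by padding sequential functions, to the right-most position so that they collapse into the single final $h$. Concretely, write $f = g_m \circ \cdots \circ g_1$ with each $g_i$ an atomic first-order polyregular function. For each first-order sequential $g_i$ of maximum block size $c_i$, let $g_i'$ be the fixed-block-size version obtained by padding each transition's output to length exactly $c_i$ with a fresh letter $\#_i$; then $g_i = h_i \circ g_i'$, where $h_i$ erases $\#_i$. The task reduces to commuting every intermediate $h_i$ past the subsequent atomic operations.

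Two of the three atomic operations permit an easy commutation. For a fixed-block-size first-order sequential $g_j'$ coming after $h_i$, extend $g_j'$ by a self-loop on $\#_i$ in every state that outputs $c_j$ copies of $\#_i$; this preserves fixed block size and counter-freeness, and gives $g_j' \circ h_i = h_i \circ \tilde g_j'$. For iterated reverse $R$ with separator $\sigma \neq \#_i$, reversing inside each block commutes with erasing $\#_i$'s inside that block, so $R \circ h_i = h_i \circ R$.

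The main obstacle is commuting $h_i$ past squaring $S$, because $S$ applied to a $\#_i$-padded input produces \emph{ghost copies} corresponding to the $\#_i$-positions, which do not occur in $S(h_i(w))$. I plan to simulate $S \circ h_i$ by a chain of atomic operations that uniformly marks ghost copies for later erasure: first a fixed-block sequential function appends a fresh separator $\$$; then apply $S$; then iterated reverse with separator $\$$; then a first-order fixed-block sequential function scans left-to-right and within each block tags every letter, writing ``pending'' on letters read before the block's underlined letter and writing the block-status ``good'' (underlined letter in $\Sigma \setminus \{\#_i\}$) or ``bad'' (underlined letter equal to $\#_i$ or $\underline{\$}$) on the underlined letter and all letters read after it; then iterated reverse with separator $\$$ again, which has the effect of putting the tagged letters at the front of each block and the pending letters at the back; and finally a first-order fixed-block sequential function scans left-to-right, reads the block-status from its leading tagged letter, and replaces every letter of a bad block with a fresh erasable symbol $\#_i'$ while leaving the letters of a good block unchanged. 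The final homomorphism then erases $\#_i$, $\#_i'$, and $\$$, and a direct calculation on small inputs (for instance $w = a\#_i b$, which should produce $\underline{a}ba\underline{b}$) confirms the construction.

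Iterating these rewritings through the atomic decomposition collapses all intermediate erasure symbols into a single homomorphism $h$ mapping each introduced symbol to $\varepsilon$ and every other letter to itself, which is the $h$ required by the lemma. Because each sequential function introduced has only a bounded flag state (in $\{\text{pending}, \text{good}, \text{bad}\}$, reset at every separator) and no cyclic group behaviour, counter-freeness and hence first-order definability is preserved throughout. The main place requiring careful verification is the squaring case above: one has to check that the two iterated reverses correctly move all tagged letters in front of all pending ones so that the last sequential function can resolve the pending letters in a single forward pass, which is the step that would fail with only one reversal since a left-to-right sequential function cannot count the length of a block.
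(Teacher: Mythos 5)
Your proof is correct and follows essentially the same route as the paper: the paper phrases the argument as showing that the class $H \circ F^*$ (an erasing homomorphism applied after a composition of squarings, iterated reverses and fixed-block-size first-order sequential functions) contains the atomic first-order polyregular functions and is closed under composition, which reduces to exactly the commutation $F \circ H \subseteq H \circ F^*$ that you implement by pushing each padding-erasure rightward through the decomposition. The paper leaves that commutation as ``an easy exercise for the reader'', so your explicit treatment of the one genuinely nontrivial case --- the two-reversal construction for pushing an erasure past squaring, which correctly marks and discards the ghost copies arising from padded positions while preserving counter-freeness and fixed block size --- is an elaboration of the paper's argument rather than a divergence, and it checks out.
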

\begin{proof}
Let us write $H$ for the functions as in the first item of the lemma, and $F$ for the functions as in the second item of the lemma. If $*$ denotes closure under composition, then statement of the lemma is that
\begin{align*}
	\text{polyregular} \subseteq H \circ F^*.
\end{align*}
(The converse inclusion is clearly seen to be true.) It is easy to see that  the atomic first-order polyregular functions are all in $H \circ F^*$.  Therefore, the prove the lemma, it remains to show that $H \circ F^*$  is closed under composition. Since both $F$ and $H^*$ are closed under composition,  it is enough to show
\begin{align*}
  F \circ H \subseteq H \circ F^*,
\end{align*}
which is left as an easy exercise for the reader.
\end{proof}

\begin{proof}[Proof of Theorem~\ref{thm:homacz}]
	Apply Lemma~\ref{lem:sweeper-decompose-block-size}, yielding a decomposition  
	\begin{align*}
  h \circ f_1 \circ \cdots \circ f_n.
\end{align*}
Since the class \aczero from Definition~\ref{def:aczero} is closed under composition, to prove the lemma, it suffices to show that all of the functions $f_1,\ldots,f_n$ are in the class \aczero.  In other words, we need to show that  the class \aczero contains  the squaring function, iterated reverse and also every first-order sequential function with fixed block size. For squaring, the circuit is easy to construct, and it has depth one since it essentially amounts to copying the input without really reading it. For first-order sequential functions with fixed block size $c$, we observe that the $i$-th bit of the output depends a first-order property of the first $i/c$ input letters, and such first-order properties can be computed in \aczero, see~\cite[Theorem IX.2.1]{straubing2012finite}. 

We are left with showing that iterated reverse is in \aczero. Without loss of generality we consider the case of two letters $\set{0,1}$ and a separator $\#$,  as in this example
\begin{align*}
	 011 \# 1000 \# 00111  \qquad \mapsto \qquad  110 \# 0001 \# 11100
\end{align*}
Suppose that the input has length $n$, and therefore also the output will have length $n$. For an output position $i \in \set{1,\ldots,n}$, the $i$-th output letter is:
\begin{enumerate}
	\item $\#$ if the  $i$-th  input letter is $\#$;
	\item $a \in \set{0,1}$ if there exist positions $j < i < k$ such that:
	\begin{enumerate}
		\item the input word has no separator strictly between positions $j$ and $k$;
		\item $j=0$ or the input word has a separator on position $j$;
		\item $k=n+1$ or the input word has a separator on position $k$;
		\item the $(k-i+j)$-th letter of the input is $a$.
	\end{enumerate}
	The above conditions can be formalised in an \aczero circuit.
\end{enumerate}

\end{proof}

\bibliographystyle{alpha}
\bibliography{bib}

\pagebreak
\part{Appendix}

\appendix
\section{Haskell Code}
\label{sec:haskell}
This section contains Haskell code for the polynomial list programs used in the paper. 

\subsection{Atomic polynomial list programs}
\label{sec:haskell-atomic}
{\small 
  \begin{verbatim}
data Errortype = Error

-- projections
first :: (a, b) -> a
first (x,y) = x
second :: (a, b) -> b
second (x,y) = y

-- coprojections are built into Haskell
-- Left :: a -> Either a b
-- Right :: b -> Either a b

-- case
cases :: (a1 -> b) -> (a2 -> b) -> Either a1 a2 -> b
cases f g (Left x) = f x 
cases f g (Right x) = g x

-- map is built into Haskell
-- map :: (a -> b) -> [a] -> [b]

-- concat is built into Haskell
-- concat :: [[a]] -> [a]

-- similar to Haskell head, but with an explicit error in the type
hd :: [a] -> Either  Errortype a
hd (h:_) = Right(h)
hd [] = Left(Error)

-- similar to Haskell tail, but with an explicit error in the type
tl :: [a] -> Either  Errortype [a]
tl (h:t) = Right t
tl [] = Left (Error)

-- input a list and return all ways of splitting it into two parts
split :: [a] -> [([a], [a])]
split [] = [([],[])]
split (h:t)  = (map (\(x,y) -> (h:x,y)) (split t)) ++ [([],h:t)]
\end{verbatim}
}

\subsection{Iterated reverse}
\label{sec:haskell-iterated-reverse}

{\small
\begin{verbatim}
-- like map, but only keeps the non-error values
errmap :: (a -> ( Either Errortype b)) -> [a] -> [b]
errmap f l = concat(map 
   (\e -> cases (\e -> []) (\e -> [e]) ( f e))  l)

-- we can use reverse then else because it could be coded as such:
-- reverse l =  errmap (\x -> hd (second x)) (split l)

-- empty test for lists
empty :: [a] -> Bool
empty l = cases (\x -> True) (\x -> False) (hd l)

-- errcatch: if there is an error, then replace it by some default value
errcatch :: (Either Errortype a) -> a -> a
errcatch a b = cases (\x -> b) id  a

-- is there some element on the list that satisfies f?
exists :: (a -> Bool) -> [a] -> Bool
exists f l = not (empty (filter f l))

-- do all elements of the list satisfy f?
forall :: (a -> Bool) -> [a] -> Bool
forall f l =  (empty (filter (not.f) l))

-- return the longest prefix of the list where all elements satisfy f
fprefix :: (a -> Bool) -> [a] -> [a]
fprefix f l = 
  let
  onlyfs = filter (forall f) (map first (split l))
  in
  errcatch (hd (map (filter f) onlyfs)) []

-- which part of the coproduct is used?
isleft :: Either a b -> Bool
isleft (Left a) = True
isleft _ = False

-- we can use if then else because it could be coded as such:
-- ifthenelse :: Bool -> t -> t -> t
-- ifthenelse c t e = if (c == True) then t else e

-- we can use filter, because it could be coded as such:
-- filter :: (a -> Bool) -> [a] -> [a]
-- filter f l = concat (map (\x -> ifthenelse (f x) [x] []) l)

-- blocks a list of coproducts into maximal blocks of same type
-- for example
-- [Left 1, Left 2, Right 'a', Right 'b', Left 3, Right 'c', Right 'd'] 
-- will be mapped to
-- [Left [1,2], Right ['a','b'], Left[3], Right['c','d']]
  
block l = 
  let
--f inputs a prefix/suffix pair, and outputs the sametype prefix of b (see below)
--if the last and first elements of a,b disagree on Left/Right type
  f (a,b) = 
    let
--sametype is the list prefix of b with same Left/Right type as first element
    sametype =  
     let
     leftprefix = 
       Left(
         concat (map (cases (\x -> [x]) (\x -> [])) (fprefix isleft b))
         )
     rightprefix =
       Right(
         concat (map (cases (\x -> []) (\x -> [x])) (fprefix (not.isleft) b))
         )
     in do
      {
      x <- hd b;
      return (cases (\_ -> leftprefix) (\_ -> rightprefix) x)
      }
    in
    if (empty a) then 
     return sametype
    else 
     do {
      lasta <- hd (reverse a);
      firstb <- hd b;
      if ((isleft firstb) && ((not.isleft) lasta)) then return sametype
      else if (((not.isleft) firstb) && (isleft lasta)) then return sametype 
      else Left Error
      } 
  in 
  reverse (errmap f (split l))
\end{verbatim}
}

\subsection{Squaring}
\label{sec:haskell-squaring}

{
\small
\begin{verbatim}
-- maps [1,2,3] to [([1,2,3],[]),([1,2],[3]),([1],[2,3]),([],[1,2,3])]
revsplit :: [a] -> [([a], [a])]
revsplit l = map (\x -> (reverse (second x), reverse (first x))) (split (reverse l))

-- for each element of the input list, 
-- output a triple (before the element, the element, after the element)
triples :: [t] -> [([t], (t, [t]))]
triples l = errmap 
 (\p -> do
  { h <- hd (second p);
  t <- tl (second p);
  return (first p, (h,t))
  }
 ) 
 (revsplit l)

-- squaring 
square :: [a] -> [Either a a]
square l = 
    let 
    reformat x =
     concat [map Left (first x), 
     [Right (first (second x))], 
     map Left (second (second x))]
    in
    concat (map reformat (triples l))\end{verbatim}
}

\section{Appendix on $\beta$-reduction}

\subsection{Values}
\label{sec:values}
In this part of the appendix, we analyse the shape of terms in normal form.
Define a \emph{function type} to be a type  where the outermost type constructor is $\to$, i.e.~the type has form $\tau \to \sigma$. Similarly, we talk about \emph{list types} $\tau^*$, \emph{product types} $\tau \times \sigma$, \emph{co-product} types $\tau + \sigma$ and \emph{finite set types}. The following lemma says that normal form terms  with non-function types and no free variables can only be built using their appropriate term constructors. A corollary of the lemma is that a normal form term with an arrow-free type and no free variables can only be built using elements of finite sets, pairing, co-pairing and lists. In particular, the lemma below implies Lemma~\ref{lem:normal-form-is-value}. 
\begin{lemma}\label{lem:values-non-functional}
    Let $M : \tau$ be a term in normal form without free variables. Then:
    \begin{enumerate}
        \item If $\tau$ is a finite set type, then $M$ is an element of $\tau$.
        \item If $\tau$ is a coproduct type, then $M$ is a coprojection of  a normal form term;
        \item If $\tau$ is a product type, then $M$ is a pair of normal form terms.
        \item If $\tau$ is a list type,  then $M$ is a list of normal form terms;
            \end{enumerate}
\end{lemma}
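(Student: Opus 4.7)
The plan is a standard canonical forms argument, proved as a single induction on the size of $M$ that establishes all four items simultaneously. Given a closed normal form $M : \tau$ with $\tau$ not a function type, case-split on the outermost constructor of $M$ as enumerated in Definition~\ref{def:set-of-terms}. A variable is impossible since $M$ is closed; a lambda is impossible since $\tau$ is not a function type; an atomic program is impossible since every atomic program in Figure~\ref{fig:atomic-combinators-1} has an arrow type at its head. The cases when $M$ is an element of a finite set, a pair, a coprojection, or a list match exactly one of the four conclusions by inspection of the typing rules; in the pair/coprojection/list cases the induction hypothesis promotes each immediate subterm to a normal form (subterms of a normal form are themselves normal forms, and closedness is preserved for immediate subterms of these constructors).

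The substantive case is when $M$ is an application. Write $M = N\, P_1 \cdots P_k$ where $N$ is not itself an application and $k \geq 1$; each $P_i$ is closed because no $\lambda$ intervenes between $M$ and $P_i$ in the syntax tree, and each $P_i$ is in normal form because $M$ is. The head $N$ must carry a function type $\sigma_1 \to \cdots \to \sigma_k \to \tau$, so it cannot be a pair, list, coprojection, or element of a finite set; it cannot be a variable since $M$ is closed; and it cannot be a lambda, since otherwise $N\,P_1$ would be a $\beta$-redex. Hence $N$ is one of the atomic programs of Figure~\ref{fig:atomic-combinators-1}.

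A contradiction with normal form is now derived by case analysis on which atomic program $N$ is. In each case, at least one argument on which a reduction rule of Definition~\ref{def:beta-reduction} fires has a non-function type to which the induction hypothesis applies, forcing its shape. If $N = \projterm_i$, then $P_1$ has a product type and is therefore a pair by induction, so $N\,P_1$ is a redex. If $N = \caseterm$, then $P_3$ has coproduct type and is therefore a coprojection, so $N\,P_1\,P_2\,P_3$ is a redex. For $N$ equal to $\mapterm$, $\splitterm$, $\headterm$, $\tailterm$, $\flatterm$, or $\groupterm$, the argument of list type is forced to be a list (possibly empty) by the induction hypothesis, triggering the corresponding reduction rule. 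For $N = \isterm_a$, the single finite-set-typed argument is forced to be an element, again triggering reduction. In every subcase the application is a redex, contradicting that $M$ is in normal form, and the induction closes.

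The main obstacle is bookkeeping rather than a conceptual difficulty. One must verify for each atomic program that the argument triggering the redex is indeed closed and that its type falls within the four categories covered by the inductive statement, so that the induction hypothesis can be applied. Some care is also needed for $\projterm_i$ and $\caseterm$, whose nominal output types $\tau_i$ and $\sigma$ may themselves be function types; in that situation $k$ can exceed the listed number of input arguments, but the head redex still appears after the first few $P_i$, which is what matters.
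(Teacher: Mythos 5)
Your proof is correct and takes essentially the same approach as the paper's: both decompose $M$ along its leftmost application spine, rule out every possible head except an atomic program when the spine is nontrivial, and then use the induction hypothesis to force the shape of the relevant argument so that a redex must fire, contradicting normality. The only differences are organizational (you split on the outermost constructor first and treat the spine only in the application case, and you explicitly flag the closedness of the $P_i$ and the possibility that $\projterm{}$/$\caseterm$ return function types), which the paper handles implicitly.
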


\begin{proof}
    Induction on the size of the term. Suppose that the  type $\tau$  is not a function type. 
Consider the leftmost branch of the syntax tree of $M$, and take the prefix of that path that uses only applications, leading to a decomposition of $M$ as  in the following picture:
\mypic{45}
The number $k$ could be $0$, in the case when the outermost operation in $M$ is not an application. Consider the possibilities for the term $N$, according to the items in Definition~\ref{def:set-of-terms}.
\begin{enumerate}
	\item \emph{Variable.} This case cannot hold,  because $M$ has no free variables.
	\item \emph{Application.} This case cannot hold, by choice of $N$.
    \item \emph{Abstraction.} For $k=0$ this case cannot hold since otherwise  $\tau$ would be a function type. For $k>0$ this case cannot hold since otherwise there would be a redex.
	\item \emph{Element of a finite set.} In this case $k=0$ and item 1 in the  conclusion of the lemma is satisfied. 
	\item \emph{Pair of terms.} In this case $k=0$ and item 2 in the  conclusion of the lemma is satisfied. 
    \item \emph{Co-projection. } In this case $k=0$ and item 3 in the  conclusion of the lemma is satisfied. 
	\item \emph{List of terms.} In this case $k=0$ and item 4 in the  conclusion of the lemma is satisfied. 
	\item \label{it:normal-form-atomic}\emph{Atomic program.} This case cannot happen, which follows from  a case analysis on the atomic programs:
    \begin{itemize}
        \item[$\isterm$]  Since $\isterm$ takes an argument of finite set type, we must have $k \ge 1$ since otherwise $\tau$ would be a function type.  By induction assumption, $N_1$ is an element of the finite set in the input  type for $\isterm$. Therefore there is a redex,   contradicting the assumption on normal form.
        \item[$\projterm i$]  Since $\projterm i$ takes an argument of pair type, we must have $k \ge 1$ since otherwise $\tau$ would be a function type. By induction assumption, $N_1$ is a pair of terms. Therefore there is a redex,   contradicting the assumption on normal form.   
        \item[$\caseterm$] Since $\caseterm$ takes three arguments, we must have $k \ge 3$  because otherwise $\tau$ would be a function type. 
        In particular, the term $N_3$ is defined, and  by induction assumption it is a coprojection. Therefore there is a redex,  contradicting the assumption on normal form.
        \item  The remaining cases of $\mapterm$, $\headterm$, $\tailterm$, $\flatterm$ and $\splitterm$ are dealt with in the same way. 
    \end{itemize}

\end{enumerate}
    
\end{proof}

\subsection{Pictures of the reduction rules}
\label{sec:reduction-pictures}
This appendix contains pictures of the reduction rules from Definition~\ref{def:beta-reduction}.

\mypic{43}
\mypic{41}

\end{document}